\definecolor{myurlcolor}{rgb}{0,0,0.4}
\definecolor{mycitecolor}{rgb}{0,0.5,0}
\definecolor{myrefcolor}{rgb}{0.5,0,0}
\definecolor{tropicalrainforest}{rgb}{0.0, 0.46, 0.37}
\definecolor{ufogreen}{rgb}{0.24, 0.82, 0.44}
\definecolor{yellow-green}{rgb}{0.6, 0.8, 0.2}
\definecolor{new-color}{rgb}{0.6, 0, 0.2}
\definecolor{new-color-2}{rgb}{0.2,0.2,0.8}
\definecolor{yellowgreen}{rgb}{0.8,0.6,0.5}
\newcommand{\beq}[0]{\begin{equation}}
	\newcommand{\eeq}[0]{\end{equation}}
\newcommand{\girth}{\mathbbm{g}}
\newcommand{\ket}[1]{|#1\rangle}
\newcommand{\bra}[1]{\langle#1|}
\newcommand{\repr}[1]{\ket{v_{#1}}}
\newcommand{\non}[0]{\nonumber \\}
\newcommand{\one}{\leavevmode\hbox{\small1\normalsize\kern-.33em1}}
\def\be{\begin{equation}}
	\def\ee{\end{equation}}
\def\ben{\begin{eqnarray}}
	\def\een{\end{eqnarray}}
\def\eea{\end{array}}
\def\bea{\begin{array}}
\newcommand{\bei}{\begin{itemize}}
	\newcommand{\eei}{\end{itemize}}
\newcommand{\zero}{\ket{0}}
\newcommand{\jeden}{\ket{1}}
\newcommand{\dwa}{\ket{2}}
\renewcommand{\emph}[1]{\textbf{#1}}
\theoremstyle{plain}
\newtheorem{thm}{Theorem}
\newtheorem{lem}{Lemma}
\newtheorem{fakt}{Fact}
\newtheorem{obs}{Observation}
\newtheorem{defn}[thm]{Definition}
\theoremstyle{definition}
\theoremstyle{remark}
\begin{document}
	\title{
Progress in the study of the (non)existence of genuinely unextendible product bases}
	\author{Maciej Demianowicz}
	\affiliation{Institute of Physics and Applied Computer Science, Faculty of
		Applied Physics and Mathematics, Gda\'nsk University of Technology,
		Narutowicza 11/12, 80-233 Gda\'nsk, Poland}
		\email{maciej.demianowicz@pg.edu.pl}
	\date{\today}
	
	\begin{abstract}
We investigate the open problem of the existence of genuinely unextendible product bases (GUPBs), that is, multipartite unextendible product bases (UPBs) which remain unextendible even with respect to biproduct vectors across all bipartitions of the parties. To this end, we exploit the well-known connection between UPBs and graph theory through orthogonality graphs and orthogonal representations, together with recent progress in this framework, and employ forbidden induced subgraph characterizations to single out the admissible local orthogonality graphs for GUPBs. Using this approach, we establish that GUPBs of size thirteen in three-qutrit systems—the smallest candidate GUPBs—do not exist. We further provide a partial characterization of graphs relevant to larger bases and systems with ququart subsystems.
	\end{abstract}
	
	\maketitle
	
	\section{Introduction}
	
	Unextendible product bases (UPBs) \cite{upb,big-upb} constitute a very important notion
	in quantum information science, being relevant in a broad array of theoretical and practical areas.   Specifically, one of the main results from the
	theory of entanglement related to UPBs is that a state supported on the
	orthogonal subspace is entangled and has positive partial transpose (PPT), i.e.,
	it is PPT bound entangled \cite{ppt-bound}. UPBs also display the effect of nonlocality without
	entanglement \cite{nwe,rinaldis,cerf,strong-nwe, strong-nwe-2}. Interestingly, UPBs found applications in the domain of Bell nonlocality, where they were exploited in the construction of Bell-type inequalities with no quantum violation \cite{no-quantum,no-quantum-2}.

	Much effort has been devoted to the characterization of UPBs, including establishing bounds on their sizes (see, e.g., \cite{djokovic-qubit,johnston,min-upb-johnston}) and developing explicit constructions (see, e.g., \cite{djokovic,7-qubit,tile-bipartite}). In particular, the construction of UPBs with specific uncompletability \cite{niekompletowalne,sucb} or unextendibility \cite{upb-to-ges,wang,k-ces} properties has attracted considerable attention in the community. 	Despite this progress, a certain class of UPBs has remained elusive---genuinely unextendible product bases (GUPBs), i.e.,  UPBs which are unextendible
	in the strongest possible sense, that is even with biproduct vectors
	\cite{upb-to-ges,MD-PRA}.  In fact, while it has been shown that one can lower-bound the cardinalities of GUPBs in a non-trivial way, i.e., provide bounds which beat the ones stemming only from the theory of bipartite UPBs \cite{MD-PRA,kiribela}, their actual existence, even in simple systems, remains an important open problem.
		
	It was early recognized that UPBs share deep connections to combinatorial
	mathematics and graph theory \cite{big-upb,alon-lovasz,feng}. Specifically, the notions of an orthogonality graph and an orthogonal representation of a graph prove very useful.
	Very recently, several strong general results in this framework were shown by Shi et
	al.  in \cite{kiribela}. Their concise graph-theoretic characterization of UPBs also pushed forward our understanding of GUPBs. One particularly appealing result is
that for certain minimal GUPBs their orthogonality graphs corresponding to single-partite subsystems are regular. This strong property severely narrows the set of eligible orthogonality graphs in these cases. Together with the fact that valid orthogonal representations of these graphs must satisfy certain spanning (or, saturation) properties, this paved way to investigate the existence of GUPBs in an organized fashion.
Indeed, Shi et al. \cite{kiribela} put forward a concrete route to search for minimal, size thirteen, GUPBs in tripartite systems with qutrit subsystems---the smallest systems in which such bases may in principle exist.
The proposition was to consider all thirteen-vertex four-regular graphs and check whether they admit orthogonal representations complying with certain requirements on their spanning properties. The realization of this idea in \cite{kiribela}, relying on an optimization procedure in the search for orthogonal representations applied to all graphs from the relevant class, however, proved inconclusive, leaving the GUPB existence problem unsolved. 

In this paper, we introduce the method of forbidden induced subgraph characterization into the search for GUPBs. The idea is to identify a small set of graphs, possibly with only a few vertices, that lack orthogonal representations in a given dimension. The relevant graph family is then pruned by discarding all graphs that contain any of these as induced subgraphs. Only the remaining graphs must then be checked for the existence of an orthogonal representation. With this aim, it is often possible to  exploit further  the properties of some smaller graphs and extract efficiently relevant  features of representations  and decide on their validity for the considered task. We apply this approach to the case of minimal GUPBs in three-qutrit systems and find that only two candidate graphs admit orthogonal representations, but these turn out not to be valid, leading us to conclude that the sought-for GUPBs do not exist. While we were able to obtain this result without investing much computational effort, in general, for larger systems or basis sizes, the approach will still require it. We discuss these limitations and the applicability of the method to other cases, and provide some initial results in this regard.

	The paper is organized as follows. In Sec. \ref{intro} we introduce the relevant notions and notation. Sec. \ref{main-section} focuses on the analysis of minimal three-qutrit GUPBs and forms the core of the paper. In Secs. \ref{kandydaci}–\ref{polaczone} we analyze in detail all candidate graphs and argue that only two of them admit faithful orthogonal representations, which, nevertheless, are shown not to be valid for GUPBs. The main result of the paper is presented in Sec. \ref{konkluzja}. In Sec. \ref{ogolniej} we discuss how the proposed approach can be applied to other cases, in particular to systems with qutrit or ququart subsystems. Finally, we conclude in Sec. \ref{konkluzje}.
	
	\section{Preliminaries} \label{intro}
		
	Let $\mathcal{H}_{\bold{A}}=\mathcal{H}_{A_1}\otimes \cdots \otimes
	\mathcal{H}_{A_N}$ with $\mathcal{H}_{A_i}=\mathbb{C}^{d}$, where $\bold{A}=A_1
	A_2 \dots A_N$ label subsystems. A state $\ket{\psi}_{\bf{A}} \in
	\mathcal{H}_{\bold{A}}$ is said to be product if it can be written as 
	%
		$		\ket{\psi}_{\bold{A}}= \ket{\varphi_1}_{A_1}\otimes \ket{\varphi_2}_{A_2}
		\otimes \cdots \otimes \ket{\varphi_N}_{A_N}.$
	%
	If a state is not product it is called entangled. Product states belong to a
	wider class of biproduct vectors which can be written as
	%
		$		\ket{\psi}_{\bold{A}}=\ket{\varphi}_S \otimes \ket{\bar{\varphi}}_{\bar{S}}$
	%
	for some bipartition  $S|\bar{S}$ of the subsystems ($S$ and $\bar{S}$ are
	disjoint sets such that $S\cup \bar{S} = \bold{A}$). A state which is not
	biproduct is called genuinely multipartite entangled (GME). 
	Subspaces of multipartite Hilbert spaces whose all vectors are entangled are
	called completely entangled (CES) \cite{partha,bhat,walgate,cubitt,johnston-x,johnston-xx}; a special class of such
	subspaces is comprised of those composed only of GME states --- one calls them
	genuinely entangled subspaces (GESs)
	\cite{upb-to-ges,halder,approach,ent-of-ges,makuta,DemianowiczQuantum}. In what
	follows we will omit normalization factors in states.

	\subsection{(Genuinely) unextendible product bases}
	
	We now briefly recall the notions of unextendible product basis  \cite{upb,big-upb}
	and genuinely unextendible product basis \cite{upb-to-ges,MD-PRA}.
	
	\begin{defn}\label{upb-def}
		An unextendible product basis (UPB) is a set of product, mutually orthogonal
		vectors spanning  a proper subspace of a given multipartite Hilbert space with
		the property that there does not exist a product vector orthogonal to all the
		elements of the set.
	\end{defn}
	An exemplary UPB in $\mathbb{C}^3\otimes \mathbb{C}^3\otimes \mathbb{C}^3$ is
	given by the following set of nineteen vectors \cite{gupb-kutrity}: 
	\begin{align}
		&	\ket{\varphi_i}_{A_1}\ket{0}_{A_2}\ket{\psi_j}_{A_3},
		\ket{\varphi_i}_{A_1}\ket{\psi_j}_{A_2}\ket{2}_{A_3},
		\ket{2}_{A_1}\ket{\varphi_i}_{A_2}\ket{\psi_j}_{A_3},\nonumber\\
		&\ket{\psi_j}_{A_1}\ket{2}_{A_2}\ket{\varphi_i}_{A_3}, 
		\ket{\psi_j}_{A_1}\ket{\varphi_i}_{A_2}\ket{0}_{A_3},
		\ket{0}_{A_1}\ket{\psi_j}_{A_2}\ket{\varphi_i}_{A_3}, \ket{S}_{A_1A_2A_3} ,
	\end{align}
$(i,j)\in \{(0,1),(1,0),(1,1)$\}, where $\ket{\varphi_p}=\ket{1}+(-1)^p\ket{2}$,
	$\ket{\psi_p}=\ket{0}+(-1)^p\ket{1}$, and
	$\ket{S}=(\ket{0}+\ket{1}+\ket{2})^{\otimes 3}$ (so-called stopper state). 
	
	It follows from Definition \ref{upb-def} that the orthocomplement of a subspace
	spanned by a UPB is a CES. One should bear in mind that the converse does not hold in general.

	Our scope is on a particular class of UPBs.
	\begin{defn}
		A genuinely unextendible product basis (GUPB) is a set of product, mutually
		orthogonal vectors spanning  a proper subspace of a given multipartite Hilbert
		space with the property that there does not exist a \textbf{biproduct} vector orthogonal
		to all the elements of the set. In other words, a GUPB is a UPB for all
		bipartitions of the subsystems.
	\end{defn}
	
	It is evident that a GUPB defines a GES in the orthogonal subspace. An
	immediate realization is that GUPBs cannot exist in systems with a qubit
	subsystem since bipartite UPBs in systems $\mathbb{C}^2 \otimes \mathbb{C}^n$ do
	not exist \cite{big-upb}. Therefore, the smallest system one needs to consider is with
	$N=3$ and $d=3$.
	
	While it is still an open question whether GUPBs exist at all, 
	it has been recently shown that their sizes $\mathfrak{G}$ must be bounded from
	below as follows \cite{MD-PRA,kiribela}: 
	%
	%
	
		%
	\begin{align}\displaystyle \label{kiribela-bound}
		\mathfrak{G}(d,N) \ge  \frac{N d^{N-1}-1}{N-1}.
	\end{align}

	In some cases this bound can be slightly refined \cite{kiribela}. Clearly, the RHS of Eq. \eqref{kiribela-bound} could involve the ceiling function, however, GUPBs saturating the lower bound in the current form share a certain important graph-theoretic feature (see Sec. \ref{graph-character}), which will serve as a basis for our considerations.
	
	\subsection{Graphs}\label{graphs}

	We now move to a terse review of the relevant graph theory notions (see, e.g.,
	\cite{bondy-graph}).
	
	An undirected simple graph $G=(V,E)$ consists of a nonempty finite set $V$,
	whose elements are called vertices (or nodes), and a finite set $E$ of different
	unordered pairs of elements of $V$, whose elements are called edges. Any graph
	in the current paper is an undirected and simple one with nonempty $E$. We say
	that two vertices $v_1$ and $v_2$ of a graph are adjacent if there is an edge
	connecting them, i.e., $\{v_1, v_2\} \in E$. In such case, $v_1$ is said to be a
	neighbor of $v_2$ and vice versa. 
	The neighborhood $\mathcal{N}(v)$ of a given vertex $v$ in a graph is the set
	of all its neighbors. The degree $\deg(v)$ of a vertex $v$ in a graph is the
	size of its neighborhood, i.e., $\deg(v)=|\mathcal{N}(v)|$.
	A graph whose all vertices have the same degree is called regular. In
	particular, if this degree is $r$ then the graph is called $r$-regular.  An
	$(n-1)$-regular graph with $n$ vertices is said to be complete. In such graphs,
	all vertices are adjacent. Regular graphs will be of our main interest.
	
	An adjacency matrix of graph $G$, $\mathcal{A}(G)$, is the matrix whose
	$(i,j)$-th entry equals to $1$  if $\{v_i,v_j\} \in E$ and $0$ if $\{v_i,v_j\}
	\not\in E$. All the diagonal elements of $\mathcal{A}$ are equal to $0$.

	A clique of a graph is a subset of its vertices with the property that any two
	vertices are adjacent; oftentimes a clique is considered to be a graph itself
	and this is the approach we will employ. A $k$-clique, denoted henceforward by
	$C_k$, is a clique with $k$ vertices. The clique number $\omega (G)$ of a graph
	$G$ is given by the number of vertices in a maximum clique of the graph.

	Two vertices $v_1$ and $v_2$ are said to be connected if there is a path
	\footnote{A walk in graph $G=(V,E)$ is given by a (finite) sequence of edges
		$\{v_0,v_1\},\{v_1,v_2\},\dots, \{v_{m-1},v_m\}$ in which any two consecutive
		edges are either adjacent or the same. A walk gives a sequence of vertices
		$v_0v_1 \dots v_{m-1}v_m$. A walk with $v_0=v_m$ is said to be closed;
		otherwise, it is open. The number of edges in a walk is called its length. A
		walk with all edges distinct is called a trail. Further, if vertices $v_1,\dots,
		v_{m-1}$ are different then the trail is called a (simple) path.} from $v_1$ to $v_2$. A graph is 
	connected if any two of its vertices are connected. Otherwise, it is called
	disconnected.
	
	 A closed path
	with at least one edge is called a cycle. The length of a shortest cycle in a graph is called its girth, denoted $\girth$.
	
	A subgraph of $G$ is any graph, whose all vertices belong to $V$ and all edges
	belong to $E$. Given two graphs, $G_1=(V_1,E_1)$ and $G_2=(V_2,E_2)$, their
	union is the graph $G_1 \cup G_2 =(V_1 \cup V_2, E_1 \cup E_2)$. A decomposition
	of a graph $G=(V,E)$ is a set of its subgraphs $\{G_i=(V,E_i)\}_{i=1}^m$ such
	that $\bigcup _{i=1}^m G_i =G$. 
	The disjoint union of graphs will be denoted as $G_1+G_2$.
	
	Let $W \subseteq V$ and $E_W \subseteq E$ be the subset of all edges in $G$
	with both endpoints in  $W$. Then, $G[W]=(W,E_W)$ is called an induced subgraph
	of $G$.
	In an induced subgraph isomorphism problem, one is given two graphs 
	$G$ and $H$ and must determine whether there exists an induced subgraph of 
	$G$ that is isomorphic to $H$. In full generality, it is an NP-complete
	problem.

	\subsubsection{Orthogonal representations and orthogonality graphs}
	
	An orthogonal representation (OR) in dimension $d$ of a graph $G=(V,E)$ assigns
	vectors $\ket{v_i}\in \mathbb{C}^d$ to all vertices $v_i \in V$
	in such a manner that $\bra{v_i} v_j \rangle =0$ for adjacent vertices. The
	representation is said to be faithful if the vectors are orthogonal only for
	adjacent vertices. In fact, we will be exclusively interested in the existence
	of faithful ORs (FORs) for certain graphs. We will use 
	FOR(d) to denote a FOR in dimension $d$ of a graph. 
	An important problem in the field is finding   the minimal dimension
	$d_{\mathrm{min}}$ for which an orthogonal representation for a given graph
	exists. While it is in general very hard to find $d_{\mathrm{min}}$ \cite{zero-forcing,msr},  
	a simple useful lower bound is
	%
		$	d_{\mathrm{min}} \ge \omega(G).$
	%
	%

	Somewhat conversely, given a set of vectors $\{ \ket{\varphi_i} \}_{i=1}^{k}$
	from $\mathbb{C}^d$, their orthogonality graph (OG) is the graph $G=(V,E)$ with
	$V= \{v_1,v_2,\dots, v_{k}\}$ and $E=\{ \{v_i,v_j\}: \bra{\varphi_i}
	\varphi_j\rangle=0  \}$.

Notice that we consider here the notion of an orthogonal representation commonly used in quantum information theory, rather than the one introduced by Lov\'asz \cite{lovasz} and widely spread throughout the mathematical literature, whereby the orthogonality of vectors is imposed for  non-adjacent vertices (usually as an iff condition). The relation between these two approaches is evident: given graph $G$, the Lov\'asz-type representation of the complement of $G$ is the one we are interested in the present paper.

\subsubsection{Forbidden induced subgraph characterization}\label{fis-sekcja}

	In a forbidden graph characterization (see, e.g., \cite{forbidden1,forbidden2,forbidden3}), one specifies completely a family of
graphs through a set of subgraphs (or, more generally, substructures), called an
obstruction set, such that none of the graphs from the family
admits a subgraph  (or a relevant substructure) from the obstruction set. Our main tool will be a variant of this approach, namely a forbidden induced subgraph characterization of graphs admitting a FOR. It relies on the following simple, yet powerful observation.
\begin{obs}\label{fis}
Let $G$ be a graph without a FOR(d). Then any graph $H$ containing $G$ as an induced subgraph does not admit a FOR(d) either.	
\end{obs}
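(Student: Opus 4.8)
The plan is to argue directly from the definitions of an induced subgraph and of a faithful orthogonal representation, by contraposition. Suppose, for contradiction, that $H$ contains $G$ as an induced subgraph and that $H$ \emph{does} admit a FOR($d$). The goal is to manufacture a FOR($d$) of $G$ out of this, contradicting the hypothesis that $G$ has none.

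First I would fix an embedding witnessing that $G=(V_G,E_G)$ is an induced subgraph of $H=(V_H,E_H)$; that is, identify $V_G$ with a subset $W\subseteq V_H$ such that $E_G$ is exactly the set of edges of $H$ with both endpoints in $W$. Next, given a faithful orthogonal representation of $H$ assigning vectors $\ket{v}\in\mathbb{C}^d$ to vertices $v\in V_H$, I would simply restrict this assignment to $W$, obtaining vectors $\{\ket{v}\}_{v\in W}$ in the same space $\mathbb{C}^d$. The key step is to verify that this restricted assignment is a \emph{faithful} OR of $G$: for any two vertices $u,v\in W$, they are adjacent in $G$ iff $\{u,v\}\in E_H$ (because $W$ induces $G$), which by faithfulness of the representation of $H$ holds iff $\braket{u}{v}=0$. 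Hence $\braket{u}{v}=0$ precisely for adjacent vertices of $G$, so we have a FOR($d$) of $G$ — the desired contradiction.

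There is essentially no analytic obstacle here; the statement is a soft consequence of the fact that the induced-subgraph relation preserves \emph{both} the presence and the absence of edges, which is exactly what a faithful representation encodes (orthogonality for adjacent, non-orthogonality for non-adjacent vertices). The one point that warrants care — and is the closest thing to a ``hard part'' — is precisely why \emph{faithfulness} is needed and why the conclusion would fail for ordinary (non-faithful) orthogonal representations: restriction would still yield a valid OR of $G$, but a non-adjacent pair in $G$ might have been assigned orthogonal vectors in $H$, so faithfulness is not automatically inherited in the non-induced setting. I would make explicit that using the \emph{induced} subgraph (rather than an arbitrary subgraph) together with faithfulness is what makes the restriction argument go through, so that the observation is stated in its sharpest usable form.

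Finally, I would remark that the contrapositive formulation is the one actually used downstream: to certify that a large graph $H$ has no FOR($d$), it suffices to exhibit a single induced subgraph $G$ of $H$ with no FOR($d$); and an entire family of graphs can be pruned at once by forbidding a small obstruction set of such $G$'s. This is the mechanism underpinning the forbidden-induced-subgraph strategy described in Section~\ref{fis-sekcja}, so the proof should close by emphasizing that the contribution of the observation is organizational rather than technical.
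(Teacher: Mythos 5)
Your restriction argument is correct and is exactly the reasoning the paper leaves implicit — the observation is stated there without proof as an immediate consequence of the fact that an induced subgraph preserves both adjacency and non-adjacency, which is what faithfulness of the representation encodes. Your added remark on why \emph{induced} (rather than arbitrary) subgraphs and faithfulness are both needed is a useful clarification but does not change the substance.
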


For a given dimension, our aim will be to identify a set of small graphs without FORs and then use Observation \ref{fis}.

	\subsubsection{Orthogonal representation of the square graph and the diamond graph}

We will repeatedly use the following simple result characterizing (faithful)
orthogonal representations of certain elementary graphs.

\begin{lem}\label{4cykl}
	(i) The square graph  has at least one repeating vector in
	a FOR(3). (ii) The diamond graph has exactly one repeating vector in a FOR(3). 
\end{lem}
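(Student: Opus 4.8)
The plan is to argue everything by a dimension count in $\mathbb{C}^3$, using the single fact that the orthogonal complement of two linearly independent vectors in a three-dimensional space is one-dimensional, together with the trivial observation that in a faithful representation two vertices can carry proportional (``repeating'') vectors only if they are non-adjacent, since an edge forces orthogonality and a nonzero vector is never orthogonal to a scalar multiple of itself. So the only candidates for a repetition are the non-edges of the graph, and the task reduces to deciding, for each non-edge, whether its endpoints are \emph{forced} to be proportional.

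For (i), label the square graph $v_1v_2v_3v_4$ in cyclic order, so that its only non-adjacent pairs are $\{v_1,v_3\}$ and $\{v_2,v_4\}$. Suppose for contradiction that in a given FOR(3) neither pair repeats; then $\ket{v_1},\ket{v_3}$ are linearly independent and so are $\ket{v_2},\ket{v_4}$. Since both $v_2$ and $v_4$ are adjacent to both $v_1$ and $v_3$, the vectors $\ket{v_2},\ket{v_4}$ both lie in $\mathrm{span}\{\ket{v_1},\ket{v_3}\}^\perp$, which is one-dimensional, contradicting their independence. Hence at least one of the two non-adjacent pairs consists of proportional vectors. (The statement is not vacuous: a FOR(3) of the square exists, e.g. $\ket{v_1}=\ket{v_3}=\ket{0}$, $\ket{v_2}=\ket{1}$, $\ket{v_4}=\ket{1}+\ket{2}$, and there only one of the two pairs repeats, showing that ``at least one'' cannot be strengthened to ``both''.)

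For (ii), realize the diamond as the square $v_1v_2v_3v_4$ with the extra edge $\{v_2,v_4\}$, so that $\{v_1,v_3\}$ is its unique non-adjacent pair and hence the only pair that could repeat; all other five pairs are edges, so they carry pairwise orthogonal and in particular distinct vectors. It remains to show this pair \emph{must} repeat. Now $v_2$ and $v_4$ are adjacent, so $\ket{v_2}\perp\ket{v_4}$; being nonzero and orthogonal they are linearly independent and span a two-dimensional subspace. Since $v_1$ (resp.\ $v_3$) is adjacent to both $v_2$ and $v_4$, the vector $\ket{v_1}$ (resp.\ $\ket{v_3}$) lies in the one-dimensional space $\mathrm{span}\{\ket{v_2},\ket{v_4}\}^\perp$, so $\ket{v_1}$ and $\ket{v_3}$ are proportional; this is consistent with faithfulness since $\braket{v_1}{v_3}\neq 0$ then holds automatically. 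Thus $\{v_1,v_3\}$ is the unique repeating pair (again non-vacuously, e.g. $\ket{v_2}=\ket{0}$, $\ket{v_4}=\ket{1}$, $\ket{v_1}=\ket{v_3}=\ket{2}$ is a FOR(3)), which is exactly (ii).

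There is no substantial obstacle here; the argument is elementary. The only points requiring a little care are the bookkeeping of which pairs are adjacent in each graph and keeping straight the two opposite implications built into ``faithful FOR'' — adjacency forces orthogonality, non-adjacency forbids it — since it is precisely their interplay that localizes the forced repetition at the non-edge(s).
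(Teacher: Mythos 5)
Your proof is correct and rests on the same underlying fact as the paper's: in $\mathbb{C}^3$ two vertices that are both orthogonal to a pair of linearly independent vectors are confined to a one-dimensional subspace, hence proportional. The paper carries this out in explicit coordinates for (i) (deriving $\alpha\beta=0$) and simply asserts uniqueness for (ii), whereas you phrase the identical argument coordinate-freely and spell out the forced repetition and its uniqueness for the diamond, so this is essentially the paper's proof in a slightly cleaner form.
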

\begin{proof} See Fig. \ref{4cycle-vecs} for the denotations.
	
	(i)	Let
	%
	$	\ket{w_1}=\ket{0}$ and $\ket{w_2}=\ket{1}.$	
%
It must then hold
%
	$\ket{w_3}= \ket{0}+ \alpha \ket{2}$ and $\ket{w_4}=\ket{1}+\beta \ket{2}$,
%
and these vectors must be orthogonal, implying that $\alpha \beta = 0$.

(ii) It is obvious that the orthogonal representation with 	$\ket{w_1}=\ket{0}$, $\ket{w_2}=\ket{1}$, $\ket{w_2}=\ket{w_4}=\dwa$, which is faithful, is unique up to a unitary in $d=3$.
\begin{figure}[h]
	\centering		
	\begin{tikzpicture}[scale=0.5]
		\begin{scope}
			\node[circle, draw, fill=yellow, minimum size=8pt,inner sep=2pt] (v1) at
			({45+0*90}:4cm) {$\ket{w_1}=\ket{0}$};
			\node[circle, draw, fill=green, minimum size=8pt,inner sep=2pt] (v2) at
			({45+1*90}:4cm) {$\ket{w_2}=\ket{1}$};
			\node[circle, draw, fill=yellow, minimum size=8pt,inner sep=-2pt] (v3) at
			({45+2*90}:4cm) {$\begin{array}{c}
					\ket{w_3}\!\!=\!\!\ket{0}\!\!+\!\alpha \ket{2} \\ (\alpha \beta\! =\! 0) \\
				\end{array}$};
			\node[circle, draw, fill=green, minimum size=8pt,inner sep=-2pt] (v4) at
			({45+3*90}:4cm) {$\begin{array}{c}
					\ket{w_4}\!=\!\!\ket{1}\!\!+\!\!\beta \ket{2} \\ (\alpha\beta\! =\! 0) 
				\end{array}$};
			%
			\draw (v1) -- (v2) -- (v3) -- (v4) -- (v1);
		\end{scope}
		\begin{scope}[xshift=14cm]
			\node[circle, draw, fill=white, minimum size=8pt,inner sep=2pt] (v1) at
			({45+0*90}:4cm) {$\ket{w_1}=\ket{0}$};
			\node[circle, draw, fill=brown, minimum size=8pt,inner sep=2pt] (v2) at
			({45+1*90}:4cm) {$\ket{w_2}=\ket{2}$};
			\node[circle, draw, fill=white, minimum size=8pt,inner sep=2pt] (v3) at
			({45+2*90}:4cm) {
				$\ket{w_3}= \ket{1}$};
			\node[circle, draw, fill=brown, minimum size=8pt,inner sep=2pt] (v4) at
			({45+3*90}:4cm) {
				$\ket{w_4}=\ket{2}$};
			
			\draw (v1) -- (v2) -- (v3) -- (v4) -- (v1);
			\draw (v1) -- (v3);
		\end{scope}
	\end{tikzpicture}
	\caption{(left) There is necessarily at least on repeating vector in a FOR(3) of the
		square graph (Lemma \ref{4cykl}) -- for $w_2$ and $w_4$ (green) or $w_1$ and
		$w_3$ (yellow), or both. (right) There is necessarily exactly one pair of equal vectors, for $w_2$ and $w_4$ (brown), in a FOR(3) of the
		diamond graph.
	}\label{4cycle-vecs}
\end{figure}
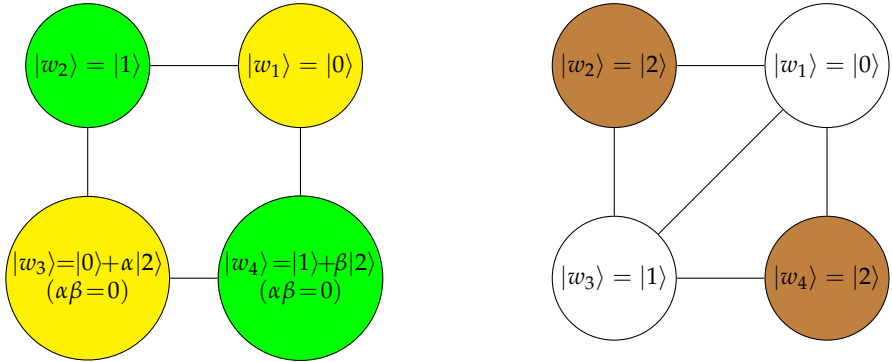
\end{proof}

Notice that the square graph has in fact a pair of identical vectors in any OR
in $d=3$.
Also, it is easy to observe that the necessary presence of repeated vectors  does not hold for $d\ge 4$; we
could, for example, set  $\ket{w_3}=\ket{0}+\ket{2}+\ket{3}$ and
$\ket{w_4}=\ket{1}+\ket{2}-\ket{3}$. Similarly for the diamond graph, where we could have $\ket{w_2}=\dwa$ and $\ket{w_4}=\dwa + \ket{3}$.

We will later use the square graph and the diamond graph to construct small graphs without FORs.

\subsection{Graph-theoretic characterization of GUPBs} \label{graph-character}

Let $\{\ket{\varphi_1^{(j)}}_{A_1}\otimes \ket{\varphi_2^{(j)}}_{A_2} \otimes
\cdots \otimes \ket{\varphi_N^{(j)}}_{A_N}\}_{j=1}^k$,
$\ket{\varphi_i^{(j)}}_{A_i} \in \mathbb{C}^d$, be a UPB. The number $k$ is called the size or the cardinality of a UPB.
One defines local orthogonality graphs (LOGs), $G_i$, $i=1,2,\dots,N$, for this
set as follows: $G_i=(V,E_i)$, where $V=\{v_1,v_2,\dots,v_k\}$ and $E_i=\{
\{v_m,v_n\}: \bra{\varphi_i^{(m)}}\varphi_i^{(n)} \rangle _{A_i}  =0\}$. LOGs
represent orthogonality of vectors in a UPB as seen by individual subsystems.
Clearly, the union of LOGs $G_i$, which is an OG for the UPB, is a complete
graph with $k$ vertices, $\bigcup_{i=1}^N G_i=C_k$.

There is a simple necessary condition regarding spanning (saturation) properties of local vectors of a GUPB \cite{MD-PRA,kiribela}.
\begin{fakt} \label{general-fakt}
	A $k$-element $N$-qudit GUPB must share the following property: for any subsystem $A_m$, any $(k-d^{N-1}+1)$-tuple of local vectors
	$\ket{\varphi_m^{(j)}}_{A_m}$ span the full $d$-dimensional subspace.
\end{fakt}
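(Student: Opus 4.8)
The plan is to argue by contraposition, combining a dimension count on two complementary blocks of local vectors with the contrapositive of the GUPB definition. Fix a subsystem $A_m$, write $\bar A_m$ for the group of the remaining $N-1$ parties, and note that across the bipartition $A_m|\bar A_m$ the UPB lives in $\mathbb{C}^d\otimes\mathbb{C}^{d^{N-1}}$. Suppose the asserted property fails: there is an index set $J\subseteq\{1,\dots,k\}$ with $|J|=k-d^{N-1}+1$ such that the local vectors $\{\ket{\varphi_m^{(j)}}_{A_m}\}_{j\in J}$ do not span $\mathbb{C}^d$, so there exists a nonzero $\ket{e}_{A_m}\in\mathbb{C}^d$ orthogonal to all of them.

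Next I would turn to the complementary index set $\bar J=\{1,\dots,k\}\setminus J$, of size $k-|J|=d^{N-1}-1$. For $j\in\bar J$, the $\bar A_m$-parts $\ket{\bar\varphi^{(j)}}_{\bar A_m}:=\bigotimes_{i\neq m}\ket{\varphi_i^{(j)}}_{A_i}$ form a family of $d^{N-1}-1$ vectors in the $d^{N-1}$-dimensional space $\mathbb{C}^{d^{N-1}}$; being one short of a spanning set, they admit a nonzero orthogonal vector $\ket{f}_{\bar A_m}$. The choice $|J|=k-d^{N-1}+1$ is exactly what forces $|\bar J|=d^{N-1}-1$, and for an actual GUPB the bound \eqref{kiribela-bound} guarantees $k>d^{N-1}$, so $J$ is a genuine, nonempty constraint.

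The finish is to assemble the biproduct vector $\ket{e}_{A_m}\otimes\ket{f}_{\bar A_m}$, which is nonzero and biproduct across the cut $A_m|\bar A_m$. It is orthogonal to every UPB element: for $j\in J$ the factor $\ket{e}_{A_m}$ annihilates the $A_m$-component, and for $j\in\bar J$ the factor $\ket{f}_{\bar A_m}$ annihilates the $\bar A_m$-component. Thus the set would be extendible by a biproduct vector, contradicting the GUPB property; hence the spanning condition holds for every subsystem and every $(k-d^{N-1}+1)$-tuple.

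I do not expect a real obstacle here --- the argument is a pigeonhole/dimension count glued to the definition. The only place deserving care is the bookkeeping of the two cardinalities, so that the complementary block has precisely $d^{N-1}-1$ vectors, one fewer than needed to span $\mathbb{C}^{d^{N-1}}$. Equivalently, one can view the statement as saying that, read across each bipartition $A_m|\bar A_m$, a GUPB is in particular a bipartite UPB in $\mathbb{C}^d\otimes\mathbb{C}^{d^{N-1}}$, and the claim is the familiar constraint forbidding too many of the $\mathbb{C}^d$-side vectors from lying in a common proper subspace.
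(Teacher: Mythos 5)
Your argument is correct and is exactly the standard observation the paper invokes (citing Bennett et al.): a deficient $(k-d^{N-1}+1)$-tuple on $A_m$ together with the complementary $d^{N-1}-1$ vectors on $\bar A_m$ yields a biproduct vector $\ket{e}_{A_m}\otimes\ket{f}_{\bar A_m}$ orthogonal to the whole set, contradicting unextendibility across the cut $A_m|\bar A_m$. The paper gives no separate proof, so there is nothing to add; your bookkeeping of the two cardinalities is right.
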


A similar condition may also be formulated for the composed subsystems (containing up to $N-1$ subsystems). All of them follow directly from the observation already made by Bennett et al. \cite{upb}.

Further, we have the following result due to Shi et al. \cite{kiribela}. 

\begin{fakt} \label{orto-regular}
	LOGs for GUPBs saturating the  lower bound in Eq. \eqref{kiribela-bound} are all
	$\left(\mathfrak{G}-d^{N-1}\right)$-regular.
\end{fakt}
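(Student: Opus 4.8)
The plan is to turn the spanning property of Fact~\ref{general-fakt} into a uniform upper bound on the degree of every vertex in every LOG, and then to show that saturation of \eqref{kiribela-bound} forces this bound to be attained, via the decomposition identity $\bigcup_{m=1}^{N}G_m=C_k$. Throughout, write $k=\mathfrak{G}=(Nd^{N-1}-1)/(N-1)$ for the size of the GUPB.

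First I would fix a subsystem $A_m$ and a vertex $v_a$ and bound $\deg_{G_m}(v_a)$ from above. Every neighbour of $v_a$ in $G_m$ carries a local vector $\ket{\varphi_m^{(b)}}_{A_m}$ orthogonal to $\ket{\varphi_m^{(a)}}_{A_m}$, hence contained in the $(d-1)$-dimensional subspace $\ket{\varphi_m^{(a)}}^{\perp}$, and such a collection cannot span $\mathbb{C}^d$. Since Fact~\ref{general-fakt} says that any $(k-d^{N-1}+1)$-tuple of local $A_m$-vectors does span $\mathbb{C}^d$, the set of neighbours of $v_a$ in $G_m$ has at most $k-d^{N-1}$ elements; that is, $\deg_{G_m}(v_a)\le k-d^{N-1}$ for all $a$ and all $m$.

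Next I would use that distinct UPB elements are globally orthogonal, hence orthogonal on at least one subsystem, so every edge of $C_k$ lies in some $E_m$. Thus for each $v_a$ the full neighbourhood $\mathcal{N}_{C_k}(v_a)=V\setminus\{v_a\}$, of size $k-1$, equals $\bigcup_{m=1}^{N}\mathcal{N}_{G_m}(v_a)$, and subadditivity of cardinality together with the previous step gives $k-1\le\sum_{m=1}^{N}\deg_{G_m}(v_a)\le N(k-d^{N-1})$. A short computation shows that $N(k-d^{N-1})=k-1$ exactly when $k$ equals the right-hand side of \eqref{kiribela-bound} in its stated (non-ceiling) form, so under the saturation hypothesis both inequalities collapse to equalities. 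In particular $\sum_m\deg_{G_m}(v_a)=N(k-d^{N-1})$ with every summand at most $k-d^{N-1}$, which forces $\deg_{G_m}(v_a)=k-d^{N-1}$ for each $m$; since $v_a$ was arbitrary, every $G_m$ is $(k-d^{N-1})$-regular, i.e.\ $(\mathfrak{G}-d^{N-1})$-regular.

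There is no genuinely hard step once Fact~\ref{general-fakt} is available; the only point demanding attention is the arithmetic identity $N(k-d^{N-1})=k-1$ at the unrounded bound, which is precisely where the saturation hypothesis enters and explains why a ceiling in \eqref{kiribela-bound} would invalidate the conclusion. As a by-product, equality in the subadditivity step shows that the sets $\mathcal{N}_{G_m}(v_a)$ partition $V\setminus\{v_a\}$, i.e.\ the LOGs are pairwise edge-disjoint.
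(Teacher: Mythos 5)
Your argument is correct. The paper does not prove this Fact at all—it imports it from Shi et al.\ \cite{kiribela}—so there is no internal proof to compare against, but your derivation is the natural self-contained one: Fact~\ref{general-fakt} gives $\deg_{G_m}(v_a)\le \mathfrak{G}-d^{N-1}$ (a tuple of $\mathfrak{G}-d^{N-1}+1$ neighbours would lie in the hyperplane $\ket{\varphi_m^{(a)}}^{\perp}$ and could not span $\mathbb{C}^d$), the identity $\bigcup_m G_m=C_{\mathfrak{G}}$ gives $\mathfrak{G}-1\le\sum_m\deg_{G_m}(v_a)$, and the arithmetic check $N(\mathfrak{G}-d^{N-1})=\mathfrak{G}-1$ at the unrounded saturation value forces equality in every summand. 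Your by-product—that equality in subadditivity makes the neighbourhoods disjoint, i.e.\ the LOGs pairwise edge-disjoint—is exactly the paper's remark that the minimal-GUPB LOGs "have non overlapping edge sets as can be realized through a direct count," and your observation about why a ceiling in Eq.~\eqref{kiribela-bound} would break the identity matches the paper's caveat following that equation.
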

This powerful result  places, in a sense, GUPBs on equal footing with UPBs, as
a similar feature holds \cite{kiribela} for minimal UPBs of size $N(d-1)+1$
\cite{upb}. Furthermore, LOGs for those minimal GUPBs have non overlapping edge sets as can be realized through a direct count.
Notably, a general result about the degrees of vertices in UPBs was
also demonstrated in \cite{kiribela}.

\section{Minimal three-qutrit GUPBs}\label{main-section}

We now move the main part of the present paper and  concentrate on the specific case of  three-qutrit ($N=3$
and $d=3$) GUPBs. According to Eq. \eqref{kiribela-bound}, their minimal
permissible cardinality is (notice that this is also the smallest theoretically 
possible size of a GUPB in any setup)
\begin{align}
	\mathfrak{G}_{\min}(3,3)=13.
\end{align}

By Fact \ref{orto-regular}, LOGs for a minimal GUPB, which are now $13$-vertex graphs, must be $4$-regular. By Fact \ref{general-fakt}, for any subsystem $A_m$, any $5$-tuple of local vectors
$\ket{\varphi_m^{(j)}}_{A_m}$ in their FORs(3)  span the full three-dimensional space.
If a set of product vectors fulfills these conditions and in addition for any pair of subsystems $A_mA_n$, any $9$-tuple of  vectors $\ket{\varphi_m^{(j)}}_{A_m}\otimes \ket{\varphi_n^{(j)}}_{A_n}$ span the full
$9$-dimensional subspace then this set is a GUPB.  This motivated Shi et al \cite{kiribela} to propose a concrete route for the search of minimal three-qutrit GUPBs . 
First, the complete graph $C_{13}$  --- representing an OG for
a GUPB --- needs to be decomposed into three $4$-regular graphs, which play the
role of LOGs. Second, one needs to find FORs for these LOGs sharing the single and two-partite spanning 
properties discussed above. 

On the other hand, to disprove  the existence of the minimal GUPBs 
it suffices to show that (a)  certain candidate graphs do not have a FOR(3) and/or (b)
FORs(3) for the remaining graphs do not lead to the fulfillment of the  condition
of Fact \ref{general-fakt}. We will follow this path.
Observe that the ORs  must necessarily be
faithful as otherwise they would lead back to OGs which are not $4$-regular
anymore, contradicting Fact \ref{orto-regular}.

\subsection{Candidate LOGs classification}\label{kandydaci}

We need to analyze all non-isomorphic $4$-regular graphs with $13$ vertices.
Their total number is  $10\:786$ \cite{all-quartic} and they can be classified
based on whether they are connected or not. More precisely, we can divide the
graphs into:
\begin{itemize}
	\item  disconnected (Sec. \ref{rozlaczne}) -- $8$ graphs
	\cite{disconnected-quartic},
	\item  connected  (Sec. \ref{polaczone}) -- $10\:778$ graphs
	\cite{connected-quartic}.
\end{itemize}

The connected graphs will henceforth be denoted by $M_i$, where $i$ is the index
of a graph consistent with the enumeration obtained with GENREG software
\cite{meringer,meringer2}. A file with the adjacency matrices for these graphs
is available at \cite{pliki}, where all other necessary resources can also be
found.

We note that the connected 
graphs exhibit the following properties: (i) 31 graphs have girth $4$ 
(precisely, graphs $M_{10748}$ through $M_{10778}$)
and (ii) the remaining graphs have girth $3$ (they have $3$-cliques, i.e.,
triangles, as subgraphs). The fact that there are no graphs with larger girths is what makes the square and the diamond graphs of particular relevance in our analyses. 

\subsection{Disconnected $4$-regular graphs with $13$ vertices}\label{rozlaczne}

We begin with the simpler case of disconnected graphs. As noted in the previous
section, there are $8$ such graphs. 
They are of two types in terms of the number of vertices in the disconnected
groups:
\begin{itemize}
	\item[(a)]  type I ($6$ graphs): $5+8$ vertices, 
	\item[(b)] type II ($2$ graphs): $6+7$ vertices.
\end{itemize}

Let us analyze these graphs in detail.

(a) Type I. There are six $4$-regular graphs with $8$ vertices, however, there
is only one such graph with $5$ vertices, namely $C_5$ (Fig. \ref{5-6-vertices}),
which obviously does not have a FOR(3). In turn, this case gets discarded.

\begin{figure}[h]
	\centering

	\begin{tikzpicture}[scale=0.6]
		\begin{scope}
			\node[circle, draw, fill=white, minimum size=12pt] (v1) at ({90+0*72}:3cm)
			{$v_1$};
			\node[circle, draw, fill=white, minimum size=12pt] (v2) at ({90+1*72}:3cm)
			{$v_2$};
			\node[circle, draw, fill=white, minimum size=12pt] (v3) at ({90+2*72}:3cm)
			{$v_3$};
			\node[circle, draw, fill=white, minimum size=12pt] (v4) at ({90+3*72}:3cm)
			{$v_4$};
			\node[circle, draw, fill=white, minimum size=12pt] (v5) at ({90+4*72}:3cm)
			{$v_5$};
			%
			\draw (v1) -- (v2) -- (v3) -- (v4) -- (v5) -- (v1);
			\draw (v1) -- (v3) -- (v5) -- (v2) -- (v4) -- (v1);
		\end{scope}
		\begin{scope}[xshift=4.5cm,yshift=2.5cm,scale=0.8]
			\draw[red, line width=2.5pt, line cap=round, line join=round]
			(-0.5,0.95) -- (0.3,0) ;
			\draw[red, line width=2.5pt, line cap=round, line join=round]	(-0.5,0) --
			(0.3,0.95);
		\end{scope}
		\begin{scope}[xshift=12cm]
			\node[circle, draw, fill=brown, minimum size=12pt] (v1) at
			({90+0*60}:3cm) {$v_1$};
			\node[circle, draw, fill=cyan, minimum size=12pt] (v2) at ({90+1*60}:3cm)
			{$v_2$};
			\node[circle, draw, fill=lime, minimum size=12pt] (v3) at ({90+2*60}:3cm)
			{$v_3$};
			\node[circle, draw, fill=lime, minimum size=12pt] (v4) at ({90+3*60}:3cm)
			{$v_4$};
			\node[circle, draw, fill=cyan, minimum size=12pt] (v5) at ({90+4*60}:3cm)
			{$v_5$};
			\node[circle, draw, fill=brown, minimum size=12pt] (v6) at
			({90+5*60}:3cm) {$v_6$};
			%
			\draw (v1) -- (v2) -- (v3) -- (v5) -- (v6) -- (v4) -- (v2);
			\draw (v1) -- (v3) -- (v6);
			\draw (v1) -- (v4) -- (v5);
			\draw (v2) -- (v6);
			\draw (v1) -- (v5);
		\end{scope}
		\begin{scope}[xshift=16cm,yshift=2.5cm,scale=0.9]
			\draw[ufogreen, line width=2.5pt, line cap=round, line join=round]
			(-0.3,0.4) -- (0,0) -- (0.6,0.95);
		\end{scope}
	\end{tikzpicture}
	\caption{(left) $C_5$ is the unique $4$-regular graph on $5$ vertices.
		Trivially, no FOR(3) exists for this graph (indicated by the cross mark).
		(right) The unique $4$-regular  graph $D_6$ on $6$ vertices. This graph admits a
		FOR(3) (indicated by the checkmark) with necessarily repeating vectors for the
		colored vertices.
	}\label{5-6-vertices}
\end{figure}
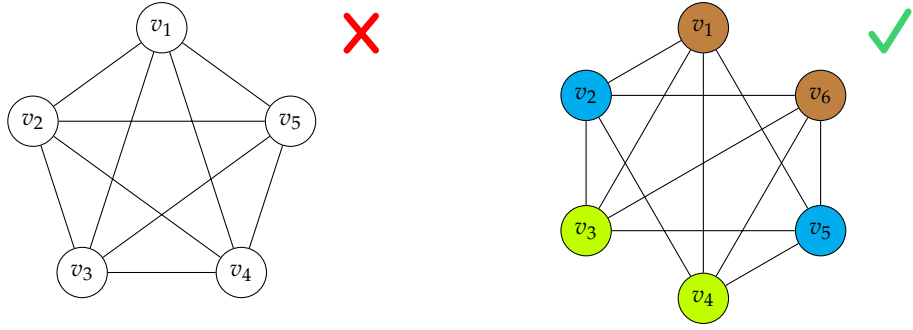

(b) Type II.  Graph $D_6$ (Fig. \ref{5-6-vertices}) is the unique graph with six vertices and there are two
graphs with $7$ vertices, $D_{7,a}$ and $D_{7,b}$ (Fig. \ref{7-vertices}). 

Let us start with $D_6$. 
Considering the diamond subgraph with vertices $(v_1,v_2,v_3,v_4)$ 
immediately leads to 
$\repr{3}=\repr{4}$ (lime vertices in Fig. \ref{5-6-vertices}). In a similar manner we get $\repr{2}=\repr{5}$ (blue) and
$\repr{1}=\repr{6}$ (brown). All three conditions can be satisfied simultaneously  and we conclude that
this graph does have a FOR(3) and it is unique up to a unitary.

We now consider $D_{7,a}$. Let $\ket{v_1}=\ket{0}$ and $\ket{v_2}=\ket{1}$. By Lemma \ref{4cykl}, one
immediately obtains  $\repr{3}=\repr{4}=\repr{5}=\ket{2}$ (brown vertices in Fig. \ref{7-vertices}). Further, $\repr{6}=\alpha
\ket{0}+\beta\ket{1}$, and $\repr{7}=\beta^* \ket{0} -\alpha^* \ket{1}$ with
$\alpha,\beta \ne 0$. This constitutes a FOR(3), which is unique up to a
unitary. 

In case of $D_{7,b}$ there are several ways to arrive at a contradiction. For
example, we readily get $\repr{3}=\repr{4}$, which is, however, impossible in a
FOR as $\mathcal{N}(v_3) \ne \mathcal{N}(v_4)$. In turn, $D_{7,b}$ does not admit
a FOR(3).  In the framework of forbidden induced subgraph characterization (see Sec. \ref{polaczone}), we identify $D_{7,b}$ as one admitting the forbidden subgraph $H_5$ (e.g., on vertices $v_1,v_2,v_3,v_6,v_7$).

Concluding, only one disconnected graph, $D_6+D_{7,a}$ (type II), has a
FOR(3). Nevertheless, this candidate graph gets discarded as there is twice the
same vector in a FOR(3) for $D_6$ and thrice for $D_{7,a}$, violating the  condition
of Fact \ref{general-fakt}.
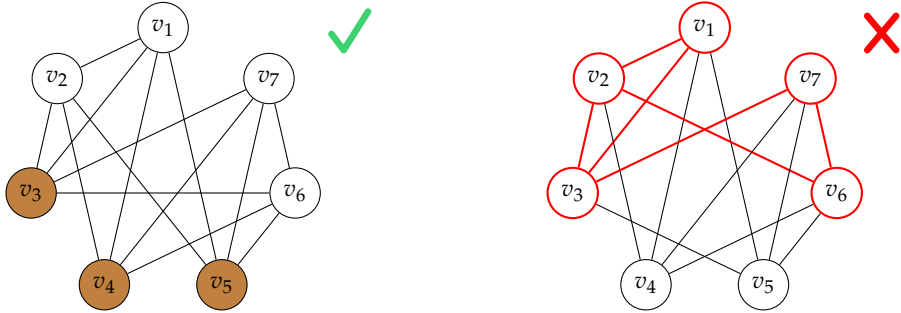
\begin{figure}[h!]
	\centering
	\begin{tikzpicture}[scale=0.6]
		\begin{scope}
			\GraphInit[vstyle=Classic]
			\SetGraphUnit{2.5}
			\node[circle, draw, fill=white, minimum size=12pt] (v1) at ({90+0*360/7}:3cm)
			{$v_1$};
			\node[circle, draw, fill=white, minimum size=12pt] (v2) at ({90+1*360/7}:3cm)
			{$v_2$};
			\node[circle, draw, fill=brown, minimum size=12pt] (v3) at ({90+2*360/7}:3cm)
			{$v_3$};
			\node[circle, draw, fill=brown, minimum size=12pt] (v4) at ({90+3*360/7}:3cm)
			{$v_4$};
			\node[circle, draw, fill=brown, minimum size=12pt] (v5) at ({90+4*360/7}:3cm)
			{$v_5$};
			\node[circle, draw, fill=white, minimum size=12pt] (v6) at ({90+5*360/7}:3cm)
			{$v_6$};
			\node[circle, draw, fill=white, minimum size=12pt] (v7) at ({90+6*360/7}:3cm)
			{$v_7$};
			\draw (v1) -- (v2);			\draw (v1) -- (v3);
			\draw (v1) -- (v4);			\draw (v1) -- (v5);
			\draw (v2) -- (v3);			\draw (v2) -- (v4);
			\draw (v2) -- (v5);			\draw (v3) -- (v6);
			\draw (v3) -- (v7);			\draw (v4) -- (v6);
			\draw (v4) -- (v7);			\draw (v5) -- (v6);
			\draw (v5) -- (v7);			\draw (v6) -- (v7);		
	\end{scope}
	\begin{scope}[xshift=4cm,yshift=2.5cm,scale=0.9]
		\draw[ufogreen, line width=2.5pt, line cap=round, line join=round]
		(-0.3,0.4) -- (0,0) -- (0.6,0.95);
	\end{scope}
	%
		\begin{scope}[xshift=12cm]
			\GraphInit[vstyle=Classic]
			\SetGraphUnit{2.5}
			\node[circle, draw= red,thick, fill=white, minimum size=12pt] (v1) at ({90+0*360/7}:3cm)
			{$v_1$};
			\node[circle, draw=red,thick, fill=white, minimum size=12pt] (v2) at ({90+1*360/7}:3cm)
			{$v_2$};
			\node[circle, draw=red,thick, fill=white, minimum size=12pt] (v3) at
			({90+2*360/7}:3cm) {$v_3$};
			\node[circle, draw, fill=white, minimum size=12pt] (v4) at
			({90+3*360/7}:3cm) {$v_4$};
			\node[circle, draw, fill=white, minimum size=12pt] (v5) at ({90+4*360/7}:3cm)
			{$v_5$};
			\node[circle, draw=red,thick, fill=white, minimum size=12pt] (v6) at ({90+5*360/7}:3cm)
			{$v_6$};
			\node[circle, draw=red,thick, fill=white, minimum size=12pt] (v7) at ({90+6*360/7}:3cm)
			{$v_7$};	
			\draw[red,thick] (v1) -- (v2);			\draw[red,thick] (v1) -- (v3);
			\draw (v1) -- (v4);			\draw (v1) -- (v5);
			\draw[red,thick] (v2) -- (v3);			\draw (v2) -- (v4);
			\draw[red,thick] (v2) -- (v6);			\draw (v3) -- (v5);
			\draw[red,thick] (v3) -- (v7);			\draw (v4) -- (v6);
			\draw (v4) -- (v7);			\draw (v5) -- (v6);
			\draw (v5) -- (v7);			\draw[red,thick] (v6) -- (v7);
		\end{scope}
		\begin{scope}[xshift=16cm,yshift=2.5cm,scale=0.8]
			\draw[red, line width=2.5pt, line cap=round, line join=round]
			(-0.5,0.95) -- (0.3,0) ;
			\draw[red, line width=2.5pt, line cap=round, line join=round]	(-0.5,0) --
			(0.3,0.95);
		\end{scope}
	\end{tikzpicture}
	\caption{Two non-isomorphic $4$-regular graphs on $7$ vertices: $D_{7,a}$
		(left) and $D_{7,b}$ (right). $D_{7,a}$ admits a FOR(3) with the same vectors
		for $v_3, v_4,v_5$; there is no FOR(3) for $D_{7,b}$ as $H_5$ is its induced subgraph (highlighted in red).} \label{7-vertices}
\end{figure} 

For larger cases, there will be significantly more disconnected graphs (also with more vertices) and their analysis should involve the approach put forward in the next section.

\subsection{Connected $4$-regular graphs with $13$ vertices} \label{polaczone}

In this section we analyze all connected $4$-regular graphs $M_i$,
$i=1,2,\dots, 10\:778$ (cf. Sec. \ref{kandydaci}). With this purpose, we will
use the approach of forbidden induced subgraph characterization (see Sec. \ref{fis-sekcja}). 
 We will identify a small number of graphs (with up to six
vertices) not having a FOR(3) (an obstruction set). By Observation \ref{fis}, if any of these graphs
exists as an induced subgraph in graph $M_i$, then this graph obviously does not
have a FOR(3) either and it gets discarded as a candidate LOG.
We will be able to narrow the set of graphs having a FOR(3) down to a single
graph. This graph will further be ruled out by Fact \ref{general-fakt}
as not having the required property.

Since the problem we deal here with is still rather small, we can employ a
brute-force strategy for the relevant induced subgraph isomorphism problems. For
a given subgraph, we verify whether its adjacency matrix $\mathcal{A}$ or $P
\mathcal{A} P^T$ ($P$ is a permutation matrix)  is a submatrix along the
diagonal of the adjacency matrix of a graph from the analyzed set. This is
performed for all subgraphs (see the list below) and all $M_i$'s. This is clearly very inefficient but it is particularly easy to implement and it is sufficient for the considered problem. We also implemented an alternative, more efficient procedure, using an in-built function for the subgraph isomorphism problem of the software we used \cite{pliki}.

\subsubsection{Forbidden induced subgraphs} \label{forbidden-induced}

We will consider the following subgraphs:
\begin{itemize}
	\item $4$-clique $C_4$ (Fig. \ref{klika}),
	\item house graph $H_5$ (Fig. \ref{house}),
	\item kite graph $K_5$ (Fig. \ref{kite}),
	\item $A$-graph $A_6$ (Fig. \ref{A-graph}).
\end{itemize}
Their adjacency matrices are as follows:
\begin{align}
	\mathcal{A}(C_4)= \left(
	\begin{array}{cccc}
		0 & 1 & 1 & 1 \\
		1 &0 & 1 & 1 \\
		1 & 1 & 0 & 1 \\
		1 & 1 & 1 & 0
	\end{array} \right)\!\! , \:\:
	\mathcal{A}(H_5) =	\left(
	\begin{array}{ccccc}
		0 & 1 & 1 & 1 & 0 \\
		1 & 0 & 1 & 0 & 1 \\
		1 & 1 & 0 & 0 & 0 \\
		1 & 0 & 0 & 0 & 1 \\
		0 & 1 & 0 & 1 & 0
	\end{array}
	\right)\!\! , \:\:
	\mathcal{A}(K_5)=	\left(
	\begin{array}{ccccc}
		0 & 1 & 1 & 1 &0 \\
		1 & 0 & 1 & 1 &0\\
		1 & 1 & 0 & 0 &1\\
		1 & 1 & 0 & 0& 0\\
		0 &0 & 1 & 0& 0
	\end{array} \right)\!\! , \:\:
	\mathcal{A}(A_6)=\left(
	\begin{array}{cccccc}
		0 & 0 & 0 & 1 & 1 & 1 \\
		0 & 0 & 0 & 0 & 1 & 1 \\
		0 & 0 & 0 & 0 & 0 & 1 \\
		1 & 0 & 0 & 0 & 0 & 0 \\
		1 & 1 & 0 & 0 & 0 & 0 \\
		1 & 1 & 1 & 0 & 0 & 0 \\
	\end{array}
	\right).
\end{align}
\begin{figure}[htp]
	\begin{tikzpicture}[scale=0.45]
		\centering
		\begin{scope}[scale=0.9]
			\node[circle, draw, fill=white, minimum size=12pt] (w1) at ({45}:3cm)
			{$w_1$};
			\node[circle, draw, fill=white, minimum size=12pt] (w2) at ({135}:3cm)
			{$w_2$};
			\node[circle, draw, fill=white, minimum size=12pt] (w3) at ({225}:3cm)
			{$w_3$};
			\node[circle, draw, fill=white, minimum size=12pt] (w4) at ({315}:3cm)
			{$w_4$};
			%
			\draw (w1) -- (w2) -- (w3) -- (w4) -- (w1);
			\draw (w2) -- (w4);
			\draw (w1) -- (w3);
		\end{scope}
		\begin{scope}[xshift=12cm,scale=0.9]
			\GraphInit[vstyle=Classic]
			\SetGraphUnit{2.5}
			\node[circle, draw=red, thick, fill=white, minimum size=10pt, inner sep=1.8pt] (v1) at
			({90+0*360/13}:6.5cm) {$v_1$};
			\node[circle, draw=red,thick, fill=white, minimum size=12pt, inner sep=1.8pt] (v2) at
			({90+1*360/13}:6.5cm) {$v_2$};
			\node[circle, draw=red,thick, fill=white, minimum size=12pt, inner sep=1.8pt] (v3) at
			({90+2*360/13}:6.5cm) {$v_3$};
			\node[circle, draw=red, thick, fill=white, minimum size=12pt, inner sep=1.8pt] (v4) at
			({90+3*360/13}:6.5cm) {$v_4$};
			\node[circle, draw, fill=white, minimum size=12pt, inner sep=1.8pt] (v5) at
			({90+4*360/13}:6.5cm) {$v_5$};
			\node[circle, draw, fill=white, minimum size=12pt, inner sep=1.8pt] (v6) at
			({90+5*360/13}:6.5cm) {$v_6$};
			\node[circle, draw, fill=white, minimum size=12pt, inner sep=1.8pt] (v7) at
			({90+6*360/13}:6.5cm) {$v_7$};
			\node[circle, draw, fill=white, minimum size=12pt, inner sep=1.8pt] (v8) at
			({90+7*360/13}:6.5cm) {$v_8$};	
			\node[circle, draw, fill=white, minimum size=12pt, inner sep=1.8pt] (v9) at
			({90+8*360/13}:6.5cm) {$v_9$};	
			\node[circle, draw, fill=white, minimum size=12pt,inner sep=0.6pt] (v10) at
			({90+9*360/13}:6.5cm) {$v_{10}$};	
			\node[circle, draw, fill=white, minimum size=12pt,inner sep=0.6pt] (v11) at
			({90+10*360/13}:6.5cm) {$v_{11}$};	
			\node[circle, draw, fill=white, minimum size=12pt,inner sep=0.6pt] (v12) at
			({90+11*360/13}:6.5cm) {$v_{12}$};	
			\node[circle, draw, fill=white, minimum size=12pt,inner sep=0.6pt] (v13) at
			({90+12*360/13}:6.5cm) {$v_{13}$};		
			\draw[red,thick] (v1) -- (v2); \draw[red,thick](v1) -- (v3);
			\draw[red,thick](v1) -- (v4); \draw(v1) -- (v5);
			\draw[red,thick] (v2) -- (v3); \draw[red,thick](v2) -- (v4); \draw(v2) --
			(v5);
			\draw [red,thick](v3) -- (v4); \draw(v3) -- (v5);
			\draw (v4) -- (v6);
			\draw (v5) -- (v6);
			\draw (v6) -- (v7); \draw(v6) -- (v8);
			\draw (v7) -- (v9); \draw(v7) -- (v10); \draw(v7) -- (v11);
			\draw (v8) -- (v9); \draw(v8) -- (v10); \draw(v8) -- (v11);
			\draw (v9) -- (v12); \draw(v9) -- (v13);
			\draw (v10) -- (v12);\draw (v10) -- (v13);
			\draw (v11) -- (v12); \draw(v11) -- (v13);
			\draw (v12) -- (v13);
		\end{scope}
		\begin{scope}[xshift=19.5cm,yshift=5cm,scale=1.2]
			\draw[red, line width=2.5pt, line cap=round, line join=round]
			(-0.5,0.95) -- (0.3,0) ;
			\draw[red, line width=2.5pt, line cap=round, line join=round]	(-0.5,0) --
			(0.3,0.95);
		\end{scope}
	\end{tikzpicture}
	\caption{(left) $4$-clique  $C_4$. Graphs with $4$-cliques  obviously do not
		have FORs(3). (right) $C_4$ highlighted in red in $M_{4}$. } \label{klika}
\end{figure}
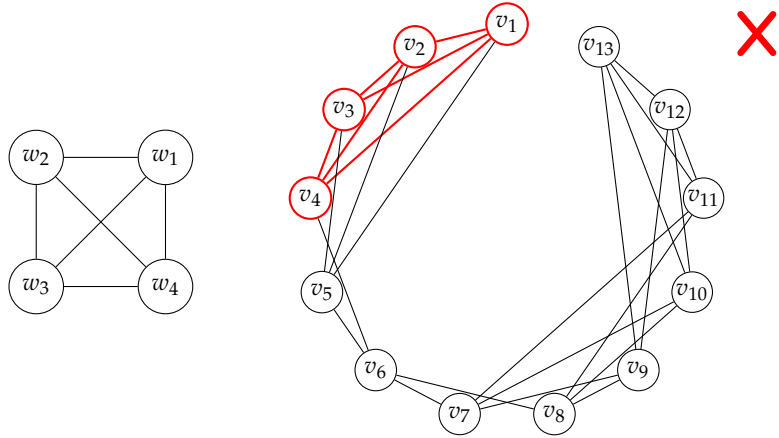
\begin{figure}[htp]
	\begin{tikzpicture}[scale=0.45]
		\begin{scope}[yshift=-0.8cm,scale=0.7]
			\node[circle, draw, fill=white, minimum size=12pt] (w1) at ({90+0*72}:5cm)
			{$w_1$};
			\node[circle, draw, fill=white, minimum size=12pt] (w2) at ({135}:3cm)
			{$w_2$};
			\node[circle, draw, fill=white, minimum size=12pt] (w3) at ({225}:3cm)
			{$w_3$};
			\node[circle, draw, fill=white, minimum size=12pt] (w4) at ({315}:3cm)
			{$w_4$};
			\node[circle, draw, fill=white, minimum size=12pt] (w5) at ({45}:3cm)
			{$w_5$};
			%
			\draw (w1) -- (w2) -- (w3) -- (w4) -- (w5) -- (w1);
			\draw (w2) -- (w5);
		\end{scope}
		\begin{scope}[xshift=12cm,scale=0.9]
			\node[circle, draw=red,thick, fill=white, minimum size=10pt, inner sep=1.8pt] (v1) at
			({90+0*360/13}:6.5cm) {$v_1$};
			\node[circle, draw=red,thick, fill=white, minimum size=12pt, inner sep=1.8pt] (v2) at
			({90+1*360/13}:6.5cm) {$v_2$};
			\node[circle, draw=red,thick, fill=white, minimum size=12pt, inner sep=1.8pt] (v3) at
			({90+2*360/13}:6.5cm) {$v_3$};
			\node[circle, draw=red,thick, fill=white, minimum size=12pt, inner sep=1.8pt] (v4) at
			({90+3*360/13}:6.5cm) {$v_4$};
			\node[circle, draw, fill=white, minimum size=12pt, inner sep=1.8pt] (v5) at
			({90+4*360/13}:6.5cm) {$v_5$};
			\node[circle, draw=red,thick, fill=white, minimum size=12pt, inner sep=1.8pt] (v6) at
			({90+5*360/13}:6.5cm) {$v_6$};
			\node[circle, draw, fill=white, minimum size=12pt, inner sep=1.8pt] (v7) at
			({90+6*360/13}:6.5cm) {$v_7$};
			\node[circle, draw, fill=white, minimum size=12pt, inner sep=1.8pt] (v8) at
			({90+7*360/13}:6.5cm) {$v_8$};	
			\node[circle, draw, fill=white, minimum size=12pt, inner sep=1.8pt] (v9) at
			({90+8*360/13}:6.5cm) {$v_9$};	
			\node[circle, draw, fill=white, minimum size=12pt,inner sep=0.6pt] (v10) at
			({90+9*360/13}:6.5cm) {$v_{10}$};	
			\node[circle, draw, fill=white, minimum size=12pt,inner sep=0.6pt] (v11) at
			({90+10*360/13}:6.5cm) {$v_{11}$};	
			\node[circle, draw, fill=white, minimum size=12pt,inner sep=0.6pt] (v12) at
			({90+11*360/13}:6.5cm) {$v_{12}$};	
			\node[circle, draw, fill=white, minimum size=12pt,inner sep=0.6pt] (v13) at
			({90+12*360/13}:6.5cm) {$v_{13}$};		
			\draw[red,thick] (v1) -- (v2);\draw[red,thick]  (v1) -- (v3);\draw[red,thick] 
			(v1) -- (v4);
			\draw (v1) -- (v5);\draw[red,thick]  (v2) -- (v3);
			\draw[red,thick]  (v2) -- (v6);\draw (v2) -- (v7);
			\draw (v3) -- (v8);\draw (v3) -- (v9);\draw (v4) -- (v5);
			\draw[red,thick]  (v4) -- (v6);\draw (v4) -- (v7);\draw (v5) -- (v8);\draw
			(v5) -- (v9);
			\draw (v6) -- (v10);\draw (v6) -- (v11);\draw (v7) -- (v10);\draw (v7) --
			(v11);
			\draw (v8) -- (v12);\draw (v8) -- (v13);\draw (v9) -- (v12);\draw (v9) --
			(v13);
			\draw (v10) -- (v12);\draw (v10) -- (v13);\draw (v11) -- (v12);\draw (v11) --
			(v13);
		\end{scope}
		\begin{scope}[xshift=19.5cm,yshift=5cm,scale=1.2]
			\draw[red, line width=2.5pt, line cap=round, line join=round]
			(-0.5,0.95) -- (0.3,0) ;
			\draw[red, line width=2.5pt, line cap=round, line join=round]	(-0.5,0) --
			(0.3,0.95);
		\end{scope}
	\end{tikzpicture}
	\caption{(left) House graph $H_5$ --- a $5$-vertex graph without a FOR(3) [Lemma \ref{grafy-bez-for} (b)]. (right) $H_5$
		highlighted in $M_{9048}$.
	}\label{house}
\end{figure}
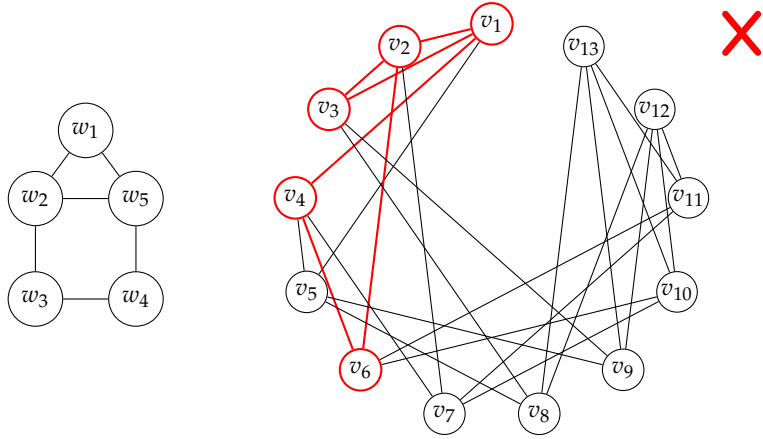
\begin{figure}[h]
	\centering
	\begin{tikzpicture}[scale=0.45]
		\begin{scope}[yshift=-0.7cm,scale=0.7]
			\node[circle, draw, fill=white, minimum size=12pt] (w1) at (0,-4) {$w_1$};
			\node[circle, draw, fill=white, minimum size=12pt] (w2) at (0,0) {$w_2$};
			\node[circle, draw, fill=white, minimum size=12pt] (w3) at ({45}:4cm)
			{$w_3$};
			\node[circle, draw, fill=white, minimum size=12pt] (w4) at (0,5.7) {$w_4$};
			\node[circle, draw, fill=white, minimum size=12pt] (w5) at ({135}:4cm)
			{$w_5$};
			%
			\draw (w1) -- (w2) -- (w3) -- (w4) -- (w5) ;
			\draw (w2) -- (w5);
			\draw (w3) -- (w5);
		\end{scope}
		\begin{scope}[xshift=12cm,scale=0.9]
			\node[circle, draw=red,thick, fill=white, minimum size=10pt, inner sep=1.8pt] (v1) at
			({90+0*360/13}:6.5cm) {$v_1$};
			\node[circle, draw=red,thick, fill=white, minimum size=12pt, inner sep=1.8pt] (v2) at
			({90+1*360/13}:6.5cm) {$v_2$};
			\node[circle, draw=red,thick, fill=white, minimum size=12pt, inner sep=1.8pt] (v3) at
			({90+2*360/13}:6.5cm) {$v_3$};
			\node[circle, draw=red,thick, fill=white, minimum size=12pt, inner sep=1.8pt] (v4) at
			({90+3*360/13}:6.5cm) {$v_4$};
			\node[circle, draw, fill=white, minimum size=12pt, inner sep=1.8pt] (v5) at
			({90+4*360/13}:6.5cm) {$v_5$};
			\node[circle, draw, fill=white, minimum size=12pt, inner sep=1.8pt] (v6) at
			({90+5*360/13}:6.5cm) {$v_6$};
			\node[circle, draw, fill=white, minimum size=12pt, inner sep=1.8pt] (v7) at
			({90+6*360/13}:6.5cm) {$v_7$};
			\node[circle, draw=red,thick, fill=white, minimum size=12pt, inner sep=1.8pt] (v8) at
			({90+7*360/13}:6.5cm) {$v_8$};	
			\node[circle, draw, fill=white, minimum size=12pt, inner sep=1.8pt] (v9) at
			({90+8*360/13}:6.5cm) {$v_9$};	
			\node[circle, draw, fill=white, minimum size=12pt,inner sep=0.6pt] (v10) at
			({90+9*360/13}:6.5cm) {$v_{10}$};	
			\node[circle, draw, fill=white, minimum size=12pt,inner sep=0.6pt] (v11) at
			({90+10*360/13}:6.5cm) {$v_{11}$};	
			\node[circle, draw, fill=white, minimum size=12pt,inner sep=0.6pt] (v12) at
			({90+11*360/13}:6.5cm) {$v_{12}$};	
			\node[circle, draw, fill=white, minimum size=12pt,inner sep=0.6pt] (v13) at
			({90+12*360/13}:6.5cm) {$v_{13}$};		
			\draw[red,thick] (v1) -- (v2);	\draw[red,thick]  (v1) -- (v3);
			\draw[red,thick]  (v1) -- (v4);
			\draw (v1) -- (v5);	\draw[red,thick]  (v2) -- (v3);	\draw[red,thick]  (v2) --
			(v4);
			\draw (v2) -- (v5);	\draw (v3) -- (v6);	\draw (v3) -- (v7);	\draw (v4) --
			(v6);
			\draw[red,thick]  (v4) -- (v8);	\draw (v5) -- (v6);	\draw (v5) -- (v9);	\draw
			(v6) -- (v7);
			\draw (v7) -- (v8);	\draw (v7) -- (v10);	\draw (v8) -- (v11);	\draw (v8) --
			(v12);
			\draw (v9) -- (v11);	\draw (v9) -- (v12);	\draw (v9) -- (v13);
			\draw (v10) -- (v11);	\draw (v10) -- (v12);	\draw (v10) -- (v13);	
			\draw (v11) -- (v13);	\draw (v12) -- (v13);
		\end{scope}
		\begin{scope}[xshift=19.5cm,yshift=5cm,scale=1.2]
			\draw[red, line width=2.5pt, line cap=round, line join=round]
			(-0.5,0.95) -- (0.3,0) ;
			\draw[red, line width=2.5pt, line cap=round, line join=round]	(-0.5,0) --
			(0.3,0.95);
		\end{scope}
	\end{tikzpicture}
	\caption{(left) Kite graph $K_5$ --- a $5$-vertex graph without a FOR(3) [Lemma \ref{grafy-bez-for} (c)]. (right) $K_3$ highlighted in $M_{503}$.}
	\label{kite}
\end{figure}
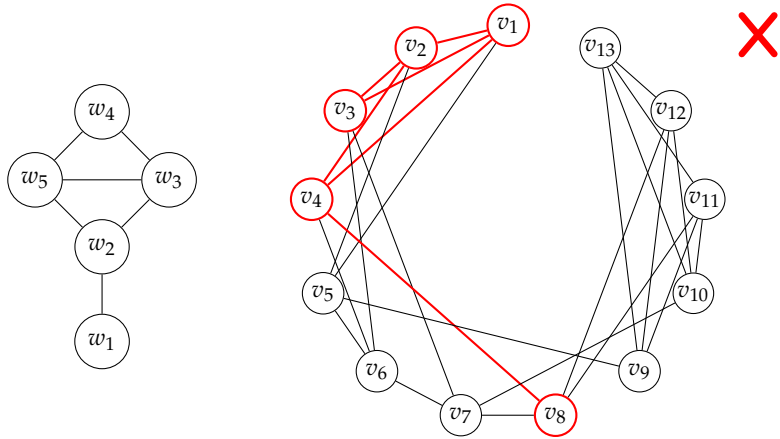

\begin{lem} \label{grafy-bez-for}
	Graphs: (a) $C_4$, (b) $H_5$, (c) $K_5$, (d) $A_6$ do not admit a FOR(3). 
\end{lem}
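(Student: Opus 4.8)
The plan is to reduce every case to Lemma~\ref{4cykl} together with the clique bound $d_{\mathrm{min}}\ge\omega(G)$, via one recurring device: if a graph $\Gamma$ has a FOR(3) and $\Gamma'$ is an \emphalt{induced} subgraph of $\Gamma$, then the restriction of the representation to $V(\Gamma')$ is again a FOR(3), now of $\Gamma'$ --- both orthogonality and adjacency among the restricted vectors are decided inside $\Gamma'$. Hence, whenever Lemma~\ref{4cykl} forces two vertices of such a $\Gamma'$ to receive the same vector while in the ambient $\Gamma$ one of them has a neighbour the other lacks, faithfulness collapses and $\Gamma$ has no FOR(3). So for (b), (c), (d) I would just locate the relevant induced square or diamond and name the distinguishing vertex.

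Case (a) is immediate: $C_4$ is a $4$-clique, so $\omega(C_4)=4$ and $d_{\mathrm{min}}(C_4)\ge 4>3$, which excludes any OR(3), faithful or not.

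For (b), reading off $\mathcal{A}(H_5)$ one sees that $v_1,v_4,v_5,v_2$ induce the $4$-cycle $v_1\!-\!v_4\!-\!v_5\!-\!v_2\!-\!v_1$ (the potential chords $\{v_1,v_5\}$ and $\{v_2,v_4\}$ are absent), i.e.\ a square graph with opposite pairs $\{v_1,v_5\}$ and $\{v_2,v_4\}$. By Lemma~\ref{4cykl}(i) a FOR(3) of $H_5$ would force $\repr{1}=\repr{5}$ or $\repr{2}=\repr{4}$; since $v_3$ is adjacent to both $v_1$ and $v_2$ but to neither $v_5$ nor $v_4$, the first case gives $\repr{5}\perp\repr{3}$ with $v_5\not\sim v_3$ and the second gives $\repr{4}\perp\repr{3}$ with $v_4\not\sim v_3$ --- a contradiction with faithfulness either way. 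For (c), $\mathcal{A}(K_5)$ shows that $v_1,v_2,v_3,v_4$ induce a diamond (edges $\{1,2\},\{1,3\},\{1,4\},\{2,3\},\{2,4\}$; missing edge $\{3,4\}$), whose two degree-two tips are $v_3$ and $v_4$. By Lemma~\ref{4cykl}(ii) a FOR(3) of $K_5$ must satisfy $\repr{3}=\repr{4}$, which is impossible since $v_5\sim v_3$ while $v_5\not\sim v_4$. For (d), $\mathcal{A}(A_6)$ shows that $v_1,v_5,v_2,v_6$ induce the $4$-cycle $v_1\!-\!v_5\!-\!v_2\!-\!v_6\!-\!v_1$ (chords $\{v_1,v_2\}$, $\{v_5,v_6\}$ absent), a square graph with opposite pairs $\{v_1,v_2\}$ and $\{v_5,v_6\}$; Lemma~\ref{4cykl}(i) then yields $\repr{1}=\repr{2}$ --- contradicted by $v_4$, adjacent to $v_1$ but not $v_2$ --- or $\repr{5}=\repr{6}$ --- contradicted by $v_3$, adjacent to $v_6$ but not $v_5$.

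I anticipate no genuinely hard step. The only part calling for care is the bookkeeping, carried out directly from the adjacency matrices, of checking that the chosen square or diamond is truly \emphalt{induced} and that the pair of vertices Lemma~\ref{4cykl} pins together has mismatched neighbourhoods in the ambient graph, which is precisely what converts the forced coincidence into a failure of faithfulness. As an aside, case (d) also admits a direct coordinate argument: taking $\repr{6}=\ket{2}$ and $\repr{1}=\ket{0}$ places $\repr{2},\repr{3}\in\mathrm{span}(\ket{0},\ket{1})$ and $\repr{4},\repr{5}\in\mathrm{span}(\ket{1},\ket{2})$, and then $\repr{5}\perp\repr{1},\repr{2}$ together with $\repr{5}\not\perp\repr{3},\repr{6}$ forces $\repr{2}=\ket{0}=\repr{1}$, again clashing with $v_2\not\sim v_4$.
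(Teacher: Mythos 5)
Your proof is correct and takes essentially the same route as the paper: case (a) via the clique bound, and cases (b)--(d) by locating an induced square or diamond, invoking Lemma~\ref{4cykl} to force a repeated vector on a non-adjacent pair, and then contradicting faithfulness with a distinguishing neighbour. The only differences are cosmetic --- you work from the adjacency-matrix labelings and state explicitly the (correct) principle that a FOR(3) restricts to a FOR(3) on induced subgraphs, which the paper leaves implicit.
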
	
\begin{proof}(a) Obvious. There is no orthogonal representation in $d\le 3$ for
	this graph. \\ 
	(b) By Lemma \ref{4cykl}, it must be that $\ket{w_2}=\ket{w_4}$ or $\ket{w_3}=\ket{w_5}$. However, this is
	impossible to be achieved for a graph with $\mathcal{N}(w_2) \ne
	\mathcal{N}(w_4)$ and   $\mathcal{N}(w_3) \ne \mathcal{N}(w_5)$ along with the
	condition that $\ket{w_2}\ne \ket{w_5}$ (there is edge $\{w_2,w_5\}$), which is
	the case here.\\
	(c)	It immediately follows that it must hold $\ket{w_2} = \ket{w_4}$ in any FOR(3).  However, $\mathcal{N}(w_2) \ne  \mathcal{N}
	(w_4)$, a contradiction. \\
	(d) By Lemma \ref{4cykl} it must hold $\ket{w_3}=\ket{w_5}$  or $\ket{w_2}=\ket{w_4}$.  However, it holds
	$\mathcal{N}(w_2) \ne \mathcal{N}(w_4)$ and $\mathcal{N}(w_3) \ne
	\mathcal{N}(w_5)$, preventing the graph to have a FOR(3).
\end{proof}

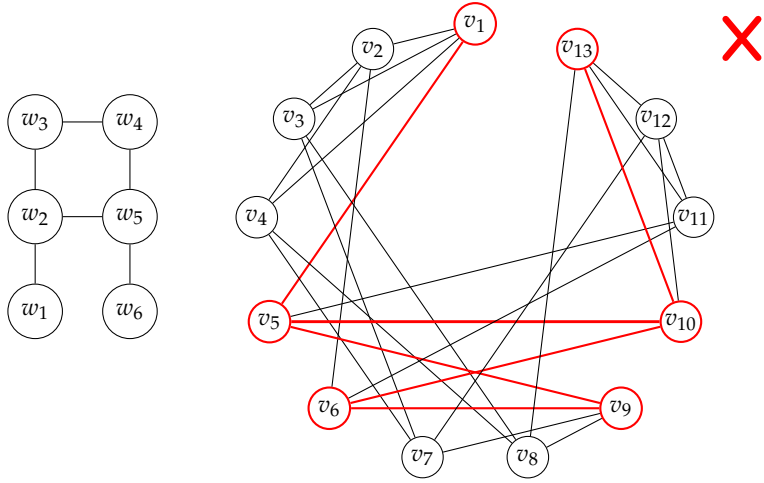
\begin{figure}[htp]
	\begin{tikzpicture}[scale=0.45]
		\centering
		\begin{scope}[yshift=-2cm,scale=0.7]
			\node[circle, draw, fill=white, minimum size=12pt] (w1) at (0,0) {$w_1$};
			\node[circle, draw, fill=white, minimum size=12pt] (w2) at (0,4) {$w_2$};
			\node[circle, draw, fill=white, minimum size=12pt] (w3) at (0,8) {$w_3$};
			\node[circle, draw, fill=white, minimum size=12pt] (w4) at (4,8) {$w_4$};
			\node[circle, draw, fill=white, minimum size=12pt] (w5) at (4,4) {$w_5$};
			\node[circle, draw, fill=white, minimum size=12pt] (w6) at (4,0) {$w_6$};
			%
			\draw (w1) -- (w2) -- (w3) -- (w4) -- (w5) -- (w6);
			\draw (w2) -- (w5);
		\end{scope}
		\begin{scope}[xshift=13cm]
			\GraphInit[vstyle=Classic]
			\SetGraphUnit{2.5}
			\node[circle, draw=red,thick, fill=white, minimum size=10pt, inner sep=1.8pt] (v1) at
			({90+0*360/13}:6.5cm) {$v_1$};
			\node[circle, draw, fill=white, minimum size=12pt, inner sep=1.8pt] (v2) at
			({90+1*360/13}:6.5cm) {$v_2$};
			\node[circle, draw, fill=white, minimum size=12pt, inner sep=1.8pt] (v3) at
			({90+2*360/13}:6.5cm) {$v_3$};
			\node[circle, draw, fill=white, minimum size=12pt, inner sep=1.8pt] (v4) at
			({90+3*360/13}:6.5cm) {$v_4$};
			\node[circle, draw=red,thick, fill=white, minimum size=12pt, inner sep=1.8pt] (v5) at
			({90+4*360/13}:6.5cm) {$v_5$};
			\node[circle, draw=red,thick, fill=white, minimum size=12pt, inner sep=1.8pt] (v6) at
			({90+5*360/13}:6.5cm) {$v_6$};
			\node[circle, draw, fill=white, minimum size=12pt, inner sep=1.8pt] (v7) at
			({90+6*360/13}:6.5cm) {$v_7$};
			\node[circle, draw, fill=white, minimum size=12pt, inner sep=1.8pt] (v8) at
			({90+7*360/13}:6.5cm) {$v_8$};	
			\node[circle, draw=red,thick, fill=white, minimum size=12pt, inner sep=1.8pt] (v9) at
			({90+8*360/13}:6.5cm) {$v_9$};	
			\node[circle, draw=red,thick, fill=white, minimum size=12pt,inner sep=0.6pt] (v10) at
			({90+9*360/13}:6.5cm) {$v_{10}$};	
			\node[circle, draw, fill=white, minimum size=12pt,inner sep=0.6pt] (v11) at
			({90+10*360/13}:6.5cm) {$v_{11}$};	
			\node[circle, draw, fill=white, minimum size=12pt,inner sep=0.6pt] (v12) at
			({90+11*360/13}:6.5cm) {$v_{12}$};	
			\node[circle, draw=red,thick, fill=white, minimum size=12pt,inner sep=0.6pt] (v13) at
			({90+12*360/13}:6.5cm) {$v_{13}$};		
			\draw (v1) -- (v2);	\draw (v1) -- (v3);	\draw (v1) -- (v4);	\draw[thick,red]
			(v1) -- (v5);
			\draw (v2) -- (v3);	\draw (v2) -- (v4);
			\draw (v2) -- (v6);	\draw (v3) -- (v7);	\draw (v3) -- (v8);
			\draw (v4) -- (v7);	\draw (v4) -- (v8);	\draw[thick,red] (v5) -- (v9);
			\draw[red, line width=1.1pt] (v5) -- (v10);\draw (v5) -- (v11);
			\draw[thick,red] (v6) -- (v9);	\draw[thick,red] (v6) -- (v10);
			\draw (v6) -- (v11);\draw (v7) -- (v9);	\draw (v7) -- (v12);\draw (v8) --
			(v9);	\draw (v8) -- (v13);
			\draw (v10) -- (v12);\draw[thick,red] (v10) -- (v13);	\draw (v11) -- (v12);
			\draw (v11) -- (v13);\draw (v12) -- (v13);
		\end{scope}
		\begin{scope}[xshift=21cm,yshift=5.5cm,scale=1.2]
			\draw[red, line width=2.5pt, line cap=round, line join=round]
			(-0.5,0.95) -- (0.3,0) ;
			\draw[red, line width=2.5pt, line cap=round, line join=round]	(-0.5,0) --
			(0.3,0.95);
		\end{scope}
	\end{tikzpicture}
	\caption{(left) A-graph $A_6$ --- a $6$-vertex graph without a FOR(3) [Lemma \ref{grafy-bez-for} (d)]. (right) $A_6$
		highlighted in graph $M_{5059}$. } \label{A-graph}
\end{figure}

We just note that it is a trivial task to find an OR in $d=3$ for graphs $A_6,
K_5,H_5$.

Notably, $C_4$ is isomorphic to the wheel graph $W_4$. Wheel graph $W_n$ is given by an $(n-1)$-cycle with an extra vertex, called a hub, adjacent to all other vertices.

\subsubsection{Forbidden induced subgraph characterization of graphs with FOR(3)}

We define the following set
\begin{align}\label{obstrukcja}
	\mathcal{O}_3=\{C_4,H_5,K_5,A_6\}.
\end{align}

We have verified through a direct automated search (see \cite{pliki}) that
$\mathcal{O}_3$ is an obstruction set for $13$-vertex $4$-regular graphs
admitting a FOR(3). Interestingly, the latter family consists of a single graph
-- $M_{5057}$. In the next subsection we present a FOR(3) for this graph and
argue that the graph cannot be a LOG for a minimal GUPB.

The results of the elimination procedure are shown in Tab.
\ref{eliminacja-przez-obstruction} and in more detail in Tab. \ref{tabelka-summary} in Appendix \ref{details-13-4}.

\begin{widetext}
	\begin{center}
		\begin{table}[h]
			\begin{tabular}{ccccc}
				&	\makecell{\boxed{\textit{$13$-vertex $4$-regular connected graphs}}\\{}}&& \\
				\hline\hline 
				\makecell{forbidden \\ induced  \\ subgraph }
				&\makecell{\# graphs w/ \\ induced subgraph } & \makecell{cumulative \\ \#
					eliminated }
				& \#  left\\ \hline \\
				A-graph $A_6$ & $10\:672$  &  $10\:672$  & $106$ \\\\
				Kite graph $K_5$ & $8\:919$  &  $10\:767$  & $11$  \\
				\\
				House graph $H_5$ & $10\:662$  &  $10\:776$  & $2$ \\\\
				$4$-clique $C_4$ & $671$  &  $10\:777$  & $1$     \\ \\
				\hline\hline
			\end{tabular}
			\caption{Summary of the forbidden induced subgraph characterization of $13$-vertex $4$-regular graphs
				with a FOR(3). Obstruction set $\mathcal{O}_3$ [Eq. \eqref{obstrukcja}] defines a single
				graph.}\label{eliminacja-przez-obstruction}
		\end{table}
	\end{center}
\end{widetext}

We have also verified that no three-element subset of $\mathcal{O}_3$ is enough
to characterize the set of graphs with a FOR(3). 
It is possible that
$\mathcal{O}_3$ is in fact minimal with regard to the considered task.

\subsubsection{FOR(3) for $M_{5 057}$}\label{graff-5057}

Graph  $M_{5 057}$ (see Fig. \ref{graf-5057}) has a FOR(3). An exemplary one is as follows: 
%
\begin{align} \label{orto-repra-5057}
	& \ket{v_1}=    \ket{0}, \quad \ket{v_2}=  \ket{1},\non 
	& \ket{v_3}=  \ket{v_4}=
	\ket{2} ,        \nonumber \\
	& \ket{v_5}=   2\ket{1}-3\ket{2},   \quad \ket{v_6}=  2\ket{0}-\ket{2},       
	\nonumber \\
	& \ket{v_7}=     \ket{0}-\ket{1},  \quad \ket{v_8}=     \ket{0}+\ket{1},    \\
	& \ket{v_9}= \ket{v_{10}}=\ket{v_{11}}= \ket{0}+3\ket{1}+2\ket{2},         
	\nonumber \\
	& \ket{v_{12}}= \ket{0}+\ket{1}-2\ket{2},  \quad \ket{v_{13}}= 
	\ket{0}-\ket{1}+\ket{2}        .\nonumber
\end{align}
One immediately notices that  this representation is not permissible according
to Fact \ref{general-fakt} because there is a $5$-tuple of vectors whose span
is only $2$-dimensional, precisely:
\begin{align}\label{za-malo}
	\dim \mathrm{span} \{\ket{v_3},\ket{v_4},\ket{v_9},\ket{v_{10}},\ket{v_{11}}\}=2.
\end{align}

The FOR we gave above is not unique, but the presence of repeating vectors
leading to the above property  is common to all representations of the graph.

\begin{lem}\label{repra-5057}
	In any  FOR(3) of $M_{5057}$ it holds:
	%
	(a) $	\ket{v_3}=  \ket{v_4}$ and (b) $	\ket{v_9}=  \ket{v_{10}}=\ket{v_{11}}$.
\end{lem}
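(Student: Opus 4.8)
The plan is to reduce both claims to one-line observations about orthogonal complements in $\mathbb{C}^{3}$, using only the local adjacency structure of $M_{5057}$ (readable from its adjacency matrix, cf.\ Fig.~\ref{graf-5057}) together with faithfulness. The structural facts I will invoke are: $v_{3}$ and $v_{4}$ are non-adjacent and each adjacent to both $v_{1}$ and $v_{2}$, while $\{v_{1},v_{2}\}\in E$; and $v_{9},v_{10},v_{11}$ are pairwise non-adjacent and each adjacent to both $v_{5}$ and $v_{6}$, while $\{v_{5},v_{6}\}\notin E$ but $\mathcal{N}(v_{5})\neq\mathcal{N}(v_{6})$ --- indeed $v_{1}\in\mathcal{N}(v_{5})\setminus\mathcal{N}(v_{6})$. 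I will also use the elementary fact that, in a faithful OR, two proportional vectors are necessarily carried by vertices with the same neighbourhood: if $\ket{u}\propto\ket{w}$, then any third vertex is orthogonal to $\ket{u}$ exactly when it is orthogonal to $\ket{w}$.

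For part (a): in any FOR(3), the nonzero vectors $\ket{v_{3}}$ and $\ket{v_{4}}$ are each orthogonal to $\ket{v_{1}}$ and to $\ket{v_{2}}$. Because $\{v_{1},v_{2}\}\in E$, the vectors $\ket{v_{1}}$ and $\ket{v_{2}}$ are orthogonal, hence linearly independent, so $\mathrm{span}\{\ket{v_{1}},\ket{v_{2}}\}$ is two-dimensional and its orthogonal complement in $\mathbb{C}^{3}$ is a single line. Both $\ket{v_{3}}$ and $\ket{v_{4}}$ lie on that line, so $\ket{v_{3}}=\ket{v_{4}}$. Equivalently, $\{v_{1},v_{2},v_{3},v_{4}\}$ induces a diamond graph with $v_{3},v_{4}$ its two degree-two vertices, and the conclusion is exactly Lemma~\ref{4cykl}(ii).

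For part (b): likewise, $\ket{v_{9}},\ket{v_{10}},\ket{v_{11}}$ are each orthogonal to $\ket{v_{5}}$ and to $\ket{v_{6}}$. Now $\ket{v_{5}}$ and $\ket{v_{6}}$ cannot be proportional: if they were, faithfulness would force $\mathcal{N}(v_{5})=\mathcal{N}(v_{6})$, contradicting $v_{1}\in\mathcal{N}(v_{5})\setminus\mathcal{N}(v_{6})$. Hence $\mathrm{span}\{\ket{v_{5}},\ket{v_{6}}\}$ is again two-dimensional with a one-dimensional orthogonal complement, and the three nonzero vectors $\ket{v_{9}},\ket{v_{10}},\ket{v_{11}}$ must all lie on it, i.e.\ $\ket{v_{9}}=\ket{v_{10}}=\ket{v_{11}}$. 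Equivalently, $\{v_{5},v_{6},v_{9},v_{10}\}$ and $\{v_{5},v_{6},v_{9},v_{11}\}$ each induce a square graph, and Lemma~\ref{4cykl}(i) yields $\ket{v_{9}}=\ket{v_{10}}$ and $\ket{v_{9}}=\ket{v_{11}}$ once $\ket{v_{5}}\neq\ket{v_{6}}$ has been excluded.

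The only substantive part of the argument is extracting the right local structure from $M_{5057}$ --- namely that $\{v_{3},v_{4}\}$ and $\{v_{9},v_{10},v_{11}\}$ are sets of ``twin'' vertices with equal neighbourhoods --- after which both statements drop out of dimension counting in $\mathbb{C}^{3}$. The one point requiring care is ruling out $\ket{v_{5}}\propto\ket{v_{6}}$ in part (b): this genuinely uses faithfulness (a non-faithful OR could let $\ket{v_{5}}$ and $\ket{v_{6}}$ span only a line, breaking the argument), but faithfulness is precisely the hypothesis here, since we work with FOR(3).
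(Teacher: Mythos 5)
Your proof is correct and follows essentially the same route as the paper: part (a) is exactly the diamond-graph observation of Lemma \ref{4cykl}(ii) on $(v_1,v_2,v_3,v_4)$, and part (b) hinges on the same key fact the paper uses, namely that faithfulness together with $\mathcal{N}(v_5)\neq\mathcal{N}(v_6)$ (witnessed by $v_1$) rules out $\ket{v_5}\propto\ket{v_6}$. The only cosmetic difference is that you finish (b) with a single dimension count on the orthogonal complement of $\mathrm{span}\{\ket{v_5},\ket{v_6}\}$ in $\mathbb{C}^3$, whereas the paper chains Lemma \ref{4cykl}(i) through the two squares $(v_5,v_{10},v_6,v_{11})$ and $(v_5,v_9,v_6,v_{11})$; the content is the same.
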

\begin{proof}
	Case (a) is obvious as vertices  $v_3$ and $v_4$ belong to the diamond graph with vertices $(v_1,v_2,v_3,v_4)$  (highlighted in orange in Fig.
	\ref{graf-5057}). 
	
	For case (b), we need to consider (bipartite complete) subgraph with vertices
	$(v_5,v_6,v_{9},v_{10},v_{11})$ and edges $\{v_i,v_j\}$ with $i=5,6$,
	$j=9,10,11$ (highlighted in bold in Fig. \ref{graf-5057}). We will now exploit
	Lemma \ref{4cykl} again.
	Consider the square graph  with vertices $(v_5, v_{10},v_6,v_{11})$ (blue in Fig.
	\ref{graf-5057}). It must be that  (i) $\ket{v_5}=\ket{v_6}$ or (ii)
	$\ket{v_{10}}=\ket{v_{11}}$. Since $\mathcal{N}(v_5)\ne \mathcal{N}(v_6)$ in
	graph $M_{5057}$, case (i) can be discarded and case (ii) holds.
	Considering the square graph with $(v_5,v_9,v_6,v_{11})$ (thick, dashed) 
	immediately leads then to $\ket{v_{9}}=\ket{v_{11}}$. Case (b) holds. As a note,
	observe that we could equivalently take $v_{12}$ and $v_{13}$ instead of $v_5$
	and $v_6$ to arrive at the same conclusion.
	%
	%
\end{proof}
\begin{figure}[htp]
	\begin{tikzpicture}[scale=0.5]
		\centering
		\begin{scope}
			\GraphInit[vstyle=Classic]
			\SetGraphUnit{2.5}
			\node[circle, draw=orange,thick, fill=white, minimum size=10pt, inner
			sep=1.8pt] (v1) at ({90+0*360/13}:6.5cm) {$v_1$};
			\node[circle, draw=orange,thick, fill=white, minimum size=12pt, inner
			sep=1.8pt] (v2) at ({90+1*360/13}:6.5cm) {$v_2$};
			\node[circle, draw=orange,thick, fill=brown, minimum size=12pt, inner
			sep=1.8pt] (v3) at ({90+2*360/13}:6.5cm) {$v_3$};
			\node[circle, draw=orange,thick, fill=brown, minimum size=12pt, inner
			sep=1.8pt] (v4) at ({90+3*360/13}:6.5cm) {$v_4$};
			\node[circle, dashed, draw=blue,thick, fill=white, minimum size=12pt, inner
			sep=1.8pt] (v5) at ({90+4*360/13}:6.5cm) {$v_5$};
			\node[circle, dashed, draw=blue,thick, fill=white, minimum size=12pt, inner
			sep=1.8pt] (v6) at ({90+5*360/13}:6.5cm) {$v_6$};
			\node[circle, draw, fill=white, minimum size=12pt, inner sep=1.8pt] (v7) at
			({90+6*360/13}:6.5cm) {$v_7$};
			\node[circle, draw, fill=white, minimum size=12pt, inner sep=1.8pt] (v8) at
			({90+7*360/13}:6.5cm) {$v_8$};	
			\node[circle, dashed,draw=black,thick, fill=lime, minimum size=12pt, inner
			sep=1.8pt] (v9) at ({90+8*360/13}:6.5cm) {$v_9$};	
			\node[circle, draw=blue,thick, fill=lime, minimum size=12pt,inner sep=0.6pt]
			(v10) at ({90+9*360/13}:6.5cm) {$v_{10}$};	
			\node[circle, dashed, draw=blue,thick, fill=lime, minimum size=12pt,inner
			sep=0.6pt] (v11) at ({90+10*360/13}:6.5cm) {$v_{11}$};	
			\node[circle, draw, fill=white, minimum size=12pt,inner sep=0.6pt] (v12) at
			({90+11*360/13}:6.5cm) {$v_{12}$};	
			\node[circle, draw, fill=white, minimum size=12pt,inner sep=0.6pt] (v13) at
			({90+12*360/13}:6.5cm) {$v_{13}$};		
			\draw[orange,thick] (v1) -- (v2);		\draw[orange,thick] (v1) -- (v3);
			\draw[orange,thick] (v1) -- (v4);		\draw (v1) -- (v5);
			\draw[orange,thick] (v2) -- (v3);		\draw[orange,thick] (v2) -- (v4);
			\draw (v2) -- (v6);		\draw (v3) -- (v7);
			\draw (v3) -- (v8);		\draw (v4) -- (v7);
			\draw (v4) -- (v8);		\draw[line width=1.4pt,dashed] (v5) -- (v9);
			\draw[blue,line width =1.4pt,dashed] (v5) -- (v11); 	\draw[line width=1.4pt,
			blue](v5) -- (v10);
			\draw[line width=1.7pt,dashed] (v6) -- (v9);		\draw[blue,thick] (v6) -- (v10);
			\draw[blue,line width=1.4pt,dashed] (v6) -- (v11);		\draw (v7) -- (v8);
			\draw (v7) -- (v12);		\draw (v8) -- (v13);
			\draw (v9) -- (v12);		\draw (v9) -- (v13);
			\draw (v10) -- (v12);		\draw (v10) -- (v13);
			\draw (v11) -- (v12);		\draw (v11) -- (v13);
		\end{scope}
		\begin{scope}[xshift=8.5cm,yshift=5.5cm,scale=1.3]
			\draw[ufogreen, line width=2.5pt, line cap=round, line join=round]
			(-0.3,0.4) -- (0,0) -- (0.6,0.95);
		\end{scope}
	\end{tikzpicture}
	\caption{Graph $M_{5 057}$ with the relevant subgraphs highlighted (see the
		proof of Lemma \ref{repra-5057}). There exist FORs(3) for this graph,
		e.g., the one given in Eq. (\ref{orto-repra-5057}); nevertheless, none of them
		complies with the requirement given in Fact \ref{general-fakt} as the span of
		vectors corresponding to the colored vertices is always only $2$-dimensional.}
	\label{graf-5057}
\end{figure}
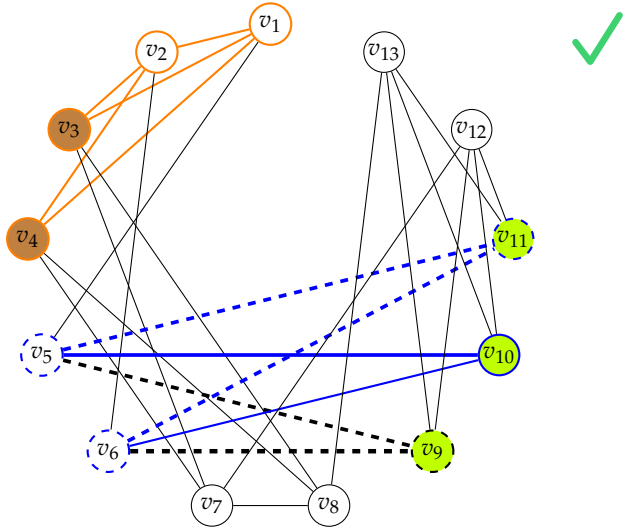

In consequence, candidate graph $M_{5057}$, while admitting FORs(3), gets discarded by  Fact
\ref{general-fakt}.

\subsection{Thirteen element three-qutrit GUPB -- conclusion} \label{konkluzja}

We have found  that only two $4$-regular graphs with $13$ vertices have a
faithful orthogonal representation in dimension three --- $D_6 + D_{7,a}$
(disconnected) and $M_{5057}$ (connected).
However, none of them can be a LOG for a GUPB by Fact \ref{general-fakt}. This
leads to the main result of the present paper.
\newline
\newline
\noindent \textbf{Main result.}	\textit{There does not exist a three-qutrit GUPB
	with a cardinality of $13$. }

\section{General GUPBs } \label{ogolniej}

It is evident that the proposed method of forbidden induced subgraph
characterization of graphs with a FOR is not fundamentally limited to the
particular case of three-partite qutrit GUPBs of the so-far considered size. In principle, it
can be applied universally for arbitrary $N$ and $d$ (or even different local
dimensions for each subsystem; for clarity, we will omit these cases in further
discussion) and cardinalities of the sets of vectors (not necessarily the
minimal ones). The actual implementation in such cases, however, soon becomes
intractable as the complexity of the problem grows very quickly (cf. Sec. \ref{ququarty}).  
In this section, we provide some initial results regarding the cases outside the one considered in Sec. \ref{main-section}.

\subsection{Qutrits}

We can use  set $\mathcal{O}_3$ to discard a number  of candidate graphs (with $\girth=3,4$) in any case with
$d=3$.  
However, this set is not a universal obstruction set. 
For example, for $14$-vertex $4$-regular graphs (Sec. \ref{14-vert-4-reg} and Appendix \ref{eny}) we have identified  an additional structure which should be included; it is relatively large as its vertex set has as many as eleven elements. On the other hand, we found that for some classes of $3$-regular graphs (Sec. \ref{remarksy} and Appendices \ref{3-reg-connected}-\ref{3-reg-disconnected}) the set can be reduced, which may also involve its modification.

\subsubsection{$4$-regular graphs with $14$ vertices}\label{14-vert-4-reg}

We have performed an analysis analogous to the presented above in case of $14$-vertex $4$-regular graphs. 
One finds that there are $25$ disconnected \cite{disconnected-quartic} and $88\:186$ connected \cite{connected-quartic} such graphs. 

In case of disconnected graphs we now have three types: (a) type I ($16$ graphs): $5+9$ vertices, (b) type II ($6$ graphs): $6+8$ vertices, and (c) type III ($3$ graphs): $7+7$ vertices. 

We can use directly some of the results from Sec. \ref{rozlaczne} . First, we can immediately discard type I graphs. Further, we find that there is a single type III graph with a FOR(3): $D_{7,a}+D_{7,a}$ (cf. Fig. \ref{7-vertices}). Finally, considering type II graphs, we recall that there is a unique $6$-vertex $4$-regular graph, $D_6$, which does admit a FOR(3), so we need to check the $6$ graphs $G_{8,i}$ with $8$ vertices (see Fig. \ref{8-vertices}). We find that graphs $G_{8,1}$ through $G_{8,5}$ have an induced subgraph isomorphic to $H_5$ and as such do not admit FORs(3). It is trivial to realize that $G_{8,6}$ (which is a complete bipartite graph $K_{4,4}$) actually admits a FOR(2) as we can set $\repr{2}=\repr{3}=\repr{4}=\repr{5}=\ket{0}$ and $\repr{1}=\repr{6}=\repr{7}=\repr{8}=\ket{1}$. Clearly, this is not a unique (up to a unitary) FOR(3), however, it is easy to show that in any FOR(3) there will be four identical vectors (in Fig. \ref{8-vertices}, these will be vectors corresponding to green or yellow vertices). Consider vertex $v_1$; from $\mathcal{N}(v_1)=(v_2,v_3,v_4,v_5)$ we infer that $\mathfrak{d}:=\dim\mathrm{span} \{\repr{2},\repr{3},\repr{4},\repr{5}\}  \le 2$. If $\mathfrak{d}=2$, then it obviously holds $\dim\mathrm{span} \{\repr{1},\repr{6},\repr{7},\repr{8}\} =1$ and the claim follows; if $\mathfrak{d}=1$, the result trivially holds.

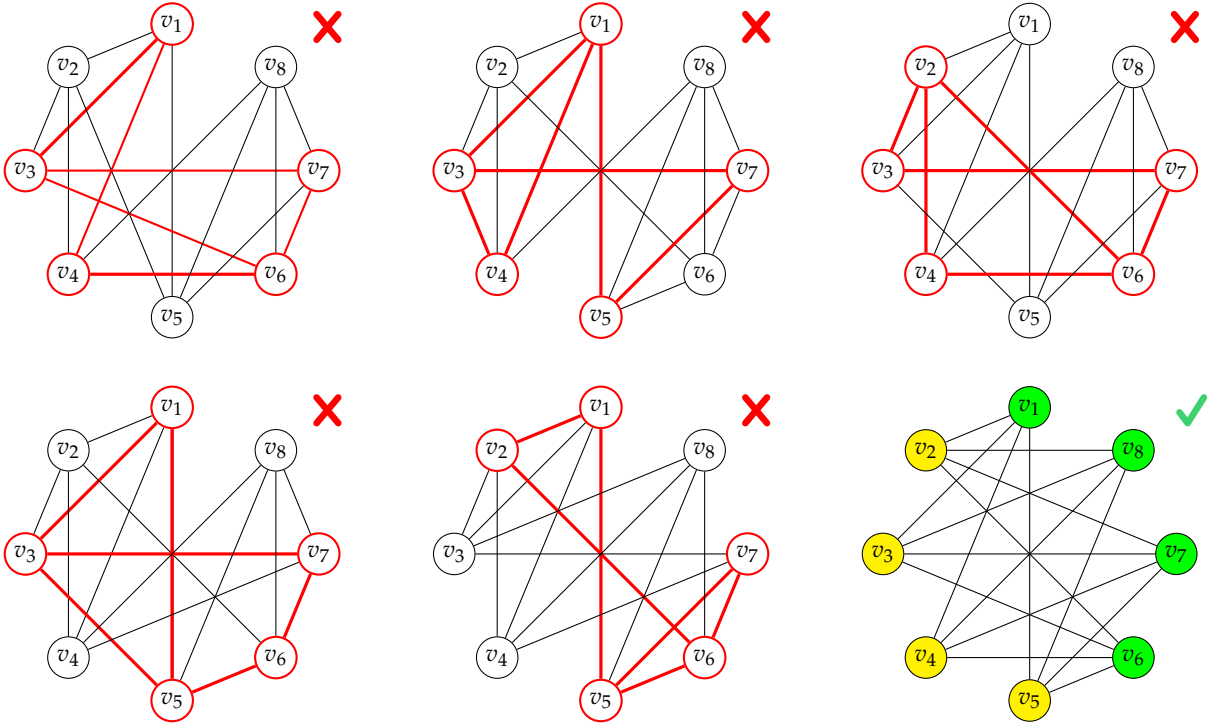
\begin{figure}[htp]
	\begin{tikzpicture}[scale=0.3]
		\centering
		\begin{scope}[xshift=7cm,yshift=5.8cm,scale=1.2]
			\draw[red, line width=2.5pt, line cap=round, line join=round]
			(-0.5,0.95) -- (0.3,0) ;
			\draw[red, line width=2.5pt, line cap=round, line join=round]	(-0.5,0) --
			(0.3,0.95);
		\end{scope}
		\begin{scope}[xshift=26cm,yshift=5.8cm,scale=1.2]
			\draw[red, line width=2.5pt, line cap=round, line join=round]
			(-0.5,0.95) -- (0.3,0) ;
			\draw[red, line width=2.5pt, line cap=round, line join=round]	(-0.5,0) --
			(0.3,0.95);
		\end{scope}
		\begin{scope}[xshift=45cm,yshift=5.8cm,scale=1.2]
			\draw[red, line width=2.5pt, line cap=round, line join=round]
			(-0.5,0.95) -- (0.3,0) ;
			\draw[red, line width=2.5pt, line cap=round, line join=round]	(-0.5,0) --
			(0.3,0.95);
		\end{scope}
		\begin{scope}[xshift=7cm,yshift=-11.2cm,scale=1.2]
			\draw[red, line width=2.5pt, line cap=round, line join=round]
			(-0.5,0.95) -- (0.3,0) ;
			\draw[red, line width=2.5pt, line cap=round, line join=round]	(-0.5,0) --
			(0.3,0.95);
		\end{scope}
		\begin{scope}[xshift=26cm,yshift=-11.2cm,scale=1.2]
			\draw[red, line width=2.5pt, line cap=round, line join=round]
			(-0.5,0.95) -- (0.3,0) ;
			\draw[red, line width=2.5pt, line cap=round, line join=round]	(-0.5,0) --
			(0.3,0.95);
		\end{scope}
		\begin{scope}[xshift=45cm,yshift=-11.2cm,scale=1.2]
			\draw[ufogreen, line width=2.5pt, line cap=round, line join=round]
		(-0.3,0.4) -- (0,0) -- (0.6,0.95);
		\end{scope}
		\begin{scope}
			\GraphInit[vstyle=Classic]
			\SetGraphUnit{2.5}
			\node[circle, draw=red,thick, fill=white, minimum size=10pt, inner sep=1.8pt] (v1) at ({90+0*360/8}:6.5cm) {$v_1$};
			\node[circle, draw, fill=white, minimum size=12pt, inner sep=1.8pt] (v2) at ({90+1*360/8}:6.5cm) {$v_2$};
			\node[circle, draw=red,thick, fill=white, minimum size=12pt, inner sep=1.8pt] (v3) at ({90+2*360/8}:6.5cm) {$v_3$};
			\node[circle, draw=red,thick, fill=white, minimum size=12pt, inner sep=1.8pt] (v4) at ({90+3*360/8}:6.5cm) {$v_4$};
			\node[circle, draw, fill=white, minimum size=12pt, inner sep=1.8pt] (v5) at ({90+4*360/8}:6.5cm) {$v_5$};
			\node[circle, draw=red,thick, fill=white, minimum size=12pt, inner sep=1.8pt] (v6) at ({90+5*360/8}:6.5cm) {$v_6$};
			\node[circle, draw=red,thick, fill=white, minimum size=12pt, inner sep=1.8pt] (v7) at ({90+6*360/8}:6.5cm) {$v_7$};
			\node[circle, draw, fill=white, minimum size=12pt, inner sep=1.8pt] (v8) at ({90+7*360/8}:6.5cm) {$v_8$};			
		\draw(v1) -- (v2);		\draw[red, line width=1.2pt](v1) -- (v3);		\draw[red, thick] (v1) -- (v4);		\draw (v1) -- (v5);
		\draw (v2) -- (v3);		\draw (v2) -- (v4);		\draw (v2) -- (v5);
		\draw[red, thick] (v3) -- (v6);		\draw[red, thick] (v3) -- (v7);
		\draw[red, line width=1.2pt] (v4) -- (v6);		\draw (v4) -- (v8);		\draw (v5) -- (v7);		\draw (v5) -- (v8);
		\draw[red, thick] (v6) -- (v7);		\draw (v6) -- (v8);
		\draw (v7) -- (v8);
		\end{scope}
		\begin{scope}[xshift=19cm]
			\GraphInit[vstyle=Classic]
			\SetGraphUnit{2.5}
			\node[circle, draw=red,thick, fill=white, minimum size=10pt, inner sep=1.8pt] (v1) at ({90+0*360/8}:6.5cm) {$v_1$};
			\node[circle, draw, fill=white, minimum size=12pt, inner sep=1.8pt] (v2) at ({90+1*360/8}:6.5cm) {$v_2$};
			\node[circle, draw=red,thick, fill=white, minimum size=12pt, inner sep=1.8pt] (v3) at ({90+2*360/8}:6.5cm) {$v_3$};
			\node[circle, draw=red,thick, fill=white, minimum size=12pt, inner sep=1.8pt] (v4) at ({90+3*360/8}:6.5cm) {$v_4$};
			\node[circle, draw=red,thick, fill=white, minimum size=12pt, inner sep=1.8pt] (v5) at ({90+4*360/8}:6.5cm) {$v_5$};
			\node[circle, draw, fill=white, minimum size=12pt, inner sep=1.8pt] (v6) at ({90+5*360/8}:6.5cm) {$v_6$};
			\node[circle, draw=red,thick, fill=white, minimum size=12pt, inner sep=1.8pt] (v7) at ({90+6*360/8}:6.5cm) {$v_7$};
			\node[circle, draw, fill=white, minimum size=12pt, inner sep=1.8pt] (v8) at ({90+7*360/8}:6.5cm) {$v_8$};			
		\draw (v1) -- (v2);		\draw[red, line width=1.2pt] (v1) -- (v3);		\draw[red, line width=1.2pt] (v1) -- (v4);		\draw[red, line width=1.2pt] (v1) -- (v5);
		\draw (v2) -- (v3);		\draw (v2) -- (v4);		\draw (v2) -- (v6);
		\draw[red, line width=1.2pt] (v3) -- (v7); \draw[red, line width=1.2pt](v3) -- (v4);
		\draw (v4) -- (v8);
		\draw (v5) -- (v6);		\draw[red, line width=1.2pt] (v5) -- (v7);		\draw (v5) -- (v8);
		\draw (v6) -- (v7);		\draw (v6) -- (v8);
		\draw (v7) -- (v8);

		\end{scope}
		\begin{scope}[xshift=38cm]
			\GraphInit[vstyle=Classic]
			\SetGraphUnit{2.5}
			\node[circle, draw, fill=white, minimum size=10pt, inner sep=1.8pt] (v1) at ({90+0*360/8}:6.5cm) {$v_1$};
			\node[circle, draw=red,thick, fill=white, minimum size=12pt, inner sep=1.8pt] (v2) at ({90+1*360/8}:6.5cm) {$v_2$};
			\node[circle, draw=red,thick, fill=white, minimum size=12pt, inner sep=1.8pt] (v3) at ({90+2*360/8}:6.5cm) {$v_3$};
			\node[circle, draw=red,thick, fill=white, minimum size=12pt, inner sep=1.8pt] (v4) at ({90+3*360/8}:6.5cm) {$v_4$};
			\node[circle, draw, fill=white, minimum size=12pt, inner sep=1.8pt] (v5) at ({90+4*360/8}:6.5cm) {$v_5$};
			\node[circle, draw=red,thick, fill=white, minimum size=12pt, inner sep=1.8pt] (v6) at ({90+5*360/8}:6.5cm) {$v_6$};
			\node[circle, draw=red,thick, fill=white, minimum size=12pt, inner sep=1.8pt] (v7) at ({90+6*360/8}:6.5cm) {$v_7$};
			\node[circle, draw, fill=white, minimum size=12pt, inner sep=1.8pt] (v8) at ({90+7*360/8}:6.5cm) {$v_8$};			
			\draw(v1)--(v2);			\draw (v1)--(v3);			\draw (v1)--(v4);			\draw (v1)--(v5);
			\draw[red, line width=1.2pt] (v2)--(v3);			\draw[red, line width=1.2pt] (v2)--(v4);			\draw[red, line width=1.2pt] (v2)--(v6);
			\draw (v3)--(v5);			\draw[red, line width=1.2pt] (v3)--(v7);
			\draw[red, line width=1.2pt] (v4)--(v6);			\draw (v4)--(v8);
			\draw (v5)--(v7);			\draw (v5)--(v8);
			\draw[red, line width=1.2pt] (v6)--(v7);			\draw (v6)--(v8);
			\draw (v7)--(v8);
		\end{scope}
			\begin{scope}[yshift=-17cm]
			\GraphInit[vstyle=Classic]
			\SetGraphUnit{2.5}
			\node[circle, draw=red,thick, fill=white, minimum size=10pt, inner sep=1.8pt] (v1) at ({90+0*360/8}:6.5cm) {$v_1$};
			\node[circle, draw, fill=white, minimum size=12pt, inner sep=1.8pt] (v2) at ({90+1*360/8}:6.5cm) {$v_2$};
			\node[circle, draw=red,thick, fill=white, minimum size=12pt, inner sep=1.8pt] (v3) at ({90+2*360/8}:6.5cm) {$v_3$};
			\node[circle, draw, fill=white, minimum size=12pt, inner sep=1.8pt] (v4) at ({90+3*360/8}:6.5cm) {$v_4$};
			\node[circle, draw=red,thick, fill=white, minimum size=12pt, inner sep=1.8pt] (v5) at ({90+4*360/8}:6.5cm) {$v_5$};
			\node[circle, draw=red,thick, fill=white, minimum size=12pt, inner sep=1.8pt] (v6) at ({90+5*360/8}:6.5cm) {$v_6$};
			\node[circle, draw=red,thick, fill=white, minimum size=12pt, inner sep=1.8pt] (v7) at ({90+6*360/8}:6.5cm) {$v_7$};
			\node[circle, draw, fill=white, minimum size=12pt, inner sep=1.8pt] (v8) at ({90+7*360/8}:6.5cm) {$v_8$};			
			\draw (v1)--(v2);			\draw[red, line width=1.2pt] (v1)--(v3);			\draw (v1)--(v4);			\draw[red, line width=1.3pt] (v1)--(v5);
			\draw (v2)--(v3);			\draw (v2)--(v4);			\draw (v2)--(v6);
			\draw[red, line width=1.2pt] (v3)--(v5);			\draw[red, line width=1.2pt] (v3)--(v7);
			\draw (v4)--(v7);			\draw (v4)--(v8);
			\draw[red, line width=1.2pt] (v5)--(v6);			\draw (v5)--(v8);
			\draw[red, line width=1.2pt] (v6)--(v7);			\draw (v6)--(v8);
			\draw (v7)--(v8);
		\end{scope}
		\begin{scope}[xshift=19cm,yshift=-17cm]
			\GraphInit[vstyle=Classic]
			\SetGraphUnit{2.5}
			\node[circle, draw=red,thick, fill=white, minimum size=10pt, inner sep=1.8pt] (v1) at ({90+0*360/8}:6.5cm) {$v_1$};
			\node[circle, draw=red,thick, fill=white, minimum size=12pt, inner sep=1.8pt] (v2) at ({90+1*360/8}:6.5cm) {$v_2$};
			\node[circle, draw, fill=white, minimum size=12pt, inner sep=1.8pt] (v3) at ({90+2*360/8}:6.5cm) {$v_3$};
			\node[circle, draw, fill=white, minimum size=12pt, inner sep=1.8pt] (v4) at ({90+3*360/8}:6.5cm) {$v_4$};
			\node[circle, draw=red,thick, fill=white, minimum size=12pt, inner sep=1.8pt] (v5) at ({90+4*360/8}:6.5cm) {$v_5$};
			\node[circle, draw=red,thick, fill=white, minimum size=12pt, inner sep=1.8pt] (v6) at ({90+5*360/8}:6.5cm) {$v_6$};
			\node[circle, draw=red,thick, fill=white, minimum size=12pt, inner sep=1.8pt] (v7) at ({90+6*360/8}:6.5cm) {$v_7$};
			\node[circle, draw, fill=white, minimum size=12pt, inner sep=1.8pt] (v8) at ({90+7*360/8}:6.5cm) {$v_8$};			
		\draw[red, line width=1.2pt] (v1)--(v2);		\draw (v1)--(v3);		\draw (v1)--(v4);		\draw[red, line width=1.2pt] (v1)--(v5);
		\draw (v2)--(v3);		\draw (v2)--(v4);		\draw[red, line width=1.2pt] (v2)--(v6);
		\draw (v3)--(v7);		\draw (v3)--(v8);
		\draw (v4)--(v7);		\draw (v4)--(v8);
		\draw[red, line width=1.2pt] (v5)--(v6);		\draw[red, line width=1.2pt] (v5)--(v7);		\draw (v5)--(v8);
		\draw[red, line width=1.2pt] (v6)--(v7);		\draw (v6)--(v8);
		\end{scope}
		\begin{scope}[xshift=38cm,yshift=-17cm]
			\GraphInit[vstyle=Classic]
			\SetGraphUnit{2.5}
			\node[circle, draw, fill=green, minimum size=10pt, inner sep=1.8pt] (v1) at ({90+0*360/8}:6.5cm) {$v_1$};
			\node[circle, draw, fill=yellow, minimum size=12pt, inner sep=1.8pt] (v2) at ({90+1*360/8}:6.5cm) {$v_2$};
			\node[circle, draw, fill=yellow, minimum size=12pt, inner sep=1.8pt] (v3) at ({90+2*360/8}:6.5cm) {$v_3$};
			\node[circle, draw, fill=yellow, minimum size=12pt, inner sep=1.8pt] (v4) at ({90+3*360/8}:6.5cm) {$v_4$};
			\node[circle, draw, fill=yellow, minimum size=12pt, inner sep=1.8pt] (v5) at ({90+4*360/8}:6.5cm) {$v_5$};
			\node[circle, draw, fill=green, minimum size=12pt, inner sep=1.8pt] (v6) at ({90+5*360/8}:6.5cm) {$v_6$};
			\node[circle, draw, fill=green, minimum size=12pt, inner sep=1.8pt] (v7) at ({90+6*360/8}:6.5cm) {$v_7$};
			\node[circle, draw, fill=green, minimum size=12pt, inner sep=1.8pt] (v8) at ({90+7*360/8}:6.5cm) {$v_8$};			
		\draw (v1)--(v2);	\draw (v1)--(v3);		\draw (v1)--(v4);		\draw (v1)--(v5);
		\draw (v2)--(v6);		\draw (v2)--(v7);		\draw (v2)--(v8);
		\draw (v3)--(v6);		\draw (v3)--(v7);		\draw (v3)--(v8);
		\draw (v4)--(v6);		\draw (v4)--(v7);		\draw (v4)--(v8);
		\draw (v5)--(v6);		\draw (v5)--(v7);		\draw (v5)--(v8);
		\end{scope}
	\end{tikzpicture}
	\caption{All six $4$-regular graphs with $8$ vertices. Top row (from the left): $G_{8,1}$,  $G_{8,2}$, $G_{8,3}$, bottom row (from the left): $G_{8,4}$,  $G_{8,5}$, $G_{8,6}$. Induced subgraphs of $G_{8,i}$, $i=1,2,3,4,5$, which are  isomorphic to $H_5$ are highlighted in red; none of these graphs can have a FOR(3). $G_{8,6}$ does not have any of the graphs from $\mathcal{O}_3$ as an induced subgraph. This graph admits a FOR(3) with at least four identical vector; see the main text.} \label{8-vertices}
\end{figure}

Let us now move to the connected graphs. We will denote these graphs by $N_i$ with their enumeration in accordance with the GENREG output \cite{meringer2}. Among the connected graphs, we find that  $220$ graphs have $\girth=4$ and the remaining ones have $\girth=3$. We can use set $\mathcal{O}_3$ from Eq. \eqref{obstrukcja} for graph elimination, which leaves us only $7$ graphs (depicted in Fig. \ref{14-vertices-appendix} in Appendix \ref{eny}):
\begin{align} \label{grafy-N}
 N_{2359}, N_{11743}, N_{36919}, N_{80015}, N_{87949},  N_{87956},
N_{87957}. 
\end{align}
We analyze these graphs in Appendix \ref{eny}, where we find that only $N_{80015}$ does not have a FOR(3). The induced subgraph $\widehat{N}_{11}$ (see Fig. \ref{redraw-graph}), which eliminates this graph is much larger than the elements of $\mathcal{O}_3$ as it has eleven vertices. We have the following obstruction set for the analyzed case (see Tab. \ref{eliminacja-przez-obstruction-2}):
\begin{align}
	\widehat{\mathcal{O}}_3= \mathcal{O}_3 \cup \{\widehat{N}_{11}\} = \{C_4,H_5,K_5,A_6,\widehat{N}_{11} \}.
\end{align}

\subsubsection{Fourteen-element three-qutrit GUPBs - remarks}\label{remarksy}

Let us now turn our attention to the case of $14$-element three-qutrit GUPBs. Exploiting Fact \ref{general-fakt}, we can discard $D_{7,a}+ D_{7,a}$ and $D_6 + D_{8,6}$ (disconnected)
and $N_{2359}$ and $N_{87949}$ (connected) as LOGs in this case. At the same time, we have not been able to find an argument which could also eliminate the remaining connected graphs and thus their status is undecided.

It must be noted that  $4$-regular graphs are not sufficient in this case, i.e., not all LOGs must be of this type. This can be seen  through a counting mismatch:
each $4$‑regular graph on $14$ vertices has $28$ edges, so three such graphs could only account for thrice this number of edges, i.e., $84$, while the complete graph $C_{14}$ has $91$ edges.
However, by the same argument, decomposition of $C_{14}$ into two $4$-regular graphs  and one $5$-regular graph (with $35$ edges) or one $3$-regular graph (with $21$ edges) and two $5$-regular graphs is possible.

We find that there are $509$ connected $3$-regular graphs \cite{connected-cubic}. Among them there are $110$ graphs with $\girth \ge 4$, $9$ with $\girth \ge 5$, including one with $\girth = 6$; the remaining ones have $\girth=3$ (there are no graphs with $\girth \ge 7$). Obstruction set $\mathcal{O}_3$ (in fact, set $\mathcal{O}_3 \setminus C_4$ as the connected $3$-regular graphs do not contain $4$-cliques) defines $57$ graphs -- $42$ graphs with $\girth=3$, six graphs with $\girth=4$, and obviously the nine graphs with  $\girth \ge 5$. We found that all the forty-eight graphs with $\girth=3$ and $\girth=4$ do have permissible FORs(3). This is also the case for the eight graphs with $\girth=5$. We have not been able to show whether the single graph with $\girth=6$ has a FOR(3). In Appendix \ref{3-reg-connected}, we gathered the relevant data.

There are also $31$ disconnected $3$-regular graphs \cite{disconnected-cubic}. These are of three types: (a) $4+10$ vertices ($1\times 19=19$ graphs), (b) $6+8$ vertices ($2\times 5=10$ graphs), (c) $4+4+6$ ($2$ graphs). Types (a) and (c) get immediately eliminated as $3$-regular graphs with four vertices are just $4$-cliques. In Appendix \ref{3-reg-disconnected}, we provide a brief analysis of the graphs that remain after this initial elimination.

There are further  $3\:459\:383$ connected (seven graphs with $\girth=4$ and the remaining ones with $\girth=3$)  and $3$ disconnected $5$-regular graphs (the disconnected ones get immediately discarded as they correspond to the division $6+8$ of vertices; there are no $5$-regular graphs with seven vertices) \cite{quintic-connected,quintic-disconnected}.   We have not analyzed the connected graphs with our approach and leave it for future study. 

In general, by the results of \cite{kiribela}, we find that the degrees of vertices in LOGs are non-trivially bounded and belong to the set $\{3,4,5\}$. 
Using \verb|nauty| \cite{nauty}, we have found that there are $80\;020\;903\;462$ non-isomorphic graphs of this kind. One also needs to bear in mind that the degrees of vertices of two-party orthogonality graphs comply with certain non-trivial conditions.
\begin{widetext}
	\begin{center}
		\begin{table}[h]
			\begin{tabular}{ccccc}
			&	\makecell{\boxed{\textit{$14$-vertex $4$-regular connected graphs}}\\{}}&& \\
				\hline\hline 
				\makecell{forbidden \\ induced  \\ subgraph }
				&\makecell{\# graphs w/ \\ induced subgraph } & \makecell{cumulative \\ \#
					eliminated }
				& \#  left\\ \hline \\
				A-graph $A_6$ & $87\:868$  &  $87\:868$  & $300$ \\\\
				Kite graph $K_5$ & $71\:322$  &  $88\:139$  & $29$  \\
				\\
				House graph $H_5$ & $87\:537$  &  $88\:158$  & $10$ \\\\
				$4$-clique $C_4$ & $4\:184$  &  $88\:161$  & $7$     \\ \\
				$\widehat{N}_{11}$ (Fig. \ref{redraw-graph})  & 33 & $88\:162$ & $6$ \\ \\ 
				\hline\hline
			\end{tabular}
			\caption{Summary of the forbidden induced subgraph characterization of $14$-vertex $4$-regular graphs
				with a FOR(3). Obstruction set $\mathcal{O}_3$ defines seven
				graphs  (first four rows of the table). One more graph gets eliminated by its forbidden induced subgraph $\widehat{N}_{11}$ leaving six graphs, which we show to have FORs(3).}\label{eliminacja-przez-obstruction-2}
		\end{table}
	\end{center}
\end{widetext}

\subsection{Ququarts}

We now provide some partial results concerning bases with ququart subsystems ($d=4$).

\subsubsection{Small graphs without a FOR(4)}

Our procedure exploits the fact that there are necessarily repeated
vectors in FORs for certain smaller subgraphs (the non-existence of FORs for
cliques can also be explained from this point of view). This allows for the
elimination of candidate graphs but also proves useful in the characterization
of FORs for the graphs possessing them. In case of qutrits, the
	elementary structures with the property are the square graph and the diamond graph (Lemma \ref{4cykl}).
In this subsection, we give some simple graphs not admitting FORs in dimension
	$d=4$, priorly identifying graphs with repeating vectors in a FOR(4). 
	
	We begin with a consideration of a wheel graph and graphs obtained from it: envelope graphs $E_6$ and $E_6^|$, and $W_7^{||}$ (see Fig. \ref{wheel}). The derived graphs are direct analogs of the house graph $H_5$ and A-graph $A_6$, respectively.
	
	\begin{lem}\label{wheel-graf}
		(a) Wheel graph $W_4$ has at least one repeated vector in a FOR(4). (b) Envelope graph $E_6$ and graph $E_6^|$ do not have FORs(4). (c) Graph $W^{||}_7$ does not admit a FOR(4).
	\end{lem}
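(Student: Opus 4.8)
The plan is to lift the $d=3$ reasoning of Lemmas~\ref{4cykl} and~\ref{grafy-bez-for} one dimension higher: the wheel takes over the role of the square graph, and the envelope graphs and $W^{||}_7$ the roles of the house graph and the $A$-graph. For part~(a), write $w_1,w_2,w_3,w_4$ for the vertices of the wheel's four-cycle rim in cyclic order and $w_5$ for the hub (see Fig.~\ref{wheel} for the denotations). In any FOR(4) the hub vector is orthogonal to all four rim vectors; since $w_1$ and $w_3$ are moreover adjacent to both $w_2$ and $w_4$, one gets $\ket{w_1},\ket{w_3}\in\mathrm{span}\{\ket{w_2},\ket{w_4},\ket{w_5}\}^{\perp}$. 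Now $\ket{w_5}$ is orthogonal to both $\ket{w_2}$ and $\ket{w_4}$, hence does not lie in their span, so $\dim\mathrm{span}\{\ket{w_2},\ket{w_4},\ket{w_5}\}=1+\dim\mathrm{span}\{\ket{w_2},\ket{w_4}\}$; if $\ket{w_2}$ and $\ket{w_4}$ are not parallel this dimension equals $3$, its orthogonal complement in $\mathbb{C}^4$ is one-dimensional, and $\ket{w_1}=\ket{w_3}$, whereas if they are parallel then $\ket{w_2}=\ket{w_4}$. (Equivalently, one may restrict the representation to the three-dimensional subspace orthogonal to $\ket{w_5}$, in which the rim forms a square, and invoke Lemma~\ref{4cykl}(i).) This establishes~(a) and, more usefully, the dichotomy ``$\ket{w_1}=\ket{w_3}$ or $\ket{w_2}=\ket{w_4}$'' obeyed by every FOR(4) of any graph that contains the wheel.

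For parts~(b) and~(c) I would then mimic the proof of Lemma~\ref{grafy-bez-for}(b),(d). Each of $E_6$, $E_6^{|}$ and $W^{||}_7$ contains the wheel on five of its vertices, the remaining vertices and edges being merely appended, so part~(a) forces one of the two rim diagonals to be a repeated pair in any FOR(4). Reading the adjacency matrices of these graphs off Fig.~\ref{wheel}, one checks that both alternatives clash with faithfulness: in $E_6$, and likewise in $E_6^{|}$, the extra vertex is joined to exactly one endpoint of each diagonal, which forces simultaneously $\mathcal{N}(w_1)\neq\mathcal{N}(w_3)$ and $\mathcal{N}(w_2)\neq\mathcal{N}(w_4)$; in $W^{||}_7$ the two extra vertices are hung, pendant-fashion, on two adjacent rim vertices, one lying on each diagonal, which is precisely the $d=4$ analogue of the two pendants in $A_6$ and separates both pairs as well. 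Since in each case both branches of the dichotomy collapse, none of the three graphs admits a FOR(4).

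The conceptual content is thus contained entirely in part~(a), and the rest is routine bookkeeping; the one point requiring care in~(b)--(c) is that --- just as in the ``$\ket{w_2}\neq\ket{w_5}$'' step in the proof of Lemma~\ref{grafy-bez-for}(b) --- a branch of the dichotomy may be eliminated not by a mismatch of neighbourhoods but by an edge that directly joins the would-be-equal vertices, so the correct obstruction has to be matched to each alternative; in addition one should confirm that the five designated vertices really do carry the six orthogonality relations that part~(a) consumes. I do not anticipate any genuine difficulty here, and the outcome can be double-checked, if wanted, by the same automated induced-subgraph search used in Sec.~\ref{forbidden-induced}.
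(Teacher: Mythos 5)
Your proof is correct and follows essentially the same route as the paper: part (a) forces a repeated vector on one of the rim diagonals of the wheel (the paper by an explicit coordinate computation giving $\alpha\beta=0$, you by an equivalent dimension count, or the reduction to Lemma~\ref{4cykl}), and (b)--(c) are then excluded exactly as in the paper, since the would-be-equal vertices have different neighborhoods, contradicting faithfulness. The only cosmetic slip is the labeling: in Fig.~\ref{wheel} the hub is $w_3$ rather than $w_5$, but your argument, stated in your own labels, is unaffected.
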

	\begin{proof}
		(a) W.l.o.g. we set $\ket{w_1}=\zero$, $\ket{w_2}=\jeden$, and $\ket{w_3}=\dwa$. Then it must be $\ket{w_4}=\jeden+\alpha \ket{3}$ and $\ket{w_5}=\zero+\beta \ket{3}$; these graphs are orthogonal implying $\alpha \beta=0$.
		
		(b, c)  For the three graphs the argument is the same: neighborhoods of vertices with equal vectors are different, a contradiction.

		\end{proof}
	\begin{figure}[h]
		\centering
		\begin{tikzpicture}[scale=0.45]
			\begin{scope}[scale=0.7,xshift=-6cm]
			\node[circle, draw, fill=yellow, minimum size=8pt,inner sep=1.5pt] (w1) at
		(0,0) {$\ket{w_1}=\ket{0}$};
		\node[circle, draw, fill=green, minimum size=8pt,inner sep=2pt] (w2) at
		(0,10) {$\ket{w_2}=\ket{1}$};
		\node[circle, draw, fill=green, minimum size=8pt,inner sep=-2pt] (w5) at
		(10,0) {$\begin{array}{c}
				\ket{w_4}\!\!=\!\!\ket{1}\!\!+\!\alpha \ket{3} \\ (\alpha \beta\! =\! 0) \\
			\end{array}$};
		\node[circle, draw, fill=yellow, minimum size=8pt,inner sep=-2pt] (w4) at
		(10,10) {$\begin{array}{c}
				\ket{w_5}\!=\!\!\ket{0}\!\!+\!\!\beta \ket{3} \\ (\alpha\beta\! =\! 0) 
			\end{array}$};
			\node[circle, draw, fill=white, minimum size=8pt,inner sep=2pt] (w3) at (5,5) {$\ket{w_3}=\ket{2}$};
		\draw (w1)-- (w2) -- (w4) -- (w5) -- (w1); \draw (w1)--(w3)--(w4); \draw (w2)--(w3)--(w5);	%
			\end{scope}
			\begin{scope}[xshift=10cm,yshift=0cm,scale=0.9]
				\node[circle, draw, fill=white, minimum size=12pt,inner sep=2pt] (w1) at
				(0,0) {$w_1$};
				\node[circle, draw, fill=white, minimum size=12pt,inner sep=2pt] (w2) at
				(0,5) {$w_2$};
				\node[circle, draw, fill=white, minimum size=12pt,inner sep=2pt] (w4) at
				(5,0) {$w_4$};
				\node[circle, draw, fill=white, minimum size=12pt,inner sep=2pt] (w5) at
				(5,5) {$w_5$};
				\node[circle, draw, fill=white, minimum size=12pt,inner sep=2pt] (w3) at (2.5,2.5) {$w_3$};
				\node[circle, draw, fill=white, minimum size=12pt,inner sep=2pt] (w6) at (2.5,8) {$w_6$};
				%
				\draw (w1)-- (w2) -- (w5) -- (w4) -- (w1); \draw (w1)--(w3)--(w4); \draw (w2)--(w3)--(w5);	%
				\draw (w2)--(w6); \draw (w5)--(w6); \draw[dashed] (w3)--(w6);
			\end{scope}
			%
			\begin{scope}[xshift=15.5cm,yshift=7cm,scale=1.2]
				\draw[red, line width=2.5pt, line cap=round, line join=round]
				(-0.5,0.95) -- (0.3,0) ;
				\draw[red, line width=2.5pt, line cap=round, line join=round]	(-0.5,0) --
				(0.3,0.95);
			\end{scope}
		\begin{scope}[xshift=20cm,yshift=3cm,scale=0.9]
		\node[circle, draw, fill=white, minimum size=12pt,inner sep=2pt] (w1) at
	(0,0) {$w_1$};
	\node[circle, draw, fill=white, minimum size=12pt,inner sep=2pt] (w2) at
	(0,5) {$w_2$};
	\node[circle, draw, fill=white, minimum size=12pt,inner sep=2pt] (w4) at
	(5,0) {$w_4$};
	\node[circle, draw, fill=white, minimum size=12pt,inner sep=2pt] (w5) at
	(5,5) {$w_5$};
	\node[circle, draw, fill=white, minimum size=12pt,inner sep=2pt] (w3) at (2.5,2.5) {$w_3$};
		\node[circle, draw, fill=white, minimum size=12pt,inner sep=2pt] (w6) at (0,-3) {$w_6$};
			\node[circle, draw, fill=white, minimum size=12pt,inner sep=2pt] (w7) at (5,-3) {$w_7$};
	\draw (w1)-- (w2) -- (w5) -- (w4) -- (w1); \draw (w1)--(w3)--(w4); \draw (w2)--(w3)--(w5);	%
	\draw (w1)--(w6); \draw (w4)--(w7);
		\end{scope}
			%
			\begin{scope}[xshift=26.5cm,yshift=7cm,scale=1.2]
				\draw[red, line width=2.5pt, line cap=round, line join=round]
				(-0.5,0.95) -- (0.3,0) ;
				\draw[red, line width=2.5pt, line cap=round, line join=round]	(-0.5,0) --
				(0.3,0.95);
			\end{scope}
		\end{tikzpicture}
		\caption{(left) Wheel graph $W_5$ --- an exemplary small graph with a necessarily repeated vector in a FOR(4) [Lemma \ref{wheel-graf} (a)]. (middle) Envelope graphs $E_6$ (solid edges only) and $E_6^|$ (solid edges with additional dashed edge $\{w_3,w_6\}$) do not admit FORs(4) [Lemma \ref{wheel-graf} (b)]. (right) Graph $W^{||}_7$ --- a $7$-vertex graph without a FOR(4) [Lemma \ref{wheel-graf} (c)].}
		\label{wheel}
	\end{figure}
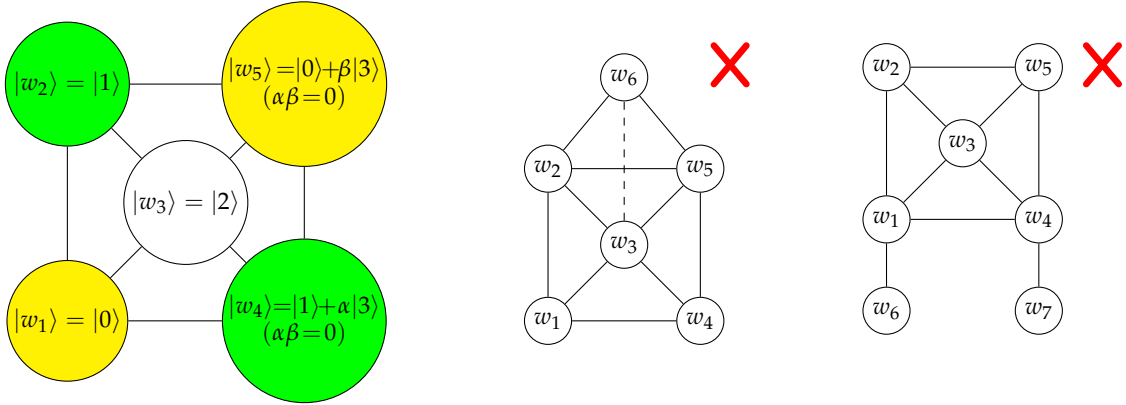

	We further have another graph leading to forbidden induced subgraphs (Fig. \ref{balonowy}).

\begin{lem}\label{balon-graf} 
	(a) Balloon graph $B_5$ has exactly one repeated vector in a FOR(4).  (b) Graph $B^{\circ}_6$ does not admit a FOR(4).   (c) $C_4^{(3)}$ does admit a FOR(4).
	\end{lem}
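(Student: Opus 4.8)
The plan is to run, one dimension higher, the argument that handled the diamond graph in Lemma \ref{4cykl}(ii) and the derived obstructions in Lemma \ref{wheel-graf}. For part (a), note that in $B_5$ (Fig. \ref{balonowy}) three vertices are pairwise adjacent while the remaining two are each joined to all three of them but not to one another. First I would fix a frame: in any orthogonal representation in $\mathbb{C}^4$ the three mutually adjacent vertices receive pairwise orthogonal, hence linearly independent and distinct, vectors, so up to a unitary they may be taken to be $\ket{0}$, $\ket{1}$, $\ket{2}$. Each of the other two vertices, being orthogonal to all three, is forced into the one-dimensional orthocomplement $\mathrm{span}\{\ket{3}\}$; therefore both carry the vector $\ket{3}$ and coincide. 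This is the only coincidence possible — the three ``triangle'' vectors are mutually distinct and none equals $\ket{3}$, while the two doubly-connected vertices, being orthogonal to $\ket{0},\ket{1},\ket{2}$, can equal none of them — and the resulting assignment is plainly a FOR(4): every edge joins orthogonal vectors and the unique non-edge joins $\ket{3}$ to itself, non-orthogonal as required. Since every choice was forced, the representation is unique up to a unitary, giving the ``exactly one repeated vector'' claim. I would also record that this reasoning never invokes faithfulness, so the two doubly-connected vertices carry equal vectors in \emph{every} orthogonal representation of $B_5$ in $\mathbb{C}^4$; this is needed in (b).

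For part (b), I would exhibit $B_5$ as an induced subgraph of $B^{\circ}_6$ on five of its six vertices and let $x$, $y$ be the pair that part (a) forces to be equal. Restricting any hypothetical FOR(4) of $B^{\circ}_6$ to this copy is still an orthogonal representation in $\mathbb{C}^4$, so $\ket{x}=\ket{y}$. But in $B^{\circ}_6$ the sixth vertex breaks the symmetry between $x$ and $y$: it is adjacent to exactly one of them (equivalently $\mathcal{N}(x)\ne\mathcal{N}(y)$). Faithfulness would then require its vector to be orthogonal to one of $\ket{x}$, $\ket{y}$ and non-orthogonal to the other — impossible since these are the same vector. (Should $B^{\circ}_6$ instead arise by inserting the edge $\{x,y\}$, the collision $\ket{x}=\ket{y}$ contradicts orthogonality along that edge outright, excluding even non-faithful representations.) Hence $B^{\circ}_6$ admits no FOR(4); this is precisely the mechanism of Lemma \ref{wheel-graf}(b,c), and it enlarges the $d=4$ obstruction set by $B^{\circ}_6$.

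For part (c) it suffices to produce a faithful representation. Following the same template, I would assign the basis vectors $\ket{0}, \ket{1}, \ket{2}, \ket{3}$ to a set of four pairwise adjacent vertices of $C_4^{(3)}$ and give to each group of mutually non-adjacent vertices sharing a common neighborhood the single basis vector that neighborhood dictates; the repetition is legitimate precisely because those vertices are non-adjacent, so the unit self-overlap $\langle \cdot | \cdot \rangle = 1$ is consistent with the absence of an edge. A direct check that adjacent vertices are orthogonal and non-adjacent ones are not then shows $C_4^{(3)}$ has a FOR(4). The only place that needs genuine care is the word \emph{exactly} in (a) — verifying both that no FOR(4) of $B_5$ can evade the forced coincidence and that no further coincidence is imposed — and, in (b), choosing the induced copy of $B_5$ inside $B^{\circ}_6$ so that the asymmetric sixth vertex really acts on the forced-equal pair; both are bookkeeping, and the remainder is routine verification of orthogonality relations.
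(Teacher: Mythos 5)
Your parts (a) and (b) are correct and follow essentially the paper's own route: you fix a frame on a clique of $B_5$, observe that the two remaining vertices are forced into a one-dimensional orthocomplement (so the coincidence holds in \emph{any} OR(4), faithful or not), and then dispose of $B_6^{\circ}$ by noting that the extra vertex distinguishes the neighborhoods of the forced-equal pair, contradicting faithfulness. The paper does the same, merely fixing the $4$-clique $(w_1,\dots,w_4)$ as the computational basis instead of a triangle. Your hedge about the exact structure of $B_6^{\circ}$ is harmless: the actual graph (Fig. \ref{balonowy}) is $B_5$ with one pendant vertex attached to one member of the forced pair, which is the case your main argument covers.

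Part (c), however, contains a genuine error, although you were misled by a typo in the statement: ``does admit'' should read ``does \emph{not} admit'' a FOR(4) (see the cross mark in Fig. \ref{balonowy} and the later use of $C_4^{(3)}$ as a \emph{forbidden} induced subgraph in Sec. \ref{ququarty} and Fig. \ref{type-C}). In the paper's labeling, $C_4^{(3)}$ is the $4$-clique on $(w_1,w_2,w_3,w_4)$ together with $w_5$ adjacent to $w_1,w_3,w_4$ and $w_6$ adjacent to $w_1,w_2,w_4$, with $\{w_5,w_6\}\notin E$. In any OR(4) the clique forces an orthogonal basis on $w_1,\dots,w_4$, and then exactly the mechanism you use in (a) forces $\ket{w_5}=\ket{w_2}$ and $\ket{w_6}=\ket{w_3}$. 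Since $w_2$ and $w_3$ are adjacent, this gives $\braket{w_5}{w_6}=0$ even though $w_5$ and $w_6$ are non-adjacent, so no orthogonal representation in $d=4$ can be faithful. Your sketched construction (``assign each group the basis vector its neighborhood dictates, then a direct check shows faithfulness'') is precisely this forced assignment, and the direct check fails on the non-edge $\{w_5,w_6\}$; the claim that $C_4^{(3)}$ has a FOR(4) is therefore false, and the correct conclusion --- and the paper's actual proof --- is the opposite one.
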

	\begin{proof} Both assertions are rather trivial. (a) We can set $\ket{w_1}=\zero, \ket{w_2}=\jeden, \ket{w_3}=\dwa, \ket{w_4}=\ket{3}$, as the corresponding graph is the clique $C_4$. It immediately follows that $\ket{w_5}=\ket{w_2}=\ket{1}$. (b) Neighbors of vertices with the same vector in a FOR are different, a contradiction. (c) The graph is composed of three cliques $C_4$, sharing certain vertices and edges (that is why the denotation for the graph). It follows that it must be $\ket{w_2}=\ket{w_5}$ and $\ket{w_3}=\ket{w_6}$, but $w_5$ and $w_6$ are not neighbors, while $w_2$ and $w_3$ are ones.
		\end{proof}

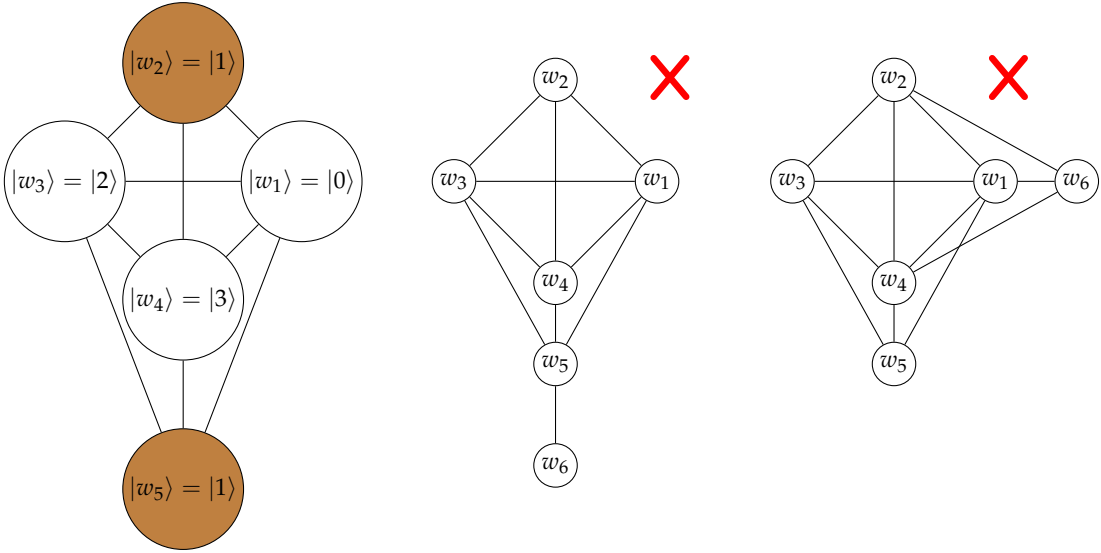
\begin{figure}[h]
	\centering
	\begin{tikzpicture}[scale=0.45]
		\begin{scope}[xshift=-1cm,scale=0.7]
			\node[circle, draw, fill=white, minimum size=8pt,inner sep=1.5pt] (w1) at
			(5,0) {$\ket{w_1}=\zero$};
			\node[circle, draw, fill=brown, minimum size=8pt,inner sep=1.5pt] (w2) at
			(0,5) {$\ket{w_2}=\jeden$};
			\node[circle, draw, fill=white, minimum size=8pt,inner sep=1.5pt] (w3) at
			(-5,0) {$\ket{w_3}=\dwa$};
			\node[circle, draw, fill=white, minimum size=8pt,inner sep=1.5pt] (w4) at
			(0,-5) {$\ket{w_4}=\ket{3}$};
			\node[circle, draw, fill=brown, minimum size=8pt,inner sep=1.5pt] (w5) at (0,-13) {$\ket{w_5}=\ket{1}$};
			\draw (w1)-- (w2) -- (w3) -- (w4) -- (w1); \draw (w3)--(w5)--(w1); 	%
			\draw (w1)--(w3); \draw (w2)--(w4)--(w5);
		\end{scope}
		\begin{scope}[xshift=10cm,scale=0.6]
			\node[circle, draw, fill=white, minimum size=1pt,inner sep=1.5pt] (w1) at
		(5,0) {$w_1$};
		\node[circle, draw, fill=white, minimum size=12pt,inner sep=1.5pt] (w2) at
		(0,5) {$w_2$};
		\node[circle, draw, fill=white, minimum size=12pt,inner sep=1.5pt] (w3) at
		(-5,0) {$w_3$};
		\node[circle, draw, fill=white, minimum size=12pt,inner sep=1.5pt] (w4) at
		(0,-5) {$w_4$};
		\node[circle, draw, fill=white, minimum size=12pt,inner sep=1.5pt] (w5) at (0,-9) {$w_5$};
			\node[circle, draw, fill=white, minimum size=12pt,inner sep=1.5pt] (w6) at (0,-14) {$w_6$};
		\draw (w1)-- (w2) -- (w3) -- (w4) -- (w1); \draw (w3)--(w5)--(w1); 	%
		\draw (w1)--(w3); \draw (w2)--(w4)--(w5)--(w6);
		\end{scope}
		\begin{scope}[xshift=20cm,scale=0.6]
			\node[circle, draw, fill=white, minimum size=1pt,inner sep=1.5pt] (w1) at
			(5,0) {$w_1$};
			\node[circle, draw, fill=white, minimum size=12pt,inner sep=1.5pt] (w2) at
			(0,5) {$w_2$};
			\node[circle, draw, fill=white, minimum size=12pt,inner sep=1.5pt] (w3) at
			(-5,0) {$w_3$};
			\node[circle, draw, fill=white, minimum size=12pt,inner sep=1.5pt] (w4) at
			(0,-5) {$w_4$};
			\node[circle, draw, fill=white, minimum size=12pt,inner sep=1.5pt] (w5) at (0,-9) {$w_5$};
			\node[circle, draw, fill=white, minimum size=12pt,inner sep=1.5pt] (w6) at (9,0) {$w_6$};
			\draw (w6)--(w1)-- (w2) -- (w3) -- (w4) -- (w1); \draw (w3)--(w5)--(w1); 	%
			\draw (w1)--(w3); \draw (w2)--(w4)--(w5); \draw (w4) -- (w6) -- (w2);
		\end{scope}
		%
		\begin{scope}[xshift=23.5cm,yshift=2.5cm,scale=1.2]
			\draw[red, line width=2.5pt, line cap=round, line join=round]
			(-0.5,0.95) -- (0.3,0) ;
			\draw[red, line width=2.5pt, line cap=round, line join=round]	(-0.5,0) --
			(0.3,0.95);
		\end{scope}
		%
		\begin{scope}[xshift=13.5cm,yshift=2.5cm,scale=1.2]
			\draw[red, line width=2.5pt, line cap=round, line join=round]
			(-0.5,0.95) -- (0.3,0) ;
			\draw[red, line width=2.5pt, line cap=round, line join=round]	(-0.5,0) --
			(0.3,0.95);
		\end{scope}
	\end{tikzpicture}
	\caption{(left) Balloon graph $B_5$: there is one repeated vector in its FOR(4) [Lemma \ref{balon-graf} (a)]. (middle) $B_6^{\circ}$ -- a $6$-vertex graph without a FOR(4) [Lemma \ref{balon-graf} (b)]. (right) Graph $C_4^{(3)}$ does not have a FOR(4) [Lemma \ref{balon-graf} (c)]. }
	\label{balonowy}
\end{figure}

\subsubsection{Three-ququart minimal GUPBs - remarks}\label{ququarty}

In the three-ququart case ($N=3$, $d=4$), it holds
$\mathfrak{C}_{\mathrm{min}}(4,3)=24$ and thus graphs with $24$ vertices are the objects of interest. Importantly, the minimal GUPBs do not saturate the lower bound in Eq. \eqref{kiribela-bound} and we find that the degrees of vertices in LOGs are either $7$ or $8$ \cite{kiribela}. The number of valid degree sequences in this case is $25$. Focusing only on relevant regular graphs, we find the following. They all have girth bounded as $\girth \le 4$. There are $141\;515\;621\:596\;238\;755\;266\;884\;806\;115\;631$ connected \cite{7-connected} and  $733\:460\:349\:818$ disconnected  \cite{7-disconnected} $7$-regular graphs. We consider disconnected $8$-regular, i.e., octic graphs more closely below.

There are $1\:473\:822\:243$ disconnected octic graphs on $24$ vertices \cite{disconnected-octic}. 
Their classification according to the division of vertices into two groups is as follows (dividing into more than two groups is impossible in this case):
\begin{itemize}
	\item[(a)] type A: $9+15$ vertices ($1\:470\:293\:676$ graphs). These graphs get immediately discarded as there is the unique octic graph with $9$ vertices --- the clique $C_9$,
	\item[(b)] type B: $10+14$ vertices ($3\:459\:386$ graphs).  This case also gets discarded as there is one octic graph with $10$ vertices and it contains $C_5$ as a subgraph (this can be verified by direct inspection or using Tur\'an's bound \cite{turan}),
	\item[(c)] type C: $11+13$ vertices ($6\times 10\:778 = 64\:716$ graphs). There is only one graph among the six $11$-vertex graphs, which does not contain $C_5$, but it contains $C_4^{(3)}$ as an induced subgraph (see Fig. \ref{type-C}); consequently, type C graphs are eliminated,
	\item[(d)] type D: $12+12$ vertices. There are $94$  octic graphs on $12$ vertices, giving rise to the total of $4\:465$ graphs.  Among the said twelve-vertex octic graphs we find $6$ graphs with $C_6$ and $75$ graphs with $C_5$, which are eliminated at once (accounting for the total of $81$ graphs). 
	In turn, the remaining thirteen graphs give rise to $91$ type D candidate graphs, which need to be checked for validity as LOGs. We have found that only two graphs, both shown in Fig. \ref{type-D}, do not get discarded by the induced forbidden subgraph $C_4^{(3)}$. Using again the enumeration of graphs from GENREG, these are graphs number $70$ (with maximum clique $C_4$) and $94$ (with maximum clique $C_3$), denoted henceforth as  $L_{70}$ and $L_{94}$, respectively. We further find that $E_6^{|}$ is an induced subgraph of $L_{70}$ and this graph gets discarded. In turn, the only relevant graph from this category is $L_{94}+L_{94}$. We deal with $L_{94}$ below.
\end{itemize}

Graph $L_{94}$ is a complete $3$-partite graph and as such it has a FOR even in dimension $d=3$:
\begin{align}
&	\repr{1}=\repr{10}=\repr{11}=\repr{12}=\zero, \non 
&	\repr{2}=\repr{7}=\repr{8}=\repr{9}=\jeden, \\
&	\repr{3}=\repr{4}=\repr{5}=\repr{6}=\dwa. \nonumber
	\end{align}
We will now show that even exploiting the extra dimension does not lead to a representation with appropriate spanning (saturation) properties. Let $\mathcal{A}_1=\{\repr{1}, \repr{10}, \repr{11}, \repr{12}\}$, $\mathcal{A}_2=\{\repr{2}, \repr{7}, \repr{8}, \repr{9}\}$, and $\mathcal{A}_3=\{\repr{3}, \repr{4}, \repr{5}, \repr{6} \}$. While the obvious upper bound on the dimension of the subspaces spanned by the sets is three, we can immediately infer that it actually must hold that $\sum_k \dim\mathrm{span} \mathcal{A}_k \le 4$ , as the division of vectors into three sets corresponds to a certain decomposition of the four dimensional Hilbert space, or its subspace, into orthogonal subspaces (the upper bound on the sum of dimensions is $d$ in the general case of complete $d$-partite graphs). This means that there are always  eight vectors in a representation that span only a two-dimensional subspace.
Consequently, any FOR(4) of $L_{94}+L_{94}$ does not comprise with the condition imposed by Fact \ref{general-fakt}.
 An exemplary FOR(4) for $L_{94}$ saturating dimensions of the subspaces spanned by the set is as follows:
\begin{align}
	&	\repr{1}=\repr{10}=\repr{11}=\repr{12}=\zero, \non 
	&	\repr{2}=\repr{7}=\repr{8}=\repr{9}=\jeden, \\
	&	\repr{3}=\dwa, \quad \repr{4}=\dwa+\ket{3}, \non 
	& \repr{5}=\dwa+2\ket{3},\quad \repr{6}=\dwa+3\ket{3}. \nonumber
\end{align}
\begin{figure}[h]
	\centering	
	\begin{tikzpicture}[scale=0.6]
		\begin{scope}
			\GraphInit[vstyle=Classic]
		\SetGraphUnit{2.5}
		\node[circle, draw=red,thick, fill=white, minimum size=10pt, inner sep=1.8pt] (v1) at ({90+0*360/11}:6.5cm) {$v_1$};
		\node[circle, draw=red,thick, fill=white, minimum size=12pt, inner sep=1.8pt] (v2) at ({90+1*360/11}:6.5cm) {$v_2$};
		\node[circle, draw=red,thick, fill=white, minimum size=12pt, inner sep=1.8pt] (v3) at ({90+2*360/11}:6.5cm) {$v_3$};
		\node[circle, draw=red,thick, fill=white, minimum size=12pt, inner sep=1.8pt] (v4) at ({90+3*360/11}:6.5cm) {$v_4$};
		\node[circle, draw, fill=white, minimum size=12pt, inner sep=1.8pt] (v5) at ({90+4*360/11}:6.5cm) {$v_5$};
		\node[circle, draw, fill=white, minimum size=12pt, inner sep=1.8pt] (v6) at ({90+5*360/11}:6.5cm) {$v_6$};
		\node[circle, draw, fill=white, minimum size=12pt, inner sep=1.8pt] (v7) at ({90+6*360/11}:6.5cm) {$v_7$};
		\node[circle, draw, fill=white, minimum size=12pt, inner sep=1.8pt] (v8) at ({90+7*360/11}:6.5cm) {$v_8$};	
		\node[circle, draw=red,thick, fill=white, minimum size=12pt, inner sep=1.8pt] (v9) at ({90+8*360/11}:6.5cm) {$v_9$};	
		\node[circle, draw=red,thick, fill=white, minimum size=12pt,inner sep=0.6pt] (v10) at ({90+9*360/11}:6.5cm) {$v_{10}$};	
		\node[circle, draw, fill=white, minimum size=12pt,inner sep=0.6pt] (v11) at ({90+10*360/11}:6.5cm) {$v_{11}$};		
	\draw[red,thick] (v1) -- (v2)	(v1) -- (v3)	(v1) -- (v4) 	(v1) -- (v9);
\draw 	(v1) -- (v5)	(v1) -- (v6)	(v1) -- (v7)	(v1) -- (v8);
	\draw[red,thick] (v2) -- (v3)
	(v2) -- (v4); \draw	(v2) -- (v5)
	(v2) -- (v6)	(v2) -- (v7)
	(v2) -- (v8); \draw[red,thick]	(v2) -- (v10);
	\draw[red,thick] (v3) -- (v4) (v3) -- (v9); \draw	(v3) -- (v5)
	(v3) -- (v6)	; 
\draw[red,line width=1.2pt]	(v3) -- (v10); \draw	(v3) -- (v11);
	\draw (v4) -- (v7)	(v4) -- (v8)
; \draw[red,thick] 	(v4) -- (v9)	(v4) -- (v10);\draw	(v4) -- (v11);
	\draw (v5) -- (v7)	(v5) -- (v8)	(v5) -- (v9)	(v5) -- (v10)	(v5) -- (v11);
	\draw (v6) -- (v7)	(v6) -- (v8)	(v6) -- (v9)	(v6) -- (v10)	(v6) -- (v11);
	\draw (v7) -- (v9)	(v7) -- (v10)	(v7) -- (v11);
	\draw (v8) -- (v9)	(v8) -- (v10)	(v8) -- (v11);
	\draw (v9) -- (v11);	\draw (v10) -- (v11);
		\end{scope}
		\begin{scope}[xshift=6.75cm,yshift=5.5cm,scale=1]
			\draw[red, line width=2.5pt, line cap=round, line join=round]
			(-0.5,0.95) -- (0.3,0) ;
			\draw[red, line width=2.5pt, line cap=round, line join=round]	(-0.5,0) --
			(0.3,0.95);
		\end{scope}
	\end{tikzpicture}
	\caption{The unique $11$-vertex octic graph without $C_5$ as a subgraph. The graph does not admit a FOR(4) as $C_4^{(3)}$ is its induced subgraph (highlighted in red).
	}\label{type-C}
\end{figure}
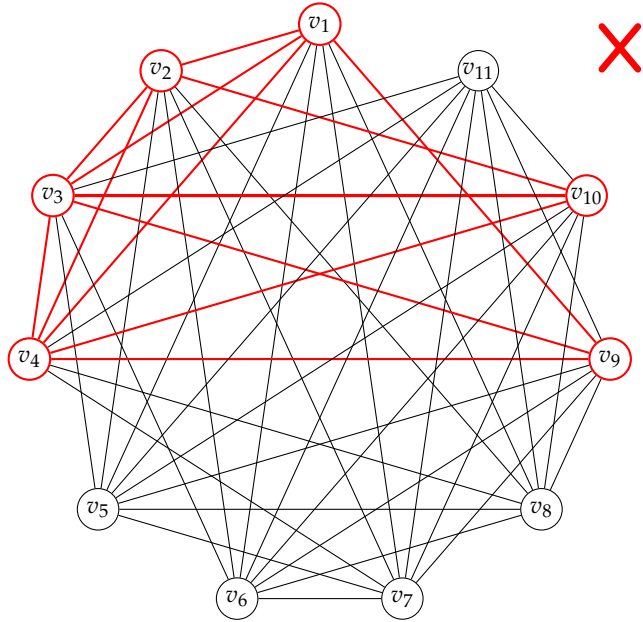

\begin{figure}[h]
	\centering

	\begin{tikzpicture}[scale=0.65]
		\begin{scope}[scale=0.8]
				\GraphInit[vstyle=Classic]
			\SetGraphUnit{2.5}
			\node[circle, draw=red,thick, fill=white, minimum size=10pt, inner sep=1.8pt] (v1) at ({90+0*360/12}:6.5cm) {$v_1$};
			\node[circle, draw=red,thick, fill=white, minimum size=12pt, inner sep=1.8pt] (v2) at ({90+1*360/12}:6.5cm) {$v_2$};
			\node[circle, draw=red,thick, fill=white, minimum size=12pt, inner sep=1.8pt] (v3) at ({90+2*360/12}:6.5cm) {$v_3$};
			\node[circle, draw=red,thick, fill=white, minimum size=12pt, inner sep=1.8pt] (v4) at ({90+3*360/12}:6.5cm) {$v_4$};
			\node[circle, draw, fill=white, minimum size=12pt, inner sep=1.8pt] (v5) at ({90+4*360/12}:6.5cm) {$v_5$};
			\node[circle, draw, fill=white, minimum size=12pt, inner sep=1.8pt] (v6) at ({90+5*360/12}:6.5cm) {$v_6$};
			\node[circle, draw, fill=white, minimum size=12pt, inner sep=1.8pt] (v7) at ({90+6*360/12}:6.5cm) {$v_7$};
			\node[circle, draw, fill=white, minimum size=12pt, inner sep=1.8pt] (v8) at ({90+7*360/12}:6.5cm) {$v_8$};	
			\node[circle, draw=red,thick, fill=white, minimum size=12pt, inner sep=1.8pt] (v9) at ({90+8*360/12}:6.5cm) {$v_9$};	
			\node[circle, draw=red,thick, fill=white, minimum size=12pt,inner sep=0.6pt] (v10) at ({90+9*360/12}:6.5cm) {$v_{10}$};	
			\node[circle, draw, fill=white, minimum size=12pt,inner sep=0.6pt] (v11) at ({90+10*360/12}:6.5cm) {$v_{11}$};	
				\node[circle, draw, fill=white, minimum size=12pt,inner sep=0.6pt] (v12) at ({90+11*360/12}:6.5cm) {$v_{12}$};	
		\draw[red,thick] (v1) -- (v2) (v1) -- (v3) (v1) -- (v4); \draw (v1) -- (v5) (v1) -- (v6) (v1) -- (v7) (v1) -- (v8); \draw[red,thick] (v1) -- (v9);
		\draw[red,thick] (v2) -- (v3) (v2) -- (v4);\draw (v2) -- (v5) (v2) -- (v6) (v2) -- (v7) (v2) -- (v8);\draw[red,thick] (v2) -- (v10);
		\draw[red,thick] (v3) -- (v4);\draw (v3) -- (v5) (v3) -- (v6) (v3) -- (v7) (v3) -- (v11) (v3) -- (v12);
		\draw (v4) -- (v8); \draw[red,thick] (v4) -- (v9) (v4) -- (v10); \draw (v4) -- (v11) (v4) -- (v12);
		\draw (v5) -- (v8) (v5) -- (v9) (v5) -- (v10) (v5) -- (v11) (v5) -- (v12);
		\draw (v6) -- (v8) (v6) -- (v9) (v6) -- (v10) (v6) -- (v11) (v6) -- (v12);
		\draw (v7) -- (v8) (v7) -- (v9) (v7) -- (v10) (v7) -- (v11) (v7) -- (v12);
		\draw (v8) -- (v11) (v8) -- (v12);
		\draw[red,thick] (v9) -- (v10); (v9) -- (v11) (v9) -- (v12);
		\draw (v10) -- (v11) (v10) -- (v12);
		\end{scope}
		\begin{scope}[xshift=5cm,yshift=5cm,scale=0.8]
			\draw[red, line width=2.5pt, line cap=round, line join=round]
			(-0.5,0.95) -- (0.3,0) ;
			\draw[red, line width=2.5pt, line cap=round, line join=round]	(-0.5,0) --
			(0.3,0.95);
		\end{scope}
	\begin{scope}[scale=0.8,xshift=18cm]
		\GraphInit[vstyle=Classic]
		\SetGraphUnit{2.5}
		\node[circle, draw, fill=white, minimum size=10pt, inner sep=1.8pt] (v1) at ({90+0*360/12}:6.5cm) {$v_1$};
	\node[circle, draw, fill=white, minimum size=12pt, inner sep=1.8pt] (v2) at ({90+1*360/12}:6.5cm) {$v_2$};
	\node[circle, draw, fill=white, minimum size=12pt, inner sep=1.8pt] (v3) at ({90+2*360/12}:6.5cm) {$v_3$};
	\node[circle, draw, fill=white, minimum size=12pt, inner sep=1.8pt] (v4) at ({90+3*360/12}:6.5cm) {$v_4$};
	\node[circle, draw, fill=white, minimum size=12pt, inner sep=1.8pt] (v5) at ({90+4*360/12}:6.5cm) {$v_5$};
	\node[circle, draw, fill=white, minimum size=12pt, inner sep=1.8pt] (v6) at ({90+5*360/12}:6.5cm) {$v_6$};
	\node[circle, draw, fill=white, minimum size=12pt, inner sep=1.8pt] (v7) at ({90+6*360/12}:6.5cm) {$v_7$};
	\node[circle, draw, fill=white, minimum size=12pt, inner sep=1.8pt] (v8) at ({90+7*360/12}:6.5cm) {$v_8$};	
	\node[circle, draw, fill=white, minimum size=12pt, inner sep=1.8pt] (v9) at ({90+8*360/12}:6.5cm) {$v_9$};	
	\node[circle, draw, fill=white, minimum size=12pt,inner sep=0.6pt] (v10) at ({90+9*360/12}:6.5cm) {$v_{10}$};	
	\node[circle, draw, fill=white, minimum size=12pt,inner sep=0.6pt] (v11) at ({90+10*360/12}:6.5cm) {$v_{11}$};	
	\node[circle, draw, fill=white, minimum size=12pt,inner sep=0.6pt] (v12) at ({90+11*360/12}:6.5cm) {$v_{12}$};	
	\draw (v1) -- (v2) (v1) -- (v3) (v1) -- (v4) (v1) -- (v5) (v1) -- (v6) (v1) -- (v7) (v1) -- (v8) (v1) -- (v9);
	\draw (v2) -- (v3) (v2) -- (v4) (v2) -- (v5) (v2) -- (v6) (v2) -- (v10) (v2) -- (v11) (v2) -- (v12);
	\draw (v3) -- (v7) (v3) -- (v8) (v3) -- (v9) (v3) -- (v10) (v3) -- (v11) (v3) -- (v12);
	\draw (v4) -- (v7) (v4) -- (v8) (v4) -- (v9) (v4) -- (v10) (v4) -- (v11) (v4) -- (v12);
	\draw (v5) -- (v7) (v5) -- (v8) (v5) -- (v9) (v5) -- (v10) (v5) -- (v11) (v5) -- (v12);
	\draw (v6) -- (v7) (v6) -- (v8) (v6) -- (v9) (v6) -- (v10) (v6) -- (v11) (v6) -- (v12);
	\draw (v7) -- (v10) (v7) -- (v11) (v7) -- (v12);
	\draw (v8) -- (v10) (v8) -- (v11) (v8) -- (v12);
	\draw (v9) -- (v10) (v9) -- (v11) (v9) -- (v12);
	
	\end{scope}
		\begin{scope}[xshift=19.5cm,yshift=4.5cm,scale=1.3]
			\draw[ufogreen, line width=2.5pt, line cap=round, line join=round]
			(-0.3,0.4) -- (0,0) -- (0.6,0.95);
		\end{scope}
	\end{tikzpicture}
	\caption{(left) Forbidden induced subgraph $E_{6}^|$ (cf. Fig. \ref{wheel}) highlighted in graph $L_{70}$. (right)  Graph $L_{94}$ has FORs(4) with eight vectors spanning only a two dimensional subspace (see Sec. \ref{ququarty}).
	}\label{type-D}
\end{figure}
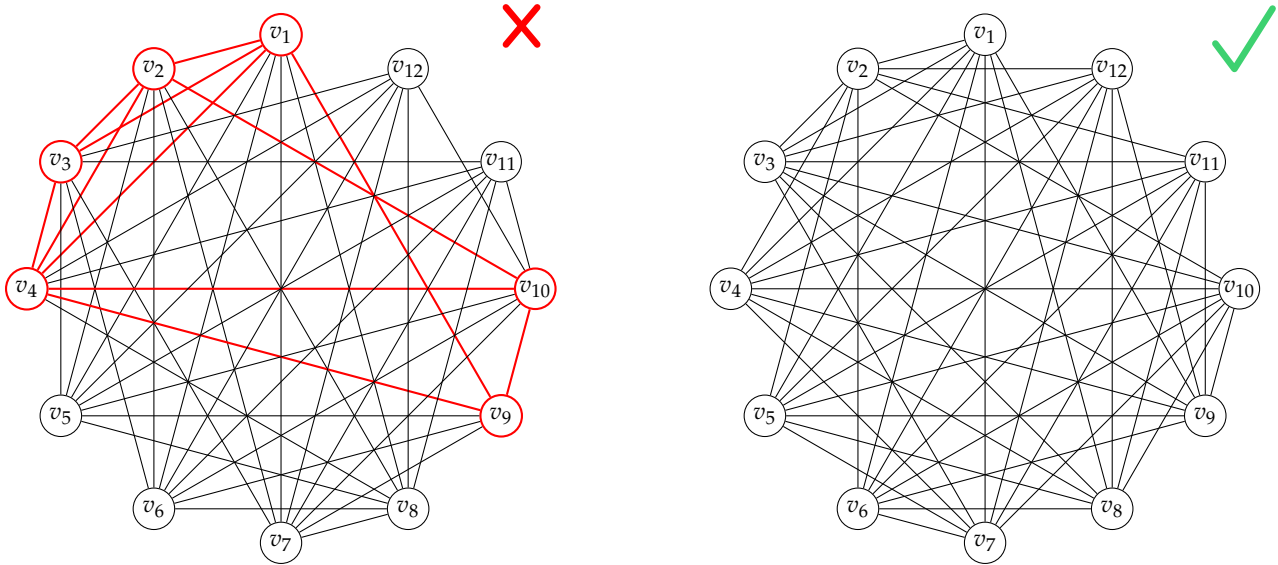

We have
not been able to find the  number of the connected graphs and we refer the
reader to the database \cite{8-regular} for the cases $N\le 22$, only noting here
that for $N=22$ this number is around $2.1 \times 10^{29}$.  

\section{Conclusions} \label{konkluzje}

We demonstrated that three‐qutrit genuinely unextendible product bases (GUPBs)
with the minimal theoretically permissible cardinality of 
thirteen do not exist. To this end, we exploited the recent characterization of
local orthogonality graphs for GUPBs  and employed
a forbidden induced subgraph characterization to efficiently prune the set of
candidate graphs. This approach allowed us to show that only two eligible graphs
have orthogonal representations, yet neither of them exhibits the required
spanning properties. We also discussed partial results concerning graphs associated with larger bases and systems involving ququart subsystems.

The proposed approach proved effective for small systems, but its direct application
becomes increasingly challenging as system dimensions and sizes of the bases grow. 
The method thus calls for improvements allowing for treating graphs with more
	vertices. First, one should apply other methods (see, e.g.,
	\cite{motif1,motif2,g-tries})  for the induced
	subgraph isomorphism problems. Second, one could also look into the
	possibility of graph elimination already at the
	construction stage of the candidate graphs, that is, instead of analyzing full
	graphs, smaller ones could be investigated and large sets of graphs could
	potentially be rejected early in the process. Nevertheless, it is rather clear
	that even a blend of such adjustments would still be profitable only in the
	problems of the limited scale. One must also bear in mind that the elimination
	of graphs is only part of the procedure --- it is still necessary to verify whether
	every graph in the set obtained at some stage of the process admits FORs. If
	they do (i.e., we have constructed an obstruction set) these FORs must be
	characterized and their validity for GUPBs confirmed 
 by checking certain properties, such as those outlined in Fact \ref{general-fakt}. It is likely that integrating advanced analytical tools with stronger graph
 characterizations---yet to be discovered---will be required.
 We believe that continued development along these lines may be key to
	overcoming the current limitations and advancing further the field.
	
	While the present results clarify and restrict several aspects of the problem, it appears too early to attempt to draw any far-reaching conclusions or to pose conjectures regarding the existence or nonexistence of GUPBs, even in the narrowed three-qutrit setting. Nevertheless, regardless of the current status of the problem, studying the consequences of both their existence and nonexistence is a worthwhile research direction that could reveal currently unexplored features of multipartite entanglement. In particular, the nonexistence case could be especially informative, possibly with unexpected consequences, for example in the form of no-go theorems for certain information-theoretic tasks other than those traditionally considered in this context, such as local discrimination of states.

\section{Acknowledgments}

The research was supported by Centre of Informatics Tricity Academic
Supercomputer \& Network (CI TASK) through the computing grant ''Unextendible
product bases and entangled subspaces''.


\appendix

\section{$3$-ladder graph}\label{drabina}

Here we consider the $3$-ladder graph $L_6$, see Fig. \ref{kwadraty}. 
\begin{figure}[htp]
\begin{tikzpicture}[scale=0.47]
	\centering
	\begin{scope}[scale=0.9]
		\node[circle, draw, fill=white, minimum size=12pt] (v1) at (2,0) {$w_1$};
		\node[circle, draw, fill=white, minimum size=12pt] (v2) at (2,3.5) {$w_2$};
		\node[circle, draw, fill=white, minimum size=12pt] (v3) at (-2,3.5)
		{$w_3$};
		\node[circle, draw, fill=white, minimum size=12pt] (v4) at (-2,0) {$w_4$};
		\node[circle, draw, fill=white, minimum size=12pt] (v5) at (-2,-3.5)
		{$w_5$};
		\node[circle, draw, fill=white, minimum size=12pt] (v6) at (2,-3.5) {$w_6$};
		%
		\draw (v1) -- (v2) -- (v3) -- (v4) -- (v1) ;
		\draw (v4) -- (v5) --(v6)--(v1);
	\end{scope}
	\begin{scope}[xshift=10.5cm,scale=0.9]
		\node[circle, draw=red,thick, fill=white, minimum size=10pt, inner sep=1.8pt] (v1) at
		({90+0*360/13}:6.5cm) {$v_1$};
		\node[circle, draw=red,thick, fill=white, minimum size=12pt, inner sep=1.8pt] (v2) at
		({90+1*360/13}:6.5cm) {$v_2$};
		\node[circle, draw=red,thick, fill=white, minimum size=12pt, inner sep=1.8pt] (v3) at
		({90+2*360/13}:6.5cm) {$v_3$};
		\node[circle, draw, fill=white, minimum size=12pt, inner sep=1.8pt] (v4) at
		({90+3*360/13}:6.5cm) {$v_4$};
		\node[circle, draw, fill=white, minimum size=12pt, inner sep=1.8pt] (v5) at
		({90+4*360/13}:6.5cm) {$v_5$};
		\node[circle, draw=red,thick, fill=white, minimum size=12pt, inner sep=1.8pt] (v6) at
		({90+5*360/13}:6.5cm) {$v_6$};
		\node[circle, draw, fill=white, minimum size=12pt, inner sep=1.8pt] (v7) at
		({90+6*360/13}:6.5cm) {$v_7$};
		\node[circle, draw=red,thick, fill=white, minimum size=12pt, inner sep=1.8pt] (v8) at
		({90+7*360/13}:6.5cm) {$v_8$};	
		\node[circle, draw, fill=white, minimum size=12pt, inner sep=1.8pt] (v9) at
		({90+8*360/13}:6.5cm) {$v_9$};	
		\node[circle, draw, fill=white, minimum size=12pt,inner sep=0.6pt] (v10) at
		({90+9*360/13}:6.5cm) {$v_{10}$};	
		\node[circle, draw, fill=white, minimum size=12pt,inner sep=0.6pt] (v11) at
		({90+10*360/13}:6.5cm) {$v_{11}$};	
		\node[circle, draw=red,thick, fill=white, minimum size=12pt,inner sep=0.6pt] (v12) at
		({90+11*360/13}:6.5cm) {$v_{12}$};	
		\node[circle, draw, fill=white, minimum size=12pt,inner sep=0.6pt] (v13) at
		({90+12*360/13}:6.5cm) {$v_{13}$};		
		\draw[red,thick] (v1)--(v2); 	\draw[red,thick]  (v1)--(v3);	\draw 
		(v1)--(v4);	\draw  (v1)--(v5);
		\draw[red,thick] (v2)--(v6);	\draw  (v2)--(v7);	\draw[red,thick] 
		(v2)--(v8);
		\draw[red,thick] (v3)--(v6);	\draw  (v3)--(v9);	\draw  (v3)--(v10);
		\draw (v4)--(v7);	\draw  (v4)--(v11) ;	\draw (v4)--(v12);
		\draw (v5)--(v9); 	\draw (v5)--(v11);	\draw  (v5)--(v13);
		\draw (v6)--(v11)	;\draw[red,thick]  (v6)--(v12);
		\draw (v7)--(v9);	\draw  (v7)--(v13);
		\draw (v8)--(v9);	\draw  (v8)--(v10);	\draw[red,thick]  (v8)--(v12);
		\draw (v9)--(v11);
		\draw (v10)--(v11);	\draw  (v10)--(v13);
		\draw (v12)--(v13);
	\end{scope}
	\begin{scope}[xshift=16cm,yshift=5cm,scale=0.9]
		\draw[-stealth, thick, bend left=40] (0,0) to (3,0);
	\end{scope}
	\begin{scope}[xshift=24cm,scale=0.9]
		\node[circle, draw=red,thick, fill=white, minimum size=10pt, inner sep=1.8pt] (v1) at
		({90+0*360/13}:6.5cm) {$v_1$};
		\node[circle, draw=red,thick, fill=white, minimum size=12pt, inner sep=1.8pt] (v2) at
		({90+1*360/13}:6.5cm) {$v_2$};
		\node[circle, draw=red,thick, fill=white, minimum size=12pt, inner sep=1.8pt] (v3) at
		({90+2*360/13}:6.5cm) {$v_8$};
		\node[circle, draw, fill=white, minimum size=12pt, inner sep=0.6pt] (v4) at
		({90+3*360/13}:6.5cm) {$v_{10}$};
		\node[circle, draw, fill=white, minimum size=12pt, inner sep=0.6pt] (v5) at
		({90+4*360/13}:6.5cm) {$v_{11}$};
		\node[circle, draw, fill=white, minimum size=12pt, inner sep=1.8pt] (v6) at
		({90+5*360/13}:6.5cm) {$v_4$};
		\node[circle, draw, fill=white, minimum size=12pt, inner sep=1.8pt] (v7) at
		({90+6*360/13}:6.5cm) {$v_7$};
		\node[circle, draw, fill=white, minimum size=12pt, inner sep=1.8pt] (v8) at
		({90+7*360/13}:6.5cm) {$v_9$};	
		\node[circle, draw=red,thick, fill=white, minimum size=12pt, inner sep=1.8pt] (v9) at
		({90+8*360/13}:6.5cm) {$v_3$};	
		\node[circle, draw=red,thick, fill=white, minimum size=12pt,inner sep=1.8pt] (v10) at
		({90+9*360/13}:6.5cm) {$v_{6}$};	
		\node[circle, draw=red,thick, fill=white, minimum size=12pt,inner sep=0.6pt] (v11) at
		({90+10*360/13}:6.5cm) {$v_{12}$};	
		\node[circle, draw, fill=white, minimum size=12pt,inner sep=0.6pt] (v12) at
		({90+11*360/13}:6.5cm) {$v_{13}$};	
		\node[circle, draw, fill=white, minimum size=12pt,inner sep=1.8pt] (v13) at
		({90+12*360/13}:6.5cm) {$v_{5}$};		
		\draw[red,thick] (v1)--(v2); \draw (v1) -- (v6); \draw[red,thick] (v1) --
		(v9); \draw (v1) -- (v13);
		\draw[red,thick] (v2) -- (v3); \draw (v2) -- (v7); \draw[red,thick] (v2) --
		(v10);
		\draw (v3) -- (v4);			\draw (v3) -- (v8);			\draw[red,thick] (v3) -- (v11);
		\draw (v4) -- (v5); \draw (v4) -- (v9);		\draw (v4) -- (v12);
		\draw (v5) -- (v6);		\draw (v5) -- (v10);	\draw (v5) -- (v13);
		\draw (v6) -- (v7);\draw (v6) -- (v11);			
		\draw (v7) -- (v8); \draw (v7) -- (v12);			\draw (v8) -- (v9);
		\draw (v8) -- (v13);			\draw[red,thick] (v9) -- (v10);
		\draw[red,thick] (v10) -- (v11); \draw (v11) -- (v12);			\draw (v12) --
		(v13);
	\end{scope}
	\begin{scope}[xshift=30cm,yshift=5cm,scale=1.2]
		\draw[red, line width=2.5pt, line cap=round, line join=round]
		(-0.5,0.95) -- (0.3,0) ;
		\draw[red, line width=2.5pt, line cap=round, line join=round]	(-0.5,0) --
		(0.3,0.95);
	\end{scope}
\end{tikzpicture}
\caption{(left) $3$-ladder graph $L_6$ --- a $6$-vertex graph without a FOR(3) [Lemma \ref{2-kwadratki}]. (middle) $L_6$
	highlighted in  $M_{10778}$. (right) $M_{10778}$ is isomorphic to a circulant
	graph; the graph shows the vertices rearranged to better visualize it.}
\label{kwadraty}
\end{figure}
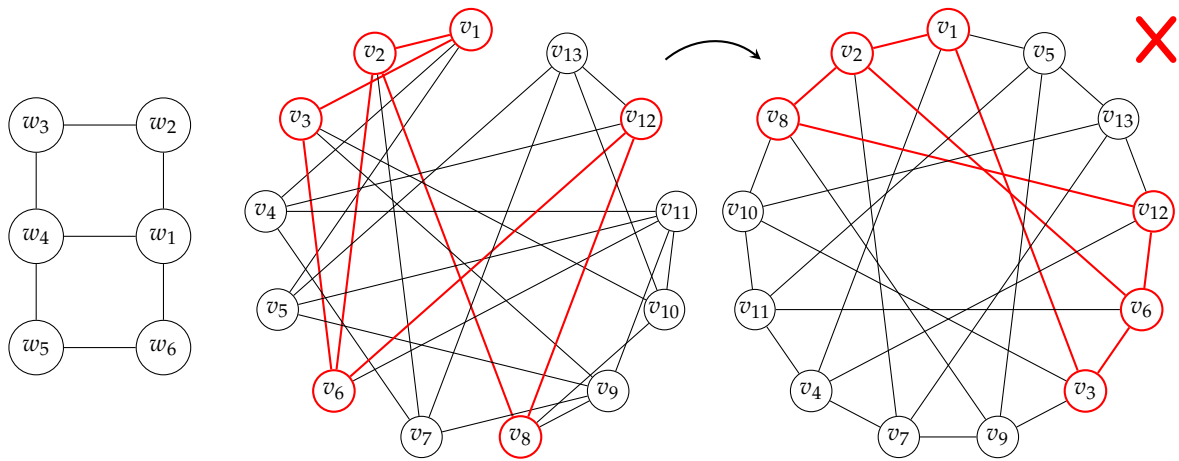
\begin{lem}\label{2-kwadratki}
$L_6$ does not admit a FOR(3). 
\end{lem}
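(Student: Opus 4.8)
The plan is to reduce everything to Lemma~\ref{4cykl}(i). First I would record the structure of $L_6$: it is the union of two $4$-cycles, $w_1w_2w_3w_4$ and $w_1w_4w_5w_6$, glued along the edge $\{w_1,w_4\}$, so that $\deg(w_1)=\deg(w_4)=3$ while $\deg(w_2)=\deg(w_3)=\deg(w_5)=\deg(w_6)=2$; explicitly $\mathcal{N}(w_1)=\{w_2,w_4,w_6\}$, $\mathcal{N}(w_3)=\{w_2,w_4\}$, $\mathcal{N}(w_2)=\{w_1,w_3\}$, and $\mathcal{N}(w_4)=\{w_1,w_3,w_5\}$. Assume, for contradiction, that a FOR(3) of $L_6$ exists.

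Next I would invoke the elementary principle already used in Lemmas~\ref{grafy-bez-for} and~\ref{repra-5057}: in a faithful OR two vertices carrying the same vector must have identical neighborhoods. Indeed, if $\ket{w_i}=\ket{w_j}$ and $w_k\in\mathcal{N}(w_i)$, then $\braket{w_k}{w_j}=\braket{w_k}{w_i}=0$, so by faithfulness $w_k\in\mathcal{N}(w_j)$; by symmetry $\mathcal{N}(w_i)=\mathcal{N}(w_j)$ (and $w_i\not\sim w_j$).

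Now restrict the assumed FOR(3) to the induced square on $\{w_1,w_2,w_3,w_4\}$: the four assigned vectors form an orthogonal representation of $C_4$ in dimension $3$ (the four cycle-edges remain orthogonal — faithfulness of the restriction is not needed), with $\ket{w_1}\not\perp\ket{w_3}$ and $\ket{w_2}\not\perp\ket{w_4}$ since the diagonals $\{w_1,w_3\}$ and $\{w_2,w_4\}$ are non-edges of $L_6$. By Lemma~\ref{4cykl}(i) one of the two diagonals must carry a repeated vector, i.e.\ $\ket{w_1}=\ket{w_3}$ or $\ket{w_2}=\ket{w_4}$. But $\ket{w_1}=\ket{w_3}$ would give $\ket{w_6}\perp\ket{w_1}=\ket{w_3}$, making $w_3$ and $w_6$ adjacent, contradicting $w_3\not\sim w_6$; and $\ket{w_2}=\ket{w_4}$ would give $\ket{w_5}\perp\ket{w_4}=\ket{w_2}$, making $w_2$ and $w_5$ adjacent, contradicting $w_2\not\sim w_5$ (equivalently, $\mathcal{N}(w_1)\ne\mathcal{N}(w_3)$ and $\mathcal{N}(w_2)\ne\mathcal{N}(w_4)$, so neither option is compatible with the principle above). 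This contradiction finishes the argument; the symmetric square $w_1w_4w_5w_6$ gives the same contradiction via the opposite pairs $(w_1,w_5)$ and $(w_4,w_6)$.

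I do not expect a real obstacle: once Lemma~\ref{4cykl}(i) is in hand the argument is immediate. The only point needing a little care is the bookkeeping — checking that the relevant opposite pairs ($w_1,w_3$ and $w_2,w_4$, respectively $w_1,w_5$ and $w_4,w_6$) fail to be twins in $L_6$, and noting that the restriction of a FOR(3) of $L_6$ to a $4$-cycle is still an orthogonal representation of that cycle, which is what Lemma~\ref{4cykl}(i) requires.
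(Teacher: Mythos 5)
Your proof is correct and takes essentially the same route as the paper: apply Lemma~\ref{4cykl}(i) to one induced square of $L_6$ and exclude both diagonal identifications because the corresponding vertices have different neighborhoods in a faithful representation (the paper states the pairs as $(w_2,w_4)$ and $(w_3,w_5)$, which corresponds to a cyclic relabeling of the ladder, but the argument is identical). One cosmetic point: you call the square an orthogonal representation of ``$C_4$'', whereas in this paper $C_4$ denotes the $4$-clique, so it is safer to say ``the square graph''.
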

\begin{proof}
It follows from Lemma \ref{4cykl} that it necessarily holds that
$\ket{w_2}=\ket{w_4}$  or
$\ket{w_3}=\ket{w_5}$. None of these cases is possible,
though, as $\mathcal{N}(w_2) \ne \mathcal{N}(w_4)$ and   $\mathcal{N}(w_3) \ne
\mathcal{N}(w_5)$.
\end{proof}

The adjacency matrix for the graph is
\begin{equation}
\mathcal{A}(L_6)=	\left(
\begin{array}{cccccc}
	0 & 1 & 0 & 1 &0& 1 \\
	1 & 0 & 1 & 0 &0&0\\
	0 & 1 & 0 & 1 &0&0\\
	1 & 0 & 1 & 0& 1&0\\
	0 &0 & 0 &1& 0&1\\
	1&0&0&0&1&0
\end{array} \right).
\end{equation}

\section{Analysis of graphs from Eq. \eqref{grafy-N}} \label{eny}

Here we analyze the $14$-vertex $4$-regular graphs not discarded by $\mathcal{O}_3$ [cf. Eq. \ref{grafy-N}]; they are depicted in Fig. \ref{14-vertices-appendix}. We will now show that there is only a single graph among them not having a  FOR(3). 
\begin{figure}[h!tp]
	\begin{tikzpicture}[scale=0.415]
		\centering
		\begin{scope}[xshift=5.8cm,yshift=-9.3cm,scale=1.2]
			\draw[red, line width=2.5pt, line cap=round, line join=round]
			(-0.5,0.95) -- (0.3,0) ;
			\draw[red, line width=2.5pt, line cap=round, line join=round]	(-0.5,0) --
			(0.3,0.95);
		\end{scope}
		\begin{scope}[xshift=5.8cm,yshift=5.7cm,scale=1.2]
			\draw[ufogreen, line width=2.5pt, line cap=round, line join=round]
			(-0.3,0.4) -- (0,0) -- (0.6,0.95);
		\end{scope}
		\begin{scope}[xshift=20.3cm,yshift=5.7cm,scale=1.2]
			\draw[ufogreen, line width=2.5pt, line cap=round, line join=round]
			(-0.3,0.4) -- (0,0) -- (0.6,0.95);
		\end{scope}
		\begin{scope}[xshift=34.7cm,yshift=5.7cm,scale=1.2]
			\draw[ufogreen, line width=2.5pt, line cap=round, line join=round]
			(-0.3,0.4) -- (0,0) -- (0.6,0.95);
		\end{scope}
		\begin{scope}[xshift=20.3cm,yshift=-9.3cm,scale=1.2]
			\draw[ufogreen, line width=2.5pt, line cap=round, line join=round]
			(-0.3,0.4) -- (0,0) -- (0.6,0.95);
		\end{scope}
		\begin{scope}[xshift=34.7cm,yshift=-9.3cm,scale=1.2]
			\draw[ufogreen, line width=2.5pt, line cap=round, line join=round]
			(-0.3,0.4) -- (0,0) -- (0.6,0.95);
		\end{scope}
		\begin{scope}[xshift=20.3cm,yshift=-24.3cm,scale=1.2]
			\draw[ufogreen, line width=2.5pt, line cap=round, line join=round]
			(-0.3,0.4) -- (0,0) -- (0.6,0.95);
		\end{scope}
		\begin{scope}
			\GraphInit[vstyle=Classic]
			\SetGraphUnit{2.5}
			\node[circle, draw, fill=white, minimum size=10pt, inner sep=1.8pt] (v1) at ({90+0*360/14}:6.5cm) {$v_1$};
			\node[circle, draw, fill=white, minimum size=12pt, inner sep=1.8pt] (v2) at ({90+1*360/14}:6.5cm) {$v_2$};
			\node[circle, draw, fill=brown, minimum size=12pt, inner sep=1.8pt] (v3) at ({90+2*360/14}:6.5cm) {$v_3$};
			\node[circle, draw, fill=brown, minimum size=12pt, inner sep=1.8pt] (v4) at ({90+3*360/14}:6.5cm) {$v_4$};
			\node[circle, draw, fill=brown, minimum size=12pt, inner sep=1.8pt] (v5) at ({90+4*360/14}:6.5cm) {$v_5$};
			\node[circle, draw, fill=white, minimum size=12pt, inner sep=1.8pt] (v6) at ({90+5*360/14}:6.5cm) {$v_6$};
			\node[circle, draw, fill=white, minimum size=12pt, inner sep=1.8pt] (v7) at ({90+6*360/14}:6.5cm) {$v_7$};
			\node[circle, draw, fill=white, minimum size=12pt, inner sep=1.8pt] (v8) at ({90+7*360/14}:6.5cm) {$v_8$};	
			\node[circle, draw, fill=white, minimum size=12pt, inner sep=1.8pt] (v9) at ({90+8*360/14}:6.5cm) {$v_9$};	
			\node[circle, draw, fill=lime, minimum size=12pt,inner sep=0.6pt] (v10) at ({90+9*360/14}:6.5cm) {$v_{10}$};	
			\node[circle, draw, fill=lime, minimum size=12pt,inner sep=0.6pt] (v11) at ({90+10*360/14}:6.5cm) {$v_{11}$};	
			\node[circle, draw, fill=lime, minimum size=12pt,inner sep=0.6pt] (v12) at ({90+11*360/14}:6.5cm) {$v_{12}$};	
			\node[circle, draw, fill=white, minimum size=12pt,inner sep=0.6pt] (v13) at ({90+12*360/14}:6.5cm) {$v_{13}$};	
			\node[circle, draw, fill=white, minimum size=12pt,inner sep=0.6pt] (v14) at ({90+13*360/14}:6.5cm) {$v_{14}$};		
			\draw (v1) -- (v2);		\draw (v1) -- (v3);		\draw (v1) -- (v4);		\draw (v1) -- (v5);		
			\draw (v2) -- (v3);		\draw (v2) -- (v4);		\draw (v2) -- (v5);		
			\draw (v3) -- (v6);		\draw (v3) -- (v7);		
			\draw (v4) -- (v6);		\draw (v4) -- (v7);		
			\draw (v5) -- (v6);		\draw (v5) -- (v7);		
			\draw (v6) -- (v8);				\draw (v7) -- (v9);
			\draw (v8) -- (v10);		\draw (v8) -- (v11);		\draw (v8) -- (v12);
			\draw (v9) -- (v10);		\draw (v9) -- (v11);		\draw (v9) -- (v12);				\draw (v10) -- (v13);		\draw (v10) -- (v14);		
			\draw (v11) -- (v13);		\draw (v11) -- (v14);				\draw (v12) -- (v13);		\draw (v12) -- (v14);				\draw (v13) -- (v14);
		\end{scope}
		\begin{scope}[xshift=14.5cm]
			\GraphInit[vstyle=Classic]
			\SetGraphUnit{2.5}
			\node[circle, draw, fill=white, minimum size=10pt, inner sep=1.8pt] (v1) at ({90+0*360/14}:6.5cm) {$v_1$};
			\node[circle, draw, fill=lime, minimum size=12pt, inner sep=1.8pt] (v2) at ({90+1*360/14}:6.5cm) {$v_2$};
			\node[circle, draw, fill=brown, minimum size=12pt, inner sep=1.8pt] (v3) at ({90+2*360/14}:6.5cm) {$v_3$};
			\node[circle, draw, fill=brown, minimum size=12pt, inner sep=1.8pt] (v4) at ({90+3*360/14}:6.5cm) {$v_4$};
			\node[circle, draw, fill=lime, minimum size=12pt, inner sep=1.8pt] (v5) at ({90+4*360/14}:6.5cm) {$v_5$};
			\node[circle, draw, fill=white, minimum size=12pt, inner sep=1.8pt] (v6) at ({90+5*360/14}:6.5cm) {$v_6$};
			\node[circle, draw, fill=white, minimum size=12pt, inner sep=1.8pt] (v7) at ({90+6*360/14}:6.5cm) {$v_7$};
			\node[circle, draw, fill=gray, minimum size=12pt, inner sep=1.8pt] (v8) at ({90+7*360/14}:6.5cm) {$v_8$};	
			\node[circle, draw, fill=gray, minimum size=12pt, inner sep=1.8pt] (v9) at ({90+8*360/14}:6.5cm) {$v_9$};	
			\node[circle, draw, fill=violet, minimum size=12pt,inner sep=0.6pt] (v10) at ({90+9*360/14}:6.5cm) {$v_{10}$};	
			\node[circle, draw, fill=violet, minimum size=12pt,inner sep=0.6pt] (v11) at ({90+10*360/14}:6.5cm) {$v_{11}$};	
			\node[circle, draw, fill=cyan, minimum size=12pt,inner sep=0.6pt] (v12) at ({90+11*360/14}:6.5cm) {$v_{12}$};	
			\node[circle, draw, fill=cyan, minimum size=12pt,inner sep=0.6pt] (v13) at ({90+12*360/14}:6.5cm) {$v_{13}$};	
			\node[circle, draw, fill=cyan, minimum size=12pt,inner sep=0.6pt] (v14) at ({90+13*360/14}:6.5cm) {$v_{14}$};		
			\draw (v1) -- (v2);\draw (v1) -- (v3);
			\draw (v1) -- (v4);\draw (v1) -- (v5);
			\draw (v2) -- (v3);\draw (v2) -- (v4);
			\draw (v2) -- (v6);\draw (v3) -- (v5);
			\draw (v3) -- (v7);\draw (v4) -- (v5);
			\draw (v4) -- (v7);\draw (v5) -- (v6);
			\draw (v6) -- (v8);\draw (v6) -- (v9);
			\draw (v7) -- (v10);\draw (v7) -- (v11);
			\draw (v8) -- (v12);\draw (v8) -- (v13);
			\draw (v8) -- (v14);\draw (v9) -- (v12);
			\draw (v9) -- (v13);\draw (v9) -- (v14);
			\draw (v10) -- (v12);\draw (v10) -- (v13);
			\draw (v10) -- (v14);\draw (v11) -- (v12);
			\draw (v11) -- (v13);\draw (v11) -- (v14);
		\end{scope}
		\begin{scope}[xshift=29cm]
			\GraphInit[vstyle=Classic]
			\SetGraphUnit{2.5}
			\node[circle, draw, fill=white, minimum size=10pt, inner sep=1.8pt] (v1) at ({90+0*360/14}:6.5cm) {$v_1$};
			\node[circle, draw, fill=white, minimum size=12pt, inner sep=1.8pt] (v2) at ({90+1*360/14}:6.5cm) {$v_2$};
			\node[circle, draw, fill=brown, minimum size=12pt, inner sep=1.8pt] (v3) at ({90+2*360/14}:6.5cm) {$v_3$};
			\node[circle, draw, fill=brown, minimum size=12pt, inner sep=1.8pt] (v4) at ({90+3*360/14}:6.5cm) {$v_4$};
			\node[circle, draw, fill=white, minimum size=12pt, inner sep=1.8pt] (v5) at ({90+4*360/14}:6.5cm) {$v_5$};
			\node[circle, draw, fill=white, minimum size=12pt, inner sep=1.8pt] (v6) at ({90+5*360/14}:6.5cm) {$v_6$};
			\node[circle, draw, fill=cyan, minimum size=12pt, inner sep=1.8pt] (v7) at ({90+6*360/14}:6.5cm) {$v_7$};
			\node[circle, draw, fill=cyan, minimum size=12pt, inner sep=1.8pt] (v8) at ({90+7*360/14}:6.5cm) {$v_8$};	
			\node[circle, draw, fill=lime, minimum size=12pt, inner sep=1.8pt] (v9) at ({90+8*360/14}:6.5cm) {$v_9$};	
			\node[circle, draw, fill=lime, minimum size=12pt,inner sep=0.6pt] (v10) at ({90+9*360/14}:6.5cm) {$v_{10}$};	
			\node[circle, draw, fill=white, minimum size=12pt,inner sep=0.6pt] (v11) at ({90+10*360/14}:6.5cm) {$v_{11}$};	
			\node[circle, draw, fill=white, minimum size=12pt,inner sep=0.6pt] (v12) at ({90+11*360/14}:6.5cm) {$v_{12}$};	
			\node[circle, draw, fill=white, minimum size=12pt,inner sep=0.6pt] (v13) at ({90+12*360/14}:6.5cm) {$v_{13}$};	
			\node[circle, draw, fill=white, minimum size=12pt,inner sep=0.6pt] (v14) at ({90+13*360/14}:6.5cm) {$v_{14}$};		
			\draw (v1) -- (v2);
			\draw (v1) -- (v3);		\draw (v1) -- (v4);
			\draw (v1) -- (v5);		\draw (v2) -- (v3);		\draw (v2) -- (v4);		\draw (v2) -- (v6);
			\draw (v3) -- (v7);		\draw (v3) -- (v8);
			\draw (v4) -- (v7);		\draw (v4) -- (v8);
			\draw (v5) -- (v9);		\draw (v5) -- (v10);
			\draw (v5) -- (v11);		
			\draw (v6) -- (v9);		\draw (v6) -- (v10);
			\draw (v6) -- (v12);		
			\draw (v7) -- (v13);		\draw (v7) -- (v14);
			\draw (v8) -- (v13);
			\draw (v8) -- (v14);		
			\draw (v9) -- (v11);		\draw (v9) -- (v12);
			\draw (v10) -- (v11);
			\draw (v10) -- (v12);		
			\draw (v11) -- (v13);		
			\draw (v12) -- (v14);
			\draw (v13) -- (v14);

		\end{scope}
		%
		%
		%
		\begin{scope}[yshift=-15cm]
			\GraphInit[vstyle=Classic]
			\SetGraphUnit{2.5}
			\node[circle, draw=red,thick, fill=white, minimum size=10pt, inner sep=1.8pt] (v1) at ({90+0*360/14}:6.5cm) {$v_1$};
			\node[circle, draw=red,thick, fill=white, minimum size=12pt, inner sep=1.8pt] (v2) at ({90+1*360/14}:6.5cm) {$v_2$};
			\node[circle, draw=red,thick, fill=white, minimum size=12pt, inner sep=1.8pt] (v3) at ({90+2*360/14}:6.5cm) {$v_3$};
			\node[circle, draw=red,thick, fill=white, minimum size=12pt, inner sep=1.8pt] (v4) at ({90+3*360/14}:6.5cm) {$v_4$};
			\node[circle, draw=red,thick, fill=white, minimum size=12pt, inner sep=1.8pt] (v5) at ({90+4*360/14}:6.5cm) {$v_5$};
			\node[circle, draw=red,thick, fill=white, minimum size=12pt, inner sep=1.8pt] (v6) at ({90+5*360/14}:6.5cm) {$v_6$};
			\node[circle, draw=red,thick, fill=white, minimum size=12pt, inner sep=1.8pt] (v7) at ({90+6*360/14}:6.5cm) {$v_7$};
			\node[circle, draw=red,thick, fill=white, minimum size=12pt, inner sep=1.8pt] (v8) at ({90+7*360/14}:6.5cm) {$v_8$};	
			\node[circle, draw=red,thick, fill=white, minimum size=12pt, inner sep=1.8pt] (v9) at ({90+8*360/14}:6.5cm) {$v_9$};	
			\node[circle, draw=red,thick, fill=white, minimum size=12pt,inner sep=0.6pt] (v10) at ({90+9*360/14}:6.5cm) {$v_{10}$};	
			\node[circle, draw, fill=white, minimum size=12pt,inner sep=0.6pt] (v11) at ({90+10*360/14}:6.5cm) {$v_{11}$};	
			\node[circle, draw=red,thick, fill=white, minimum size=12pt,inner sep=0.6pt] (v12) at ({90+11*360/14}:6.5cm) {$v_{12}$};	
			\node[circle, draw, fill=white, minimum size=12pt,inner sep=0.6pt] (v13) at ({90+12*360/14}:6.5cm) {$v_{13}$};	
			\node[circle, draw, fill=white, minimum size=12pt,inner sep=0.6pt] (v14) at ({90+13*360/14}:6.5cm) {$v_{14}$};		
			\draw[red,thick] (v1) -- (v2);	\draw[red,thick] (v1) -- (v3);	\draw[red,thick] (v1) -- (v4);
			\draw[red,thick] (v1) -- (v5);		\draw[red,thick] (v2) -- (v3);
			\draw[red,thick] (v2) -- (v6);	\draw[red,thick] (v2) -- (v7);	
			\draw[red,thick] (v3) -- (v8);	\draw[red,thick] (v3) -- (v9);		\draw[red,thick] (v4) -- (v5);
			\draw[red,thick] (v4) -- (v10);	\draw (v4) -- (v11);		\draw[red,thick] (v5) -- (v12);	\draw (v5) -- (v13);	
			\draw[red,thick] (v6) -- (v7);	\draw[red,thick] (v6) -- (v10);
			\draw (v6) -- (v11);		\draw[red,thick] (v7) -- (v12);
			\draw (v7) -- (v13);		\draw[red,thick] (v8) -- (v9);
			\draw[red,thick] (v8) -- (v10);	\draw (v8) -- (v11);		\draw[red,thick] (v9) -- (v12);	\draw (v9) -- (v13);	
			\draw (v10) -- (v14);	\draw (v11) -- (v14);	\draw (v12) -- (v14);	\draw (v13) -- (v14);

		\end{scope}
		\begin{scope}[xshift=14.5cm,yshift=-15cm]
			\GraphInit[vstyle=Classic]
			\SetGraphUnit{2.5}
			\node[circle, draw, fill=white, minimum size=10pt, inner sep=1.8pt] (v1) at ({90+0*360/14}:6.5cm) {$v_1$};
			\node[circle, draw, fill=brown, minimum size=12pt, inner sep=1.8pt] (v2) at ({90+1*360/14}:6.5cm) {$v_2$};
			\node[circle, draw, fill=brown, minimum size=12pt, inner sep=1.8pt] (v3) at ({90+2*360/14}:6.5cm) {$v_3$};
			\node[circle, draw, fill=brown, minimum size=12pt, inner sep=1.8pt] (v4) at ({90+3*360/14}:6.5cm) {$v_4$};
			\node[circle, draw, fill=white, minimum size=12pt, inner sep=1.8pt] (v5) at ({90+4*360/14}:6.5cm) {$v_5$};
			\node[circle, draw, fill=white, minimum size=12pt, inner sep=1.8pt] (v6) at ({90+5*360/14}:6.5cm) {$v_6$};
			\node[circle, draw, fill=white, minimum size=12pt, inner sep=1.8pt] (v7) at ({90+6*360/14}:6.5cm) {$v_7$};
			\node[circle, draw, fill=white, minimum size=12pt, inner sep=1.8pt] (v8) at ({90+7*360/14}:6.5cm) {$v_8$};	
			\node[circle, draw, fill=lime, minimum size=12pt, inner sep=1.8pt] (v9) at ({90+8*360/14}:6.5cm) {$v_9$};	
			\node[circle, draw, fill=lime, minimum size=12pt,inner sep=0.6pt] (v10) at ({90+9*360/14}:6.5cm) {$v_{10}$};	
			\node[circle, draw, fill=lime, minimum size=12pt,inner sep=0.6pt] (v11) at ({90+10*360/14}:6.5cm) {$v_{11}$};	
			\node[circle, draw, fill=white, minimum size=12pt,inner sep=0.6pt] (v12) at ({90+11*360/14}:6.5cm) {$v_{12}$};	
			\node[circle, draw, fill=white, minimum size=12pt,inner sep=0.6pt] (v13) at ({90+12*360/14}:6.5cm) {$v_{13}$};	
			\node[circle, draw, fill=white, minimum size=12pt,inner sep=0.6pt] (v14) at ({90+13*360/14}:6.5cm) {$v_{14}$};		
			\draw (v1) -- (v2);	\draw (v1) -- (v3);
			\draw (v1) -- (v4);	\draw (v1) -- (v5);	\draw (v2) -- (v6);	\draw (v2) -- (v7);	\draw (v2) -- (v8);	
			\draw (v3) -- (v6);	\draw (v3) -- (v7);	\draw (v3) -- (v8);		\draw (v4) -- (v6);
			\draw (v4) -- (v7);	\draw (v4) -- (v8);		\draw (v5) -- (v9);	\draw (v5) -- (v10);
			\draw (v5) -- (v11);	
			\draw (v6) -- (v12);\draw (v7) -- (v13);\draw (v8) -- (v14);	
			\draw (v9) -- (v12);	\draw (v9) -- (v13);	\draw (v9) -- (v14);	
			\draw (v10) -- (v12);
			\draw (v10) -- (v13);	\draw (v10) -- (v14);	
			\draw (v11) -- (v12);	\draw (v11) -- (v13);	\draw (v11) -- (v14);

		\end{scope}
		\begin{scope}[xshift=29cm,yshift=-15cm]
			\GraphInit[vstyle=Classic]
			\SetGraphUnit{2.5}
			\node[circle, draw, fill=cyan, minimum size=10pt, inner sep=1.8pt] (v1) at ({90+0*360/14}:6.5cm) {$v_1$};
			\node[circle, draw, fill=brown, minimum size=12pt, inner sep=1.8pt] (v2) at ({90+1*360/14}:6.5cm) {$v_2$};
			\node[circle, draw, fill=brown, minimum size=12pt, inner sep=1.8pt] (v3) at ({90+2*360/14}:6.5cm) {$v_3$};
			\node[circle, draw, fill=magenta, minimum size=12pt, inner sep=1.8pt] (v4) at ({90+3*360/14}:6.5cm) {$v_4$};
			\node[circle, draw, fill=magenta, minimum size=12pt, inner sep=1.8pt] (v5) at ({90+4*360/14}:6.5cm) {$v_5$};
			\node[circle, draw, fill=cyan, minimum size=12pt, inner sep=1.8pt] (v6) at ({90+5*360/14}:6.5cm) {$v_6$};
			\node[circle, draw, fill=gray, minimum size=12pt, inner sep=1.8pt] (v7) at ({90+6*360/14}:6.5cm) {$v_7$};
			\node[circle, draw, fill=gray, minimum size=12pt, inner sep=1.8pt] (v8) at ({90+7*360/14}:6.5cm) {$v_8$};	
			\node[circle, draw, fill=orange, minimum size=12pt, inner sep=1.8pt] (v9) at ({90+8*360/14}:6.5cm) {$v_9$};	
			\node[circle, draw, fill=orange, minimum size=12pt,inner sep=0.6pt] (v10) at ({90+9*360/14}:6.5cm) {$v_{10}$};	
			\node[circle, draw, fill=pink, minimum size=12pt,inner sep=0.6pt] (v11) at ({90+10*360/14}:6.5cm) {$v_{11}$};	
			\node[circle, draw, fill=pink, minimum size=12pt,inner sep=0.6pt] (v12) at ({90+11*360/14}:6.5cm) {$v_{12}$};	
			\node[circle, draw, fill=violet, minimum size=12pt,inner sep=0.6pt] (v13) at ({90+12*360/14}:6.5cm) {$v_{13}$};	
			\node[circle, draw, fill=violet, minimum size=12pt,inner sep=0.6pt] (v14) at ({90+13*360/14}:6.5cm) {$v_{14}$};		
			\draw (v1) -- (v2);	\draw (v1) -- (v3);	\draw (v1) -- (v4);
			\draw (v1) -- (v5);		\draw (v2) -- (v6);
			\draw (v2) -- (v7);	\draw (v2) -- (v8);	
			\draw (v3) -- (v6);	\draw (v3) -- (v7);
			\draw (v3) -- (v8);		\draw (v4) -- (v6);
			\draw (v4) -- (v9);	\draw (v4) -- (v10);	
			\draw (v5) -- (v6);	\draw (v5) -- (v9);
			\draw (v5) -- (v10);		\draw (v7) -- (v11);
			\draw (v7) -- (v12);		\draw (v8) -- (v11);
			\draw (v8) -- (v12);		\draw (v9) -- (v13);
			\draw (v9) -- (v14);		\draw (v10) -- (v13);
			\draw (v10) -- (v14);		\draw (v11) -- (v13);
			\draw (v11) -- (v14);		\draw (v12) -- (v13);
			\draw (v12) -- (v14);		
		\end{scope}
		%
		%
		%
		\begin{scope}[xshift=14.5cm,yshift=-30cm]
			\GraphInit[vstyle=Classic]
			\SetGraphUnit{2.5}
			\node[circle, draw, fill=lime, minimum size=10pt, inner sep=1.8pt] (v1) at ({90+0*360/14}:6.5cm) {$v_1$};
			\node[circle, draw, fill=brown, minimum size=12pt, inner sep=1.8pt] (v2) at ({90+1*360/14}:6.5cm) {$v_2$};
			\node[circle, draw, fill=brown, minimum size=12pt, inner sep=1.8pt] (v3) at ({90+2*360/14}:6.5cm) {$v_3$};
			\node[circle, draw, fill=new-color, minimum size=12pt, inner sep=1.8pt] (v4) at ({90+3*360/14}:6.5cm) {$v_4$};
			\node[circle, draw, fill=new-color, minimum size=12pt, inner sep=1.8pt] (v5) at ({90+4*360/14}:6.5cm) {$v_5$};
			\node[circle, draw, fill=lime, minimum size=12pt, inner sep=1.8pt] (v6) at ({90+5*360/14}:6.5cm) {$v_6$};
			\node[circle, draw, fill=white, minimum size=12pt, inner sep=1.8pt] (v7) at ({90+6*360/14}:6.5cm) {$v_7$};
			\node[circle, draw, fill=white, minimum size=12pt, inner sep=1.8pt] (v8) at ({90+7*360/14}:6.5cm) {$v_8$};	
			\node[circle, draw, fill=white, minimum size=12pt, inner sep=1.8pt] (v9) at ({90+8*360/14}:6.5cm) {$v_9$};	
			\node[circle, draw, fill=white, minimum size=12pt,inner sep=0.6pt] (v10) at ({90+9*360/14}:6.5cm) {$v_{10}$};	
			\node[circle, draw, fill=orange, minimum size=12pt,inner sep=0.6pt] (v11) at ({90+10*360/14}:6.5cm) {$v_{11}$};	
			\node[circle, draw, fill=orange, minimum size=12pt,inner sep=0.6pt] (v12) at ({90+11*360/14}:6.5cm) {$v_{12}$};	
			\node[circle, draw, fill=cyan, minimum size=12pt,inner sep=0.6pt] (v13) at ({90+12*360/14}:6.5cm) {$v_{13}$};	
			\node[circle, draw, fill=cyan, minimum size=12pt,inner sep=0.6pt] (v14) at ({90+13*360/14}:6.5cm) {$v_{14}$};		
			\draw (v1) -- (v2);	\draw (v1) -- (v3);
			\draw (v1) -- (v4);	\draw (v1) -- (v5);	
			\draw (v2) -- (v6);	\draw (v2) -- (v7);
			\draw (v2) -- (v8);		\draw (v3) -- (v6);
			\draw (v3) -- (v7);	\draw (v3) -- (v8);	
			\draw (v4) -- (v6);	\draw (v4) -- (v9);
			\draw (v4) -- (v10);		\draw (v5) -- (v6);
			\draw (v5) -- (v9);	\draw (v5) -- (v10);	
			\draw (v7) -- (v11);	\draw (v7) -- (v12);	
			\draw (v8) -- (v13);	\draw (v8) -- (v14);	
			\draw (v9) -- (v11);	\draw (v9) -- (v12);	
			\draw (v10) -- (v13);	\draw (v10) -- (v14);	
			\draw (v11) -- (v13);	\draw (v11) -- (v14);	
			\draw (v12) -- (v13);	\draw (v12) -- (v14);	
		\end{scope}
	\end{tikzpicture}
	\caption{The seven graphs not eliminated by $\mathcal{O}_3$ in case of $14$-vertex $4$-regular graphs. First row (from the left): $N_{2359}$, $N_{11743}$, $N_{36919}$; second row (from the left): $N_{80015}$, $N_{87949}$, $N_{87956}$; third row: $N_{87957}$. Only one of these graphs -- $N_{80015}$ -- does not admit a FOR(3); a forbidden induced subgraph is highlighted in red ($\hat{N}_{11}$, see Appendix \ref{n11}). Same color vertices in a graph have necessarily identical vectors in a FOR(3) (see the main text).} \label{14-vertices-appendix}
\end{figure}
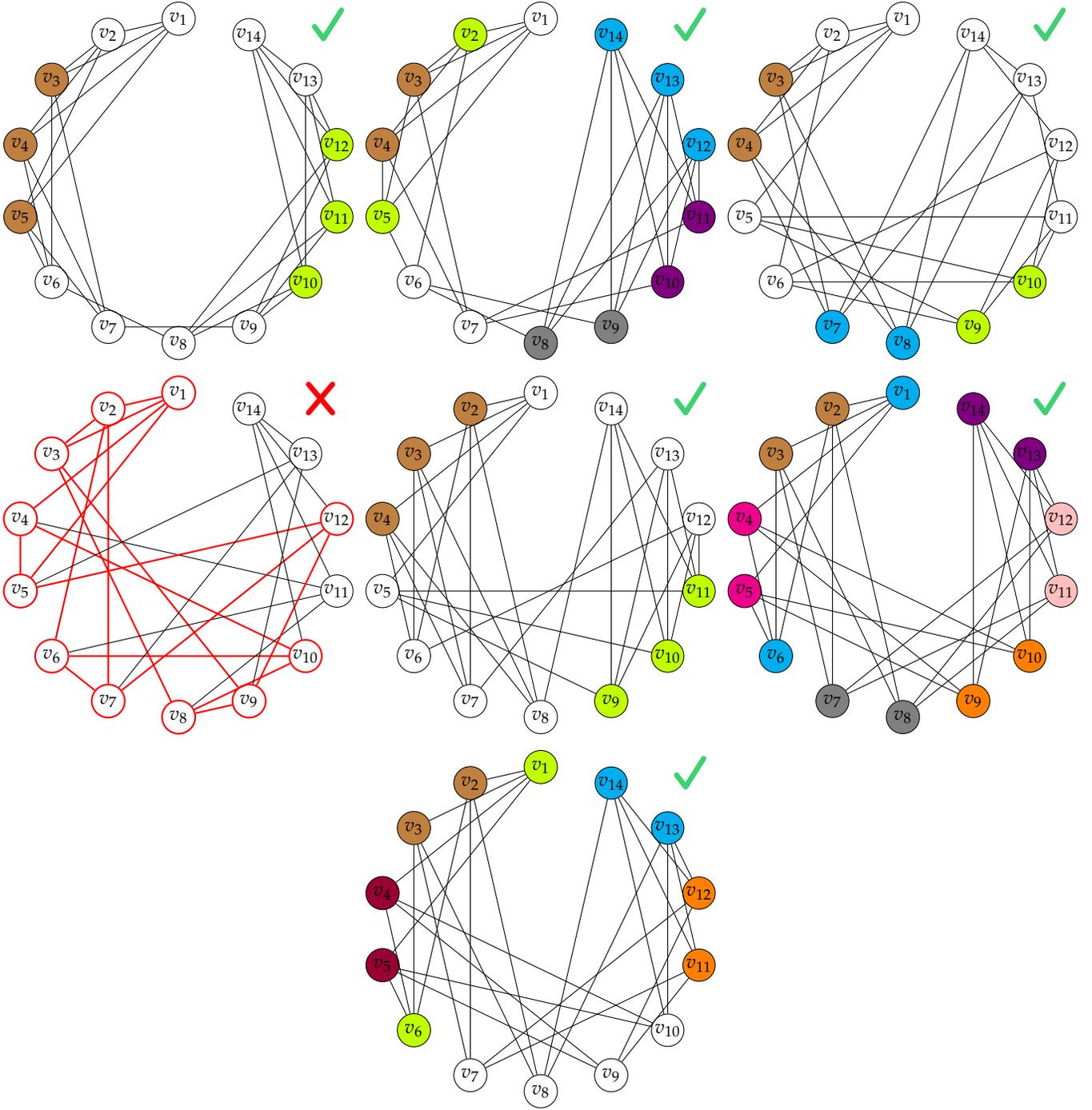

\subsection{Graph $N_{2359}$}

An exemplary FOR(3) is as follows:
\begin{align}
	&\repr{1}=\ket{0}, \quad \repr{2}=\ket{1}, \non &\repr{3}=\repr{4}=\repr{5}=\ket{2}, \nonumber \\ 
	&\repr{6}=\ket{0}+2\ket{1}, \quad\repr{7}=\ket{0}-\ket{1}, \nonumber\\
	&\repr{8}=2\ket{0}-\ket{1}+4\ket{2}, \quad \repr{9}=\ket{0}+\ket{1}-\ket{2}, \\
	& \repr{10}=\repr{11}=\repr{12} =\ket{0}-2\ket{1}-\ket{2}, \nonumber \\
	& \repr{13}= 7\ket{0}+4\ket{1}-\ket{2},\quad \repr{14}=\ket{0}+\ket{1}-3\ket{2}.\nonumber 
\end{align}

It is evident that it necessarily holds $\repr{3}=\repr{4}=\repr{5}$ (vertices marked in brown in Fig. \ref{14-vertices-appendix}) and $\repr{10}=\repr{11}=\repr{12}$ (lime vertices) in any FOR(3) of $N_{2359}$. In turn, this graph cannot be a LOG for a GUPB as it violates Fact \ref{general-fakt}, because $\dim \mathrm{span} \{\repr{3},\repr{4},\repr{5}, \repr{10},\repr{11},\repr{12}\}=2$.

\subsection{Graph $N_{11743}$}

 A FOR(3) for the graph is:
\begin{align}
	& \repr{1}= \ket{0},\non  &\repr{2}=\repr{5}=\ket{1}, \non &\repr{3}=\repr{4}=\ket{2}, \non
	& \repr{6}= \zero+\dwa, \quad \repr{7}=\zero+\jeden,\\
	& \repr{8}=\repr{9}=\zero+2\jeden-\dwa, \non 
	&\repr{10}=\repr{11}=\zero-\jeden+\dwa, \non
	&\repr{12}=\repr{13}=\repr{14}=\zero-2\jeden-3\dwa. \nonumber
\end{align}

Considering relevant diamond graphs, we immediately find that $\repr{2}=\repr{5}$ (lime vertices) and $\repr{3}=\repr{4}$ (brown vertices) in any FOR(3). Further, considering the square graphs with vertices $(v_8,v_{10},v_{13},v_{14})$ and $(v_9,v_{11},v_{12},v_{14})$ and the relevant neighborhoods we find $\repr{12}=\repr{13}=\repr{14}$ (blue vertices). Similar consideration for the square graphs $(v_6,v_8,v_9,v_{14})$ and $(v_7,v_{10},v_{11},v_{14})$ gives, respectively, $\repr{8}=\repr{9}$ (gray) and $\repr{10}=\repr{11}$ (violet). These FORs are admissible by Fact \ref{general-fakt}.

\subsection{Graph $N_{36919}$}

A FOR(3) is given by the following set of vectors:
\begin{align}
& \repr{1}=\ket{0}, \quad \repr{2}=\ket{1}, \non &\repr{3}=\repr{4}=\ket{2}, \non
&\repr{5}= \ket{1}-x\ket{2}, \quad \repr{6}=\ket{0}-\ket{2}, \non
&\repr{7}=\repr{8}=\ket{0}-\ket{1}, \non &\repr{9}=\repr{10}=\ket{0}+x\ket{1}+\ket{2}, \non
&\repr{11}=(1+x^2)\ket{0}-x\ket{1}-\ket{2}, \\
&\repr{12}=x\ket{0}-2\ket{1}+x\ket{2}, \non
&\repr{13}=(x-2)\ket{0}+(x-2)\ket{1}+2x\ket{2},\non
&\repr{14}=x\ket{0}+x\ket{1}-(x-2)\ket{2},\nonumber
\end{align}
where $x$ is the real solution of $x^3-3x^2+x-2=0$, which is found with standard methods to be
\begin{align}
	 x= 1+ \sqrt[3]{\frac{3}{2}+\sqrt{\frac{211}{108}}} +\sqrt[3]{\frac{3}{2}-\sqrt{\frac{211}{108}}} \simeq 2.89329
	\end{align}

It is trivial to see that $\repr{3}=\repr{4}$ (brown vertices), $\repr{7}=\repr{8}$ (blue), and $\repr{9}=\repr{10}$ (lime) in any FOR(3). 

We have not found any argument either supporting or refuting this graph's validity as a candidate LOG. Nevertheless, among the graphs considered, it appears to be the most promising, as it has the fewest repeated vectors—only three in total.

\subsection{Graph $N_{80015}$}\label{n11}

This graph has  an appealing structure, which may not be that easy to infer from its form shown in Fig. \ref{14-vertices-appendix}.
To show non-existence of a FOR(3) for this graph, we consider its $11$-vertex induced subgraph with vertices $(v_{1},v_2,v_3,v_4,v_5,v_6,v_7,v_8,v_9,v_{10},v_{12})$ (see Fig. \ref{redraw-graph}); we will call it $\widehat{N}_{11}$. 

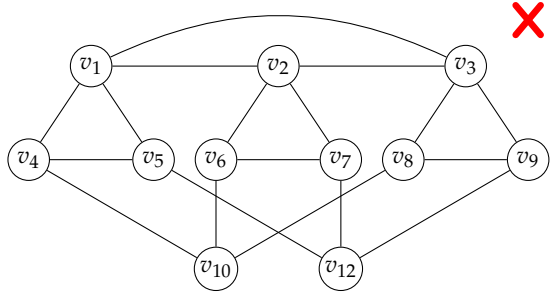
\begin{figure}[htp]
	\begin{tikzpicture}[scale=0.415]
		\centering
	\GraphInit[vstyle=Classic]
	\SetGraphUnit{2.5}
[every node/.style={circle, draw, fill=white, inner sep=2pt, minimum size=1cm}]
		\begin{scope}
		 \node[circle, draw, inner sep=2pt, minimum size=12pt, inner sep=1.8pt] (v1) at (-6,0) {$v_1$};
		\node[circle, draw, inner sep=2pt,minimum size=12pt, inner sep=1.8pt] (v2) at (0,0) {$v_2$};
		\node[circle, draw, inner sep=2pt, minimum size=12pt, inner sep=1.8pt] (v3) at (6,0) {$v_3$};
	\node[circle, draw, inner sep=2pt, minimum size=12pt, inner sep=1.8pt] (v4) at (-8,-3) {$v_4$};
	\node[circle, draw, inner sep=2pt, minimum size=12pt, inner sep=1.8pt] (v5) at (-4,-3) {$v_5$};
		\node[circle, draw, inner sep=2pt, minimum size=12pt, inner sep=1.8pt] (v6) at (-2,-3) {$v_6$};
			\node[circle, draw, inner sep=2pt, minimum size=12pt, inner sep=1.8pt] (v7) at (2,-3) {$v_7$};
				\node[circle, draw, inner sep=2pt, minimum size=12pt, inner sep=1.8pt] (v8) at (4,-3) {$v_8$};
					\node[circle, draw, inner sep=2pt, minimum size=12pt, inner sep=1.8pt] (v9) at (8,-3) {$v_9$};
					\node[circle, draw, inner sep=2pt, minimum size=10pt, inner sep=1pt] (v10) at (-2,-6.5) {$v_{10}$};
					\node[circle, draw, inner sep=2pt, minimum size=10pt, inner sep=1pt] (v12) at (2,-6.5) {$v_{12}$};
					%
		\draw (v1) -- (v2);
		\draw (v2) -- (v3);
		\draw (v1) -- (v4) -- (v5) -- (v1);
		\draw  (v2) -- (v6) -- (v7) -- (v2);
			\draw  (v3) -- (v8) -- (v9) -- (v3);
		\draw (v4) -- (v10) -- (v6);
			\draw (v8) -- (v10);
		\draw (v5) -- (v12) -- (v9);
			\draw (v7) -- (v12);
		\draw[bend left=25] (v1) to (v3);
	%
\end{scope}
	\begin{scope}[xshift=8.1cm,yshift=1cm,scale=1]
	\draw[red, line width=2.5pt, line cap=round, line join=round]
	(-0.5,0.95) -- (0.3,0) ;
	\draw[red, line width=2.5pt, line cap=round, line join=round]	(-0.5,0) --
	(0.3,0.95);
\end{scope}
\end{tikzpicture}
\caption{Induced subgraph $\widehat{N}_{11}$ of graph $N_{80015}$ without a FOR(3). } \label{redraw-graph}
\end{figure}
 Let $\repr{1}=\zero$, $\repr{2}=\jeden$, and $\repr{3}=\dwa$. We then have
 \begin{align}
& \repr{4}=\jeden + x\dwa, \quad \repr{5}=  x^* \jeden - \dwa  , \non
& \repr{6}=\zero+y\dwa, \quad \repr{7}=y^* \zero-\dwa, \\
&\repr{8}= \zero + z\jeden    , \quad \repr{9}= z^* \zero-\jeden    . \nonumber
\end{align}
with $x,y,z \ne 0$.  Further, it must hold
\begin{align}\label{war-dim}
	\dim \mathrm{span} \{\repr{4},\repr{6},\repr{8}\} =  \dim \mathrm{span} \{\repr{5},\repr{7},\repr{9}\} =2.
\end{align}
We thus  obtain conditions:
\begin{align}
\left| \begin{array}{ccc}
	0 & 1 & 1 \\
	1 & 0 & z \\
	x &y & 0
\end{array}   \right|=xz+y=0, \quad \left| \begin{array}{rrr}
0 & y^* & z^* \\
x^* & 0 & -1 \\
-1 & -1 & 0
\end{array}   \right|=-x^*z^*+y^*=0,
\end{align}
which are contradictory under our assumptions. Graph $\widehat{N}_{11}$ constitutes a forbidden induced subgraph. 

\subsection{Graph $N_{87949}$}

A FOR(3) for the graph in question is:
\begin{align}
	& \repr{1}=\ket{0}, \quad\repr{2}=\repr{3}=\repr{4}=\ket{1}, \non 	
	& \repr{5}=\ket{1}-\ket{2}, \quad \repr{6}=5\ket{0}-\ket{2}, \quad \repr{7}=3\ket{0}-\ket{2}, \quad \repr{8}=2\ket{0}-\ket{2}, \\
	& \repr{9}=\repr{10}=\repr{11}=\ket{0}-\ket{1}-\ket{2}, \non
	&\repr{12}=\ket{0}-4\ket{1}+5\ket{2},\quad \repr{13}=\ket{0}-2\ket{1}+3\ket{2},\quad \repr{14}=\ket{0}-\ket{1}+2\ket{2}. \nonumber
\end{align}

Considering the square graph with vertices $(v_3,v_4,v_6,v_7)$ we find that $\repr{3}=\repr{4}$ or/and $\repr{6}=\repr{7}$. Since $\mathcal{N}(v_6)\ne \mathcal{N}(v_7)$, we conclude $\repr{3}=\repr{4}$. Similar reasoning for the square graph $(v_2,v_3,v_6,v_7)$ leads to $\repr{2}=\repr{3}$, and in turn $\repr{2}=\repr{3}=\repr{4}$ (brown vertices). Analogous consideration for the square graphs $(v_9,v_{10}, v_{13},v_{14})$ and $(v_9,v_{10}, v_{12},v_{13})$ results in $\repr{9}=\repr{10}=\repr{11}$ (lime vertices). This necessarily holds in any FOR(3) of the graph, eliminating it -- by Fact \ref{general-fakt} -- from the set of candidate LOGs for GUPBs, as $\dim\mathrm{span}\{\repr{2},\repr{3},\repr{4},\repr{9},\repr{10},\repr{11}\}=2$.

\subsection{Graph $N_{87956}$}

This graph has very simple FORs(3). An exemplary one is as follows:

\begin{align}
& \repr{1}=\repr{6}=\ket{0}, \quad \repr{2}=\repr{3}=\ket{1}, \non
&\repr{4}=\repr{5}=2\ket{1}-\ket{2},  \quad	\repr{7}=\repr{8}=\ket{0}+\ket{2}, \\
&\repr{9}=\repr{10}=\ket{0}-\ket{1}-2\ket{2}, \non
&\repr{11}=\repr{12}=4\ket{0}-\ket{1}-4\ket{2}, \non
& \repr{13}=\repr{14}=2\ket{0}-4\ket{1}+3\ket{2}, \nonumber
\end{align}

Consideration of the following square graphs and relevant neighborhoods gives the properties holding for any FOR(3) (we first give the vertices of the square graph and then the consequence): $(v_2,v_3,v_6,v_7)$ --- $\repr{2}=\repr{3}$ (brown vertices), $(v_1,v_3,v_4,v_6)$ --- $\repr{1}=\repr{6}$ (blue);  $(v_9,v_{11},v_{13},v_{14})$ --- $\repr{13}=\repr{14}$ (violet), $(v_3,v_7,v_8,v_{12})$ --- $\repr{7}=\repr{8}$ (gray), $(v_4,v_9,v_{10},v_{13})$ --- $\repr{9}=\repr{10}$ (orange), $(v_1,v_4,v_5,v_9)$ --- $\repr{4}=\repr{5}$ (magenta), $(v_7,v_{11},v_{12},v_{14})$ --- $\repr{11}=\repr{12}$ (pink). The FORs are not filtered out by Fact \ref{general-fakt}.

\subsection{Graph $N_{87957}$}

A FOR(3) for this graph is:
\begin{align}
	& \repr{1}=\repr{6}=\ket{0}, \quad\repr{2}=\repr{3}=\ket{1}, \quad \repr{4}=\repr{5}=2\ket{1}-\ket{2}, \non & \repr{7}=\ket{0}+\ket{2}, \quad \repr{8}=7\ket{0}-12\ket{2}, \\
	&\repr{9}=\ket{0}-2\ket{1}-4\ket{2}, \quad \repr{10}=\ket{0}-\ket{1}-2\ket{2}, \non
	& \repr{11}=\repr{12}=2\ket{0}+5\ket{1}-2\ket{2}, \non 
	& \repr{13}=\repr{14}=12\ket{0}-2\ket{1}+7\ket{2}.\nonumber
	\end{align}

As above, we consider the square graphs and find the following properties (using the above notation): $(v_1,v_2,v_4,v_6)$ --- $\repr{1}=\repr{6}$ (lime), $(v_2,v_3,v_7,v_8)$ --- $\repr{2}=\repr{3}$ (brown), $(v_4,v_5,v_9,v_{10})$ --- $\repr{4}=\repr{5}$ (violet),  $(v_{7},v_{9},v_{11},v_{12})$ --- $\repr{11}=\repr{12}$ (orange), $(v_{10},v_{11},v_{13},v_{14})$ --- $\repr{13}=\repr{14}$, which  are necessarily shared by any FOR(3). The FORs are permissible by Fact \ref{general-fakt}.

\section{$14$-vertex $3$-regular graphs}\label{3-reg}

Here, we provide a brief analysis of the candidate fourteen-vertex three-regular graphs (see Sec. \ref{remarksy}).

\subsection{Connected graphs} \label{3-reg-connected}

We first analyze the connected graphs defined by set $\mathcal{O}_3\setminus C_4$.

\subsubsection{Girth-$3$ graphs}

These are graphs with the following GENREG indices: $5,14,17,254,264,265,267,269,270,271,272,275,276,288, 289$,  $292,293,294,301,305,306,308,310,313,314,320,322, 341,342,353,354,364,367,370,372,374,382,388,389,393,397,399$. They all have permissible FORs(3).

In Fig. \ref{3-reg-254}, we show graph $254$ for which a FOR(3) is as follows: 
\begin{align}\label{254-for}
	&\repr{1}=\zero,\quad \repr{2}=\jeden,\quad \repr{3}=\dwa, \quad \repr{4}=2\jeden-3\dwa, \non
	& \repr{5}= \zero+\dwa, \quad \repr{6}=2\zero-\jeden, \quad \repr{7}=\repr{8}=2\zero-3\jeden-\dwa, \non
	&  \repr{9}=\repr{10}= \zero+2\jeden-3\dwa, \quad \repr{11}=4\zero+2\jeden+\dwa, \\
	& \repr{12}=\repr{13}= \zero+\jeden+\dwa,\quad \repr{14}= \zero-3\jeden+2\dwa. \nonumber
\end{align}

\begin{figure}[htp]
	\begin{tikzpicture}[scale=0.5]
		\centering
		\begin{scope}[xshift=6.75cm,yshift=5.5cm,scale=1.3]
			\draw[ufogreen, line width=2.5pt, line cap=round, line join=round]
			(-0.3,0.4) -- (0,0) -- (0.6,0.95);
		\end{scope}
		\begin{scope}
			\GraphInit[vstyle=Classic]
			\SetGraphUnit{2.5}
			\node[circle, draw, fill=white, minimum size=10pt, inner sep=1.8pt] (v1) at ({90+0*360/14}:6.5cm) {$v_1$};
			\node[circle, draw, fill=white, minimum size=12pt, inner sep=1.8pt] (v2) at ({90+1*360/14}:6.5cm) {$v_2$};
			\node[circle, draw, fill=white, minimum size=12pt, inner sep=1.8pt] (v3) at ({90+2*360/14}:6.5cm) {$v_3$};
			\node[circle, draw, fill=white, minimum size=12pt, inner sep=1.8pt] (v4) at ({90+3*360/14}:6.5cm) {$v_4$};
			\node[circle, draw, fill=white, minimum size=12pt, inner sep=1.8pt] (v5) at ({90+4*360/14}:6.5cm) {$v_5$};
			\node[circle, draw, fill=white, minimum size=12pt, inner sep=1.8pt] (v6) at ({90+5*360/14}:6.5cm) {$v_6$};
			\node[circle, draw, fill=white, minimum size=12pt, inner sep=1.8pt] (v7) at ({90+6*360/14}:6.5cm) {$v_7$};
			\node[circle, draw, fill=white, minimum size=12pt, inner sep=1.8pt] (v8) at ({90+7*360/14}:6.5cm) {$v_8$};	
			\node[circle, draw, fill=white, minimum size=12pt, inner sep=1.8pt] (v9) at ({90+8*360/14}:6.5cm) {$v_9$};	
			\node[circle, draw, fill=white, minimum size=12pt,inner sep=0.6pt] (v10) at ({90+9*360/14}:6.5cm) {$v_{10}$};	
			\node[circle, draw, fill=white, minimum size=12pt,inner sep=0.6pt] (v11) at ({90+10*360/14}:6.5cm) {$v_{11}$};	
			\node[circle, draw, fill=white, minimum size=12pt,inner sep=0.6pt] (v12) at ({90+11*360/14}:6.5cm) {$v_{12}$};	
			\node[circle, draw, fill=white, minimum size=12pt,inner sep=0.6pt] (v13) at ({90+12*360/14}:6.5cm) {$v_{13}$};	
			\node[circle, draw, fill=white, minimum size=12pt,inner sep=0.6pt] (v14) at ({90+13*360/14}:6.5cm) {$v_{14}$};		
			\draw (v1)--(v2); \draw (v1)--(v3); \draw (v1)--(v4); \draw (v2)--(v3); \draw (v2)--(v5);
		\draw (v3)--(v6); \draw (v4)--(v7); \draw (v4)--(v8); \draw (v5)--(v7); \draw (v5)--(v8);
		\draw (v6)--(v9); \draw (v6)--(v10); \draw (v7)--(v11); \draw (v8)--(v11);
		\draw (v9)--(v12); \draw (v9)--(v13); 	\draw (v10)--(v12); \draw (v10)--(v13);
		\draw (v11)--(v14); \draw (v12)--(v14); \draw (v13)--(v14); 
		\end{scope}
		\end{tikzpicture}
		\caption{Graph $254$ -- one of the $14$-vertex  $3$-regular graphs with $\girth=3$. The graph has a FOR(3) shown in Eq. \eqref{254-for}.}\label{3-reg-254}
		\end{figure}
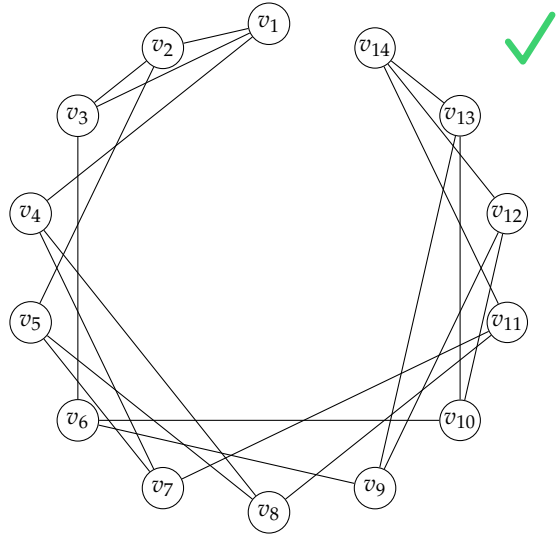

\subsubsection{Girth-$4$ graphs}
These are graphs with the following GENREG indices $400, 401, 402, 406, 411, 421$. We have found that all of them have permissible FORs(3). 

In Fig. \ref{3-reg-411}, we show graph $411$ for which a FOR(3) is as follows: 
\begin{align}\label{411-for}
	&\repr{1}=\zero, \quad\repr{2}=\repr{3}=\jeden, \quad\repr{4}=\jeden-\dwa, \quad\repr{5}=2\zero-\dwa,\non
	& \repr{6}=4\zero-\dwa,\quad \repr{7}=5\zero-\jeden-\dwa, \quad \repr{8}=\zero-\jeden-\dwa,\non
	&\repr{9}=\zero+3\jeden+2\dwa, \quad\repr{10}=\zero+\jeden+4\dwa, \quad\repr{11}=\repr{12}=\zero+\jeden, \\
	& \repr{13}=\zero-\jeden+\dwa, \quad\repr{14}=\zero-\jeden. \nonumber
\end{align}

\begin{figure}[htp]
	\begin{tikzpicture}[scale=0.5]
		\centering
		\begin{scope}[xshift=6.75cm,yshift=5.5cm,scale=1.3]
			\draw[ufogreen, line width=2.5pt, line cap=round, line join=round]
			(-0.3,0.4) -- (0,0) -- (0.6,0.95);
		\end{scope}
		\begin{scope}
			\GraphInit[vstyle=Classic]
			\SetGraphUnit{2.5}
			\node[circle, draw, fill=white, minimum size=10pt, inner sep=1.8pt] (v1) at ({90+0*360/14}:6.5cm) {$v_1$};
			\node[circle, draw, fill=white, minimum size=12pt, inner sep=1.8pt] (v2) at ({90+1*360/14}:6.5cm) {$v_2$};
			\node[circle, draw, fill=white, minimum size=12pt, inner sep=1.8pt] (v3) at ({90+2*360/14}:6.5cm) {$v_3$};
			\node[circle, draw, fill=white, minimum size=12pt, inner sep=1.8pt] (v4) at ({90+3*360/14}:6.5cm) {$v_4$};
			\node[circle, draw, fill=white, minimum size=12pt, inner sep=1.8pt] (v5) at ({90+4*360/14}:6.5cm) {$v_5$};
			\node[circle, draw, fill=white, minimum size=12pt, inner sep=1.8pt] (v6) at ({90+5*360/14}:6.5cm) {$v_6$};
			\node[circle, draw, fill=white, minimum size=12pt, inner sep=1.8pt] (v7) at ({90+6*360/14}:6.5cm) {$v_7$};
			\node[circle, draw, fill=white, minimum size=12pt, inner sep=1.8pt] (v8) at ({90+7*360/14}:6.5cm) {$v_8$};	
			\node[circle, draw, fill=white, minimum size=12pt, inner sep=1.8pt] (v9) at ({90+8*360/14}:6.5cm) {$v_9$};	
			\node[circle, draw, fill=white, minimum size=12pt,inner sep=0.6pt] (v10) at ({90+9*360/14}:6.5cm) {$v_{10}$};	
			\node[circle, draw, fill=white, minimum size=12pt,inner sep=0.6pt] (v11) at ({90+10*360/14}:6.5cm) {$v_{11}$};	
			\node[circle, draw, fill=white, minimum size=12pt,inner sep=0.6pt] (v12) at ({90+11*360/14}:6.5cm) {$v_{12}$};	
			\node[circle, draw, fill=white, minimum size=12pt,inner sep=0.6pt] (v13) at ({90+12*360/14}:6.5cm) {$v_{13}$};	
			\node[circle, draw, fill=white, minimum size=12pt,inner sep=0.6pt] (v14) at ({90+13*360/14}:6.5cm) {$v_{14}$};		
			%
				\draw (v1)--(v2); \draw (v1)--(v3); \draw (v1)--(v4);
				\draw (v2)--(v5); \draw (v2)--(v6);
				\draw (v3)--(v5);	\draw (v3)--(v6);
				\draw (v4)--(v7);
				\draw (v4)--(v8); \draw (v5)--(v9); \draw (v6)--(v10); \draw (v7)--(v9); \draw (v7)--(v10);
				\draw (v8)--(v11); \draw (v8)--(v12); \draw (v9)--(v13); \draw (v10)--(v14); \draw (v11)--(v13);
				\draw (v11)--(v14); \draw (v12)--(v13); \draw (v12)--(v14);
		\end{scope}
	\end{tikzpicture}
	\caption{Graph $411$ -- one of the $14$-vertex  $3$-regular graphs with $\girth=4$. The graph has a FOR(3) shown in Eq. \eqref{411-for}.}\label{3-reg-411}
\end{figure}
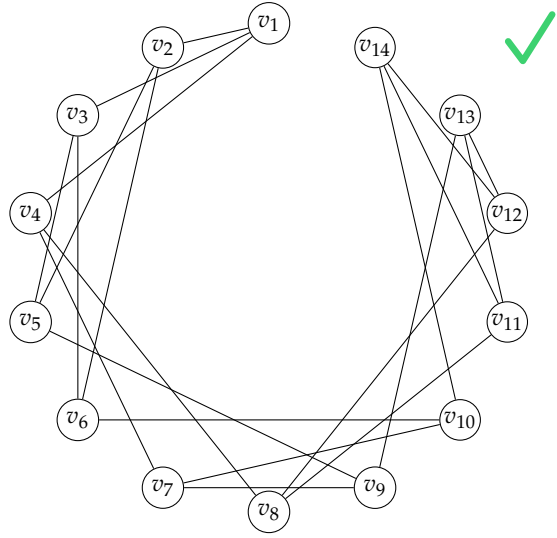

\subsubsection{Girth-$5$ graphs}\label{g-5}

We found that all the eight graphs (indices $501 - 508$ in GENREG) have permissible FORs(3). We give an exemplary representation for graph $501$ (see Fig. \ref{g-5-graph}):
\begin{align}\label{przykladowa-for-0}
	& \repr{1}=\zero, \quad \repr{2}=\jeden, \quad  \repr{3}= \jeden-2\dwa, \quad \repr{4}=\jeden-\dwa, \quad \non 	
	& \repr{5}=3\zero-2\dwa, \quad \repr{6}=\zero+2\dwa, \quad \repr{7}=2\zero+6\jeden+3\dwa, \\
	&  \repr{8}=2\zero-2\jeden-\dwa, \quad \repr{9}=2\zero+3\jeden+3\dwa,\quad\repr{10}=\zero-3\jeden-3\dwa, \non 
	& \repr{11}=6\zero+5\jeden-3\dwa, \quad
	\repr{12}=3\zero-3\jeden+4\dwa,\quad \repr{13}=\zero+\jeden,\quad \repr{14}=3\zero-3\jeden+\dwa. \nonumber
\end{align}
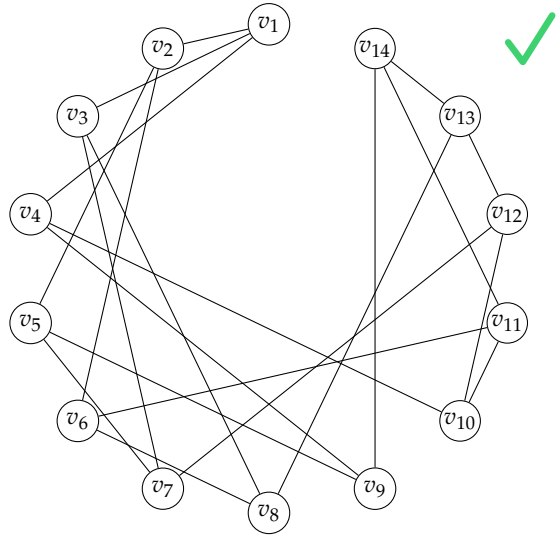
\begin{figure}[htp]
	\begin{tikzpicture}[scale=0.5]
		\centering
		\begin{scope}[xshift=6.75cm,yshift=5.5cm,scale=1.3]
			\draw[ufogreen, line width=2.5pt, line cap=round, line join=round]
			(-0.3,0.4) -- (0,0) -- (0.6,0.95);
		\end{scope}
		\begin{scope}
			\GraphInit[vstyle=Classic]
			\SetGraphUnit{2.5}
			\node[circle, draw, fill=white, minimum size=10pt, inner sep=1.8pt] (v1) at ({90+0*360/14}:6.5cm) {$v_1$};
		\node[circle, draw, fill=white, minimum size=12pt, inner sep=1.8pt] (v2) at ({90+1*360/14}:6.5cm) {$v_2$};
		\node[circle, draw, fill=white, minimum size=12pt, inner sep=1.8pt] (v3) at ({90+2*360/14}:6.5cm) {$v_3$};
		\node[circle, draw, fill=white, minimum size=12pt, inner sep=1.8pt] (v4) at ({90+3*360/14}:6.5cm) {$v_4$};
		\node[circle, draw, fill=white, minimum size=12pt, inner sep=1.8pt] (v5) at ({90+4*360/14}:6.5cm) {$v_5$};
		\node[circle, draw, fill=white, minimum size=12pt, inner sep=1.8pt] (v6) at ({90+5*360/14}:6.5cm) {$v_6$};
		\node[circle, draw, fill=white, minimum size=12pt, inner sep=1.8pt] (v7) at ({90+6*360/14}:6.5cm) {$v_7$};
		\node[circle, draw, fill=white, minimum size=12pt, inner sep=1.8pt] (v8) at ({90+7*360/14}:6.5cm) {$v_8$};	
		\node[circle, draw, fill=white, minimum size=12pt, inner sep=1.8pt] (v9) at ({90+8*360/14}:6.5cm) {$v_9$};	
		\node[circle, draw, fill=white, minimum size=12pt,inner sep=0.6pt] (v10) at ({90+9*360/14}:6.5cm) {$v_{10}$};	
		\node[circle, draw, fill=white, minimum size=12pt,inner sep=0.6pt] (v11) at ({90+10*360/14}:6.5cm) {$v_{11}$};	
		\node[circle, draw, fill=white, minimum size=12pt,inner sep=0.6pt] (v12) at ({90+11*360/14}:6.5cm) {$v_{12}$};	
		\node[circle, draw, fill=white, minimum size=12pt,inner sep=0.6pt] (v13) at ({90+12*360/14}:6.5cm) {$v_{13}$};	
		\node[circle, draw, fill=white, minimum size=12pt,inner sep=0.6pt] (v14) at ({90+13*360/14}:6.5cm) {$v_{14}$};		
		\draw (v1)--(v2);
		\draw (v1)--(v3);
		\draw (v1)--(v4);
		\draw (v2)--(v5);
		\draw (v2)--(v6);
		\draw (v3)--(v7);
		\draw (v3)--(v8);
		\draw (v4)--(v9);
		\draw (v4)--(v10);
		\draw (v5)--(v7);
		\draw (v5)--(v9);
		\draw (v6)--(v8);
		\draw (v6)--(v11);
		\draw (v7)--(v12);
		\draw (v8)--(v13);
		\draw (v9)--(v14);
		\draw (v10)--(v11);
		\draw (v10)--(v12);
		\draw (v11)--(v14);
		\draw (v12)--(v13);
		\draw (v13)--(v14);
		\end{scope}
	\end{tikzpicture}
	\caption{One of the eight girth-$5$ $3$--regular graphs on $14$ vertices; a FOR(3) for this graph is given in Eq. \eqref{przykladowa-for-0}.} \label{g-5-graph}
\end{figure}

 \subsubsection{Girth-$6$ graph --- Heawood graph}\label{g-6}
 
The Heawood graph \cite{heawood-math,heawood-house} (graph number $509$ in GENREG) is the unique $(3,6)$-cage, i.e., it is the smallest $3$-regular graph with girth $\girth=6$. We show this graph in Fig. \ref{heawood}. It follows from  Ref. \cite{parameters} that the minimal dimension for a FOR with real vectors is  $d=4$. An exemplary one is as follows:
\begin{align}\label{przykladowa-for}
	& \repr{1}=\zero, \quad \repr{2}=14\jeden+7\dwa-4\ket{3}, \quad  \repr{3}= 3\jeden-2\dwa, \quad \repr{4}=\jeden, \quad \non 	
	& \repr{5}=2\zero-2\jeden-7\ket{3}, \quad \repr{6}=10\zero-22\jeden+32\dwa-21\ket{3}, \quad \repr{7}=5\zero+2\jeden+3\dwa, \\
	&  \repr{8}=\zero+2\jeden+3\dwa, \quad \repr{9}=\zero-2\dwa-\ket{3},\quad\repr{10}=\zero+\dwa, \non 
	& \repr{11}=2\zero-5\jeden+2\ket{3}, \quad
	\repr{12}=\zero+\jeden-\dwa,\quad \repr{13}=\zero-\jeden-\dwa,\quad \repr{14}=2\zero-\jeden+2\ket{3}. \nonumber
\end{align}

We  have not been able to verify whether the complex case also requires four dimensions. 
However, in view of the results of Ref. \cite{parameters} it is likely that this is indeed the case.

\begin{figure}[htp]
	\begin{tikzpicture}[scale=0.6]
		\centering
		\begin{scope}[xshift=3.5cm,scale=0.9]
			\GraphInit[vstyle=Classic]
			\SetGraphUnit{2.5}
			\node[circle, draw, fill=white, minimum size=10pt, inner sep=1.8pt] (v1) at ({90+0*360/14}:6.5cm) {$v_1$};
			\node[circle, draw, fill=white, minimum size=12pt, inner sep=1.8pt] (v2) at ({90+1*360/14}:6.5cm) {$v_2$};
			\node[circle, draw, fill=white, minimum size=12pt, inner sep=1.8pt] (v3) at ({90+2*360/14}:6.5cm) {$v_3$};
			\node[circle, draw, fill=white, minimum size=12pt, inner sep=1.8pt] (v4) at ({90+3*360/14}:6.5cm) {$v_4$};
			\node[circle, draw, fill=white, minimum size=12pt, inner sep=1.8pt] (v5) at ({90+4*360/14}:6.5cm) {$v_5$};
			\node[circle, draw, fill=white, minimum size=12pt, inner sep=1.8pt] (v6) at ({90+5*360/14}:6.5cm) {$v_6$};
			\node[circle, draw, fill=white, minimum size=12pt, inner sep=1.8pt] (v7) at ({90+6*360/14}:6.5cm) {$v_7$};
			\node[circle, draw, fill=white, minimum size=12pt, inner sep=1.8pt] (v8) at ({90+7*360/14}:6.5cm) {$v_8$};	
			\node[circle, draw, fill=white, minimum size=12pt, inner sep=1.8pt] (v9) at ({90+8*360/14}:6.5cm) {$v_9$};	
			\node[circle, draw, fill=white, minimum size=12pt,inner sep=0.6pt] (v10) at ({90+9*360/14}:6.5cm) {$v_{10}$};	
			\node[circle, draw, fill=white, minimum size=12pt,inner sep=0.6pt] (v11) at ({90+10*360/14}:6.5cm) {$v_{11}$};	
			\node[circle, draw, fill=white, minimum size=12pt,inner sep=0.6pt] (v12) at ({90+11*360/14}:6.5cm) {$v_{12}$};	
			\node[circle, draw, fill=white, minimum size=12pt,inner sep=0.6pt] (v13) at ({90+12*360/14}:6.5cm) {$v_{13}$};	
			\node[circle, draw, fill=white, minimum size=12pt,inner sep=0.6pt] (v14) at ({90+13*360/14}:6.5cm) {$v_{14}$};		
	\draw (v1)--(v2);
	\draw (v1)--(v3);
	\draw (v1)--(v4);
	\draw (v2)--(v5);
	\draw (v2)--(v6);
	\draw (v3)--(v7);
	\draw (v3)--(v8);
	\draw (v4)--(v9);
	\draw (v4)--(v10);
	\draw (v5)--(v11);
	\draw (v5)--(v12);
	\draw (v6)--(v13);
	\draw (v6)--(v14);
	\draw (v7)--(v11);
	\draw (v7)--(v13);
	\draw (v8)--(v12);
	\draw (v8)--(v14);
	\draw (v9)--(v11);
	\draw (v9)--(v14);
	\draw (v10)--(v12);
	\draw (v10)--(v13);
		\end{scope}
		\begin{scope}[xshift=10cm,yshift=5cm,scale=0.9]
			\draw[-stealth, thick, bend left=40] (0,0) to (3,0);
		\end{scope}
		\begin{scope}[xshift=24cm,yshift=5.2cm,scale=0.15]
		\draw[line width=2.5,line cap=round] (1.5,0) .. controls ++(0,2) and ++(0,-2) .. (4,4)
		to[out=90,in=0] (2,6)
		to[out=180,in=90] (0,4);		
		\fill (1.5,-2) circle (0.75);
		\end{scope}
			\begin{scope}[xshift=18.5cm,scale=0.9]
			\GraphInit[vstyle=Classic]
			\SetGraphUnit{2.5}
				\node[circle, draw, fill=white, minimum size=10pt, inner sep=1.8pt] (v1) at ({90+0*360/14}:6.5cm) {$v_1$};
			\node[circle, draw, fill=white, minimum size=12pt, inner sep=1.8pt] (v2) at ({90+1*360/14}:6.5cm) {$v_2$};
			\node[circle, draw, fill=white, minimum size=12pt, inner sep=1.8pt] (v3) at ({90+2*360/14}:6.5cm) {$v_5$};
			\node[circle, draw, fill=white, minimum size=12pt, inner sep=0.6pt] (v4) at ({90+3*360/14}:6.5cm) {$v_{11}$};
			\node[circle, draw, fill=white, minimum size=12pt, inner sep=1.8pt] (v5) at ({90+4*360/14}:6.5cm) {$v_7$};
			\node[circle, draw, fill=white, minimum size=12pt, inner sep=1.8pt] (v6) at ({90+5*360/14}:6.5cm) {$v_3$};
			\node[circle, draw, fill=white, minimum size=12pt, inner sep=1.8pt] (v7) at ({90+6*360/14}:6.5cm) {$v_8$};
			\node[circle, draw, fill=white, minimum size=12pt, inner sep=0.6pt] (v8) at ({90+7*360/14}:6.5cm) {$v_{12}$};	
			\node[circle, draw, fill=white, minimum size=12pt, inner sep=0.6pt] (v9) at ({90+8*360/14}:6.5cm) {$v_{10}$};	
			\node[circle, draw, fill=white, minimum size=12pt,inner sep=0.6pt] (v10) at ({90+9*360/14}:6.5cm) {$v_{13}$};	
			\node[circle, draw, fill=white, minimum size=12pt,inner sep=1.8pt] (v11) at ({90+10*360/14}:6.5cm) {$v_{6}$};	
			\node[circle, draw, fill=white, minimum size=12pt,inner sep=0.6pt] (v12) at ({90+11*360/14}:6.5cm) {$v_{14}$};	
			\node[circle, draw, fill=white, minimum size=12pt,inner sep=1.8pt] (v13) at ({90+12*360/14}:6.5cm) {$v_{9}$};	
			\node[circle, draw, fill=white, minimum size=12pt,inner sep=1.8pt] (v14) at ({90+13*360/14}:6.5cm) {$v_{4}$};			
			\draw (v1)--(v2)--(v3)--(v4)--(v5)--(v6)--(v7)--(v8)--(v9)--(v10)--(v11)--(v12)--(v13)--(v14)--(v1); 	\draw (v1)--(v6);	\draw (v1)--(v14);	
			\draw (v2)--(v3); \draw (v2)--(v11);
			\draw (v3)--(v8); 	
			\draw (v4)--(v13); \draw (v5)--(v10); \draw (v7)--(v12); \draw (v9)--(v14);
			%
		\end{scope}
	\end{tikzpicture}
	\caption{The Heawood graph in the GENREG form (left) and in a more common layout (right).  The minimal dimension for a real FOR is $d=4$ [see Eq. \eqref{przykladowa-for}]; it is not clear whether this is also the minimal dimension for the complex case. } \label{heawood}
\end{figure}
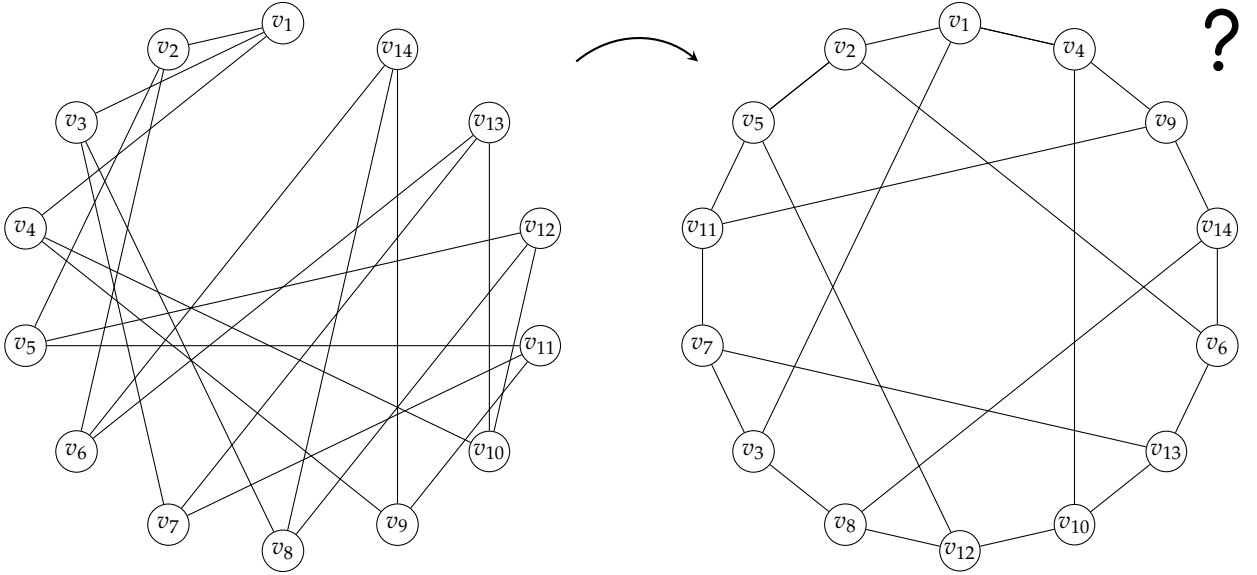

\subsection{Disconnected graphs}\label{3-reg-disconnected}

As noted in the main text, only type (b) graphs ($6 + 8$ vertices) need a more detailed investigation. We need to analyze (i) $6$-vertex and (ii) $8$-vertex $3$-regular graphs. (i) There are two graphs in this category. We find that the house graph $H_5$ is an induced subgraph of the first one. The second one is the complete bipartite graph $K_{3,3}$ (also called the utility graph). It actually admits a FOR(2) and in any FOR(d) there is at least one three-tuple of identical vectors (cf. Sec. \ref{14-vert-4-reg}, where $K_{4,4}$ is analyzed). (ii) This category consists of five graphs. We find that $\{K_5,L_6\}$ is an obstruction set defining  a single graph having a FOR(3) -- graph number $3$, denote it $P_3$ (see Fig. \ref{8-vert-3-reg}).  An exemplary FOR(3) of $P_3$ is as follows:

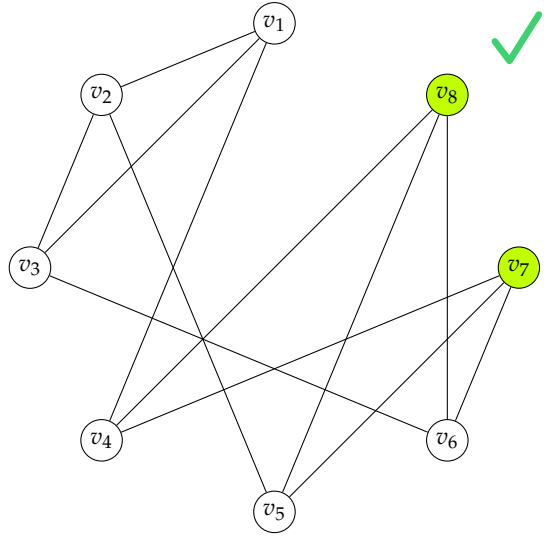
\begin{figure}[htp]
	\begin{tikzpicture}[scale=0.5]
		\centering
			\begin{scope}[xshift=6.25cm,yshift=5.5cm,scale=1.3]
			\draw[ufogreen, line width=2.5pt, line cap=round, line join=round]
			(-0.3,0.4) -- (0,0) -- (0.6,0.95);
		\end{scope}
		\begin{scope}
			\GraphInit[vstyle=Classic]
			\SetGraphUnit{2.5}
			\node[circle, draw, fill=white, minimum size=10pt, inner sep=1.8pt] (v1) at ({90+0*360/8}:6.5cm) {$v_1$};
			\node[circle, draw, fill=white, minimum size=12pt, inner sep=1.8pt] (v2) at ({90+1*360/8}:6.5cm) {$v_2$};
			\node[circle, draw, fill=white, minimum size=12pt, inner sep=1.8pt] (v3) at ({90+2*360/8}:6.5cm) {$v_3$};
			\node[circle, draw, fill=white, minimum size=12pt, inner sep=1.8pt] (v4) at ({90+3*360/8}:6.5cm) {$v_4$};
			\node[circle, draw, fill=white, minimum size=12pt, inner sep=1.8pt] (v5) at ({90+4*360/8}:6.5cm) {$v_5$};
			\node[circle, draw, fill=white, minimum size=12pt, inner sep=1.8pt] (v6) at ({90+5*360/8}:6.5cm) {$v_6$};
			\node[circle, draw, fill=lime, minimum size=12pt, inner sep=1.8pt] (v7) at ({90+6*360/8}:6.5cm) {$v_7$};
			\node[circle, draw, fill=lime, minimum size=12pt, inner sep=1.8pt] (v8) at ({90+7*360/8}:6.5cm) {$v_8$};			
			\draw (v1)--(v2); \draw (v1)--(v4); \draw (v4)--(v8);  \draw (v4)--(v7);
			\draw (v2)--(v3);
			\draw (v3)--(v1); \draw (v2)--(v5); \draw (v6)--(v8); \draw (v3)--(v6); \draw (v5)--(v8); \draw (v6)--(v7);
			\draw (v5)--(v7);
		\end{scope}
		\end{tikzpicture}
		\caption{The only $8$-vertex $3$-regular graph with a FOR(3), see Eq. \eqref{for-8-3}. Vectors corresponding to colored vertices are necessarily the same.}\label{8-vert-3-reg}
		\end{figure}

\begin{align}\label{for-8-3}
	&\ket{v_1}=\zero,\quad \repr{2}=\jeden,\quad \repr{3}=\dwa, \nonumber\\
	&\ket{v_4}= \jeden-\dwa,\quad \repr{5}=\zero+\dwa,\quad \repr{6}=\zero+\jeden,  \\
	&\ket{v_7}=\repr{8}=\zero-\jeden-\dwa. \nonumber
\end{align}

It immediately follows that $\repr{7}=\repr{8}$ holds for any FOR(3) of $P_3$ leading to the elimination of $K_{3,3}+P_3$ as a LOG.

\subsection{Petersen graph}

As an addition to the analysis of the fourteen-vertex three-regular graphs, we also consider the Petersen graph.
The Petersen graph (see Fig. \ref{petersen}) is the unique $(3,5)$-cage. It is a ten-vertex graph, which has a FOR(3) and we show one below:
\begin{align}\label{petersen-for3}
	&\ket{v_1}=\jeden, \quad \ket{v_2}=\zero, \quad \ket{v_3}=2\jeden+3\dwa, \quad \ket{v_4}=2\zero+3\jeden-2\dwa, \nonumber\\
		&\ket{v_5}= \zero+\dwa, \quad \ket{v_6}=\zero-\jeden-\dwa, \quad \ket{v_7}= 2\zero+\dwa, \\
			&\ket{v_8}= \jeden-\dwa, \quad \ket{v_9}= \zero+3\jeden-2\dwa, \quad\ket{v_{10}}= \zero -2\jeden-2\dwa, \nonumber
\end{align}
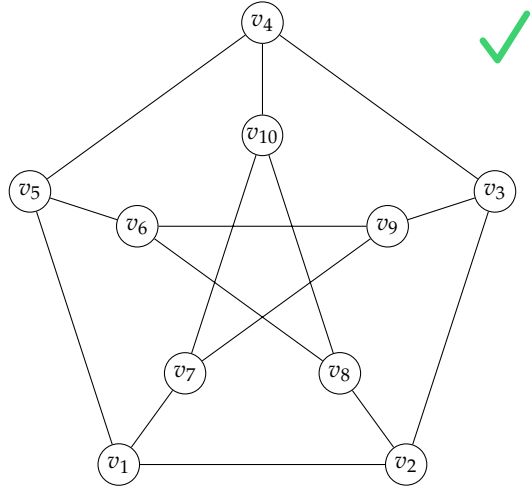
\begin{figure}[htp]
	\begin{tikzpicture}[scale=0.5]
		\centering
		\begin{scope}
		\GraphInit[vstyle=Classic]
		\SetGraphUnit{2.5}
		\node[circle, draw, fill=white, minimum size=10pt, inner sep=1.8pt] (v1) at ({90+0*360/5}:6.5cm) {$v_4$};
		\node[circle, draw, fill=white, minimum size=12pt, inner sep=1.8pt] (v2) at ({90+1*360/5}:6.5cm) {$v_5$};
		\node[circle, draw, fill=white, minimum size=12pt, inner sep=1.8pt] (v3) at ({90+2*360/5}:6.5cm) {$v_1$};
		\node[circle, draw, fill=white, minimum size=12pt, inner sep=1.8pt] (v4) at ({90+3*360/5}:6.5cm) {$v_2$};
		\node[circle, draw, fill=white, minimum size=12pt, inner sep=1.8pt] (v5) at ({90+4*360/5}:6.5cm) {$v_3$};
		\node[circle, draw, fill=white, minimum size=12pt, inner sep=0.6pt] (v6) at ({90+5*360/5}:3.5cm) {$v_{10}$};
		\node[circle, draw, fill=white, minimum size=12pt, inner sep=1.8pt] (v7) at ({90+6*360/5}:3.5cm) {$v_6$};
		\node[circle, draw, fill=white, minimum size=12pt, inner sep=1.8pt] (v8) at ({90+7*360/5}:3.5cm) {$v_7$};	
		\node[circle, draw, fill=white, minimum size=12pt, inner sep=1.8pt] (v9) at ({90+8*360/5}:3.5cm) {$v_8$};	
		\node[circle, draw, fill=white, minimum size=12pt,inner sep=1.8pt] (v10) at ({90+9*360/5}:3.5cm) {$v_9$};	
		\draw (v1)--(v2)--(v3)--(v4)--(v5)--(v1);
		\draw (v1)--(v6)--(v8)--(v10)--(v7)--(v9)--(v6);
		\draw (v5)--(v10); \draw (v4)--(v9); \draw (v3)--(v8); \draw (v2)--(v7);
		\end{scope}
			\begin{scope}[xshift=6.25cm,yshift=5.5cm,scale=1.3]
				\draw[ufogreen, line width=2.5pt, line cap=round, line join=round]
				(-0.3,0.4) -- (0,0) -- (0.6,0.95);
			\end{scope}
	\end{tikzpicture}
	\caption{Petersen graph; the graph admits a FOR(3), see Eq. \eqref{petersen-for3}.} \label{petersen}
\end{figure}

\section{Thirteen-vertex four-regular graphs with the considered induced subgraphs}\label{details-13-4}

Table \ref{tabelka-summary} below provides indices of graphs $M_i$ admitting
certain induced subgraphs.
Ladder graph $L_6$ (see Appendix \ref{drabina}), which does not belong to the
obstruction set $\mathcal{O}_3$ [Eq. \eqref{obstrukcja}], has been included.
%
%
	\begin{table}[h!]
		\begin{tabular}{cccc}
			\hline\hline 
			\makecell{forbidden \\ induced   subgraph }
			&\makecell{\# graphs w/ \\ induced subgraph } & \makecell{indices of graphs
				\\ with induced subgraph}     \\ \hline \\
			\vspace{0.4cm}$4$-clique $C_4$ & $671$  & \footnotesize{\makecell{$1-401$,
					$439$, $449$, $511$, $517$, $537$, $568$,   $578$, $606$,  $618$, $643$, 
					$686$,  $700$,   $716$,  \\$740$,  $774$, $781$, $867$, $882$, $931$,
					$945$,
					$956$,  $1164$,  $1166$,  $1316$,  $1343$,  $1350-1594$}}  \\ \\
			A-graph $A_6$ & $10\:672$  & \footnotesize{\makecell{$5-7$, $9-10$,
					$12-14$, $16-24$, $26-30$, $32$, $34$, $38-47$, $49$, $51$, $53-55$, $57-67$,
					$69-71$, \\
					$75-83$,  $85-115$, $117-121$, $123-125$, $128-196$, $198-239$,
					$243-278$, $280-286$, $288-329$, \\
					$331-332$, $335-336$, $338-361$, $363-396$, $398-401$, $406-422$, $425$,
					$427$, $429-430$, $432-434$, \\
					$436-440$, $442-455$, $457-476$, $478-483$, $485-494$, $496-683$,
					$685-690$, $692-696$, $698-1135$, \\
					$1137$, $1139-1140$, $1142-1207$, $1209-1224$, $1226-1230$, $1232$,
					$1234-1238$, $1240-1243$, \\
					$1245-1286$, $1288-1342$, $1344-1454$, $1456-1459$, $1461-1468$,
					$1470-1594$, $1596-1598$, \\
					$1600-1604$, $1606-1625$, $1627-1648$, $1650-1657$, $1659-1759$,
					$1761-1784$, $1786$, \\
					$1788-1816$, $1818-1820$, $1822-1992$, $1995-2057$, $2059-2061$, $2063$,
					$2065-2099$, \\
					$2101-2103$, $2105-2115$, $2117-2293$, $2295-2353$, $2355-2415$,
					$2417-2532$, $2534-3422$, \\
					$3424-4880$, $4882-4936$, $4938$, $4940-5007$, $5009-5045$, $5047-5056$,
					$5058-7554$, \\
					$7556-7570$, $7572-9047$, $9049-9081$, $9083-9161$, $9163$, $9165-10778$
			}}   \\\\
			House graph $H_5$ & $10\:662$  & \footnotesize{\makecell{$1-3$, $5-7$,
					$9-14$, $16-34$, $36-40$, $42-46$, $49-52$, $54-55$,  $57-83$, $85-278$,\\
					$280-286$, $288-293$, $295-312$,  $314-329$,  $331-333$,  $335-340$,
					$342-386$,\\ $388-396$,  $398-401$,  $403-409$, $411-412$, $414-423$, $425-431$,
					$433-434$, \\ $436-440$, $442-476$, $478-532$, $534-537$, $539-608$,  
					$610-612$,  $614-643$,  \\ $646-668$,  $670-675$,  $677-873$,  $875-879$,
					$881-899$, $901-934$, $936-962$, \\ $964-990$,  $992-993$, $995-1003$,
					$1005-1007$, $1009-1039$, $1041-1078$, \\ $1080-1137$,  $1139-1140$,
					$1142-1179$, $1181-1207$, $1209-1224$,  $1226-1230$,\\ $1232-1286$, $1288-1303$,
					$1305-1315$, $1317-1331$,  $1333-1342$, $1345-1347$,\\ $1349-1579$,$1581-1584$,
					$1586-1820$,  $1822-1901$, $1903-1909$, $1911-1919$,\\ $1921-2057$, 
					$2059-2063$,  $2065-2103$, $2105-2134$, $2136-2415$, $2417-2532$, \\ 
					$2534-4880$, $4882$, $4884-4921$, $4923-5056$,  $5058$, $5060-5064$,
					$5066-6701$, \\ $6703$, $6705-9038$, $9041-9042$, $9044$, $9046-9992$,
					$9995-10747$}}   \\\\
			Kite graph $K_5$ & $8\:919$  & \footnotesize{\makecell{$1-2$, $5-1352$,
					$1354-1356$, $1358-1361$, $1363-1367$,  $1369-1374$, $1376-1381$, \\
					$1383-1396$, $1398-1399$, $1401-1418$, $1420-1442$, $1444-1445$, $1451$,
					$1453-1468$,\\ $1470-1478$, $1480$, $1482-1483$, $1487-1490$, $1492-1505$, 
					$1507-1513$, $1516-1520$,  \\ $1522-1523$, $1525-1527$, $1530$, $1533$,
					$1535-1536$, $1540-1541$,  $1545-1549$, \\ $1551-1552$,  $1554-1556$,
					$1560-1562$, $1568-1571$, $1574-1579$, $1581$,  $1583-1584$, \\ $1586$,
					$1590-1592$, $1595-1784$, $1786-4926$,  $4930-4935$, $4937-4938$,
					$4940-4948$, \\ $4950$, $4952-4953$,  $4957-4959$, $4961$, $4967-4970$,
					$4972-4975$, $4979$, $4981$,  $4983$,\\ $4985$, $4990-4993$, $4995-4997$,
					$4999-5002$, $5007$,  $5009-5024$, $5027-5029$, \\ $5031-5036$,  $5040$,
					$5042-5047$, $5049-5055$, $5058$, $5063-5064$, $5066-5070$, \\ $5074-5077$,
					$5082-5083$, $5086-5087$, $5089-5090$, $5092$, $5094-5099$,  $5101$, \\
					$5103-5107$, $5109-5114$, $5116$, $5118-9047$} }   \\
			\\
			Ladder graph $L_6$ & $9\:933$ & \tiny{\makecell{
					$5$, $10$, $13-14$, $16-17$, $20$, $22$, $26-27$, $32$, $34$, $38-40$,
					$45$, $49$, $51$, $53-55$, $57-58$,  $61$, $63$, $67$, $70-71$, $75-78$, $80$,
					$83$,
					$86-88$, $93$,\\ $97-101$,$105-115$, $118$, $121$, $123-124$, $128-130$,
					$133-137$, $139$, $142-150$, $156-158$, $160$, $162-177$,$181-188$, $193-195$,
					$197$,\\
					$199-200$, $202-203$, $205-210$,$213-227$, $229-239$, $244-246$,
					$248$,$250-254$, $256-270$,$273-278$, $280$, $282-283$, $285-286$,\\ $288-291$,
					$293$, 
					$295-296$,$298$, $300-303$, $305-308$, $311-315$, $317-318$, $320-321$,
					$324$, $326$,$328-329$, $331-332$, $335-336$,\\ $338-343$, $345-347$, $349-351$,
					$353-356$,$359$, $361$, $364$, $366-370$, $372-376$, $378-385$,
					$389-390$, $392$, $396$, $398-401$, $408-415$,\\ $427$, $430$, $434$, $436-438$,
					$440$, $442$, $444-447$,  $449-455$, $458-459$, $464$, $466$,
					$469-473$,$475-476$, $479-480$, $482$, $485-490$, $501-503$,\\ $510$, $512-514$,
					$517-527$, $529-532$, $534-544$, $550-551$, $553$, $555-559$,
					$561-683$,$686$, $688-690$, $696$, $698-702$, $704-714$, $716-726$, \\$728-729$,				
					$731-733$,$735-736$, $739$, $741-748$, $750-753$, $756-766$, $768-769$,
					$771-777$, $779-780$, $782-787$, $790-821$, $825$, $828-831$,\\ $834$, 
					$837-843$, $845-862$, $864-865$, $867-873$,$875-901$, $903-915$,
					$918-930$, $933-990$, $992$, $994-1005$, $1008-1039$, $1042-1043$,\\
					$1045-1128$,
					$1130$, $1134-1135$, $1137$, $1139-1140$, $1142-1148$,$1150-1153$,
					$1155-1163$, $1165$, $1171-1181$, $1183-1185$, $1187$, $1189-1195$,\\
					$1197-1202$, $1204-1207$, $1209-1210$, $1212-1224$, $1227-1230$, $1232$,
					$1234-1238$,$1241-1245$, $1247-1252$, $1255$, $1257-1258$, $1260-1262$, \\
					$1264-1269$, $1271-1274$, $1276-1286$, $1288-1310$, $1313-1315$, $1317$,
					$1319-1336$, $1338-1342$, $1344-1350$, $1352-1357$, $1360-1368$,\\ $1370-1399$,
					$1402$, $1404-1405$, $1407-1412$, $1415-1454$,$1456-1457$, $1461$, $1464$,
					$1466-1479$, $1481-1492$, $1494-1495$, $1498-1499$,\\ $1502-1511$, $1513-1514$,
					$1516-1517$, $1519-1522$, $1524$, $1526-1539$, $1541-1544$, $1546-1577$,
					$1582-1587$, $1589$, $1592$, $1594$, $1601$, $1604$,\\ $1607-1608$, $1611$,
					$1613-1629$, $1631$, $1641$, $1645$, $1648$, $1652-1654$, $1659-1660$,
					$1662-1669$, $1671-1677$, $1679-1688$, $1690-1691$,\\ $1693-1699$, $1701-1702$,
					$1704$, $1706-1712$, $1715-1716$, $1718-1725$,$1727-1754$, $1756$, $1758-1759$,
					$1761-1762$,$1764$, $1766-1767$,\\ $1769$, $1771-1781$,  $1784$, $1788$,
					$1790-1794$, $1796$, $1799$, $1801$, $1804$, $1806$, $1812$, $1818$, 
					$1822-1903$, $1907$, $1909$, $1912$, $1914$, $1916$, $1921-1922$,\\
					$1925-1926$, $1931$,$1935$, $1937-1944$, $1946-1947$, $1950-1952$, $1954-1977$, 
					$1988$, $1995-2001$, $2004$,$2006$, $2009-2012$, $2014-2033$,\\
					$2035-2036$, $2040-2042$, $2044$, $2046-2047$,$2050-2055$, $2059$, $2063$,
					$2066-2072$,
					$2074-2077$, $2079$, $2081-2089$, $2091-2096$, $2098-2099$,\\
					$2102-2103$, $2105-2113$, $2115$, $2117-2130$, $2132-2133$, $2136-2167$,
					$2169-2172$, 	$2174-2183$, $2186-2194$, $2196-2205$, $2208-2216$, \\$2218-2252$,
					$2254$, $2258-2269$, $2273-2274$, $2276-2291$, $2293$, $2295-2306$, $2308-2310$,
					$2312$, $2314-2323$, $2325-2339$, $2343-2344$,\\ $2346-2347$, $2349$,
					$2351-2353$, $2355-2357$, $2359-2364$, $2366-2391$, $2393-2398$, $2400-2411$,
					$2413-2415$, $2417-2428$, $2430-2441$,\\ $2443-2444$, $2446-2460$,
					$2462-2464$, $2467-2469$, $2472-2479$, $2481-2495$, $2497-2523$, $2526-2532$, 
					$2534-2542$, $2544-2671$, $2673-2677$,\\ $2679-2686$, $2688-2701$,
					$2704-2713$, $2715-2730$, $2732$, $2736-2743$, $2745$, $2747$, $2749-2785$,
					$2789-2830$, $2833-2847$, $2850$, $2852$,\\$2855$, $2857-3000$,
					$3003-3092$, $3094-3097$, $3099$, $3102$, $3104-3105$, $3107-3159$,$3165-3166$,
					$3168$, $3170-3171$, $3173$, $3175$, $3177-3193$,\\ $3195$, $3197-3271$,
					$3273-3285$,$3290-3342$, $3344$, $3346-3350$, $3352$, $3354-3382$, $3384-3394$,
					$3398-3400$, $3403-3420$,$3422$, $3425-3426$,\\ $3428-3576$, $3578$,
					$3580-3623$, $3625-3626$, $3628-3737$, $3739-3747$,$3749-3758$, $3761-3764$, 
					$3766-3784$, $3786-3787$, $3789-3807$,\\ $3809-3827$, $3829$,
					$3831-3849$, $3851-3869$, $3871-3888$, $3890-3947$, $3949-3958$, $3961-3964$,
					$3967-3975$,
					$3977-3987$, $3989-3994$,\\ $3996-4007$, $4009-4141$, $4144-4255$,
					$4257$, $4259-4267$, $4269-4280$, $4284-4285$, $4288-4291$, $4293-4312$,
					$4314-4325$, 
					$4327$, \\$4329-4330$,$4332-4334$, $4336-4342$, $4344-4348$, $4350-4355$,
					$4361-4362$, $4364-4383$, $4386-4391$, $4393-4397$, $4399-4403$,\\ $4405-4409$, 
					$4411-4430$, $4432$, $4434-4487$, $4489-4492$, $4494-4498$, $4500-4510$,
					$4514-4532$, $4534-4605$, $4607-4609$, $4611-4612$,\\ $4615$, $4617-4621$,
					$4623-4648$, $4650$, 
					$4652-4678$, $4680-4705$, $4707-4711$, $4714-4723$, $4725-4765$,
					$4767-4781$, $4783-4790$, $4792$, \\ $4794-4798$,$4800-4803$, $4805-4856$,
					$4858-4861$,
					$4863-4867$, $4869-4876$, $4878-4880$, $4882-4889$, $4891$, $4893$,
					$4895-4905$, $4908$,\\ $4910-4912$, $4914-4918$, $4920-4926$, $4928-4936$,
					$4938$, 
					$4940-4946$, $4948-4952$, $4954-4955$, $4957-4968$, $4970-4986$, $4988$,
					$4990-4995$, \\ $4997$, $4999-5007$, $5011$, $5013-5015$, $5017-5027$,
					$5029-5045$,
					$5047$, $5049-5050$, $5052-5056$, $5058$, $5060$,$5062$, $5064-5066$,
					$5070-5075$,\\ $5078$, $5080-5081$,$5083-5084$, $5086-5090$, $5092-5093$,
					$5095$, 
					$5098-5101$, $5103-5108$, $5110-5111$, $5113$, $5115-5527$, $5529-5530$,
					\\ $5532-5805$, $5816-6008$,  $6012$, $6014-6041$, $6043$, $6046$, $6048$, 
					$6051-6053$, $6056-6553$, $6555-6692$, $6694-6699$, $6701$, $6703-6708$,
					\\ $6711-6718$, $6720-6723$, $6725-6732$,  $6734-6755$, $6757-6771$, $6773$,
					$6776-6785$, $6787-7016$, $7018-7022$, $7025-7140$, $7142-7198$, \\
					$7200-7273$, $7275-7276$, $7278-7279$, $7281$,  $7283-7300$, $7302$,
					$7304-7305$, 
					$7307-7316$, $7318-7361$, $7363-7427$, $7430-7554$, \\$7557-7558$,
					$7560$, $7563-7581$, $7583-7585$, $7587-7897$,  $7899-7988$, $7990-8043$,
					$8045-8046$,
					$8048-8711$, $8713-8847$, $8849-8853$,\\ $8856-8885$, $8887-8933$, 
					$8935-8981$, $8983-8985$, $8987-9039$, $9041-9042$,$9044-9047$, $9050$,
					$9054-9081$,
					$9083-9161$,\\ $9166-9170$, $9172-9239$, $9242-9243$,$9245-9277$,
					$9279-9925$, $9927-10145$, $10147-10778$}}\\
			\hline\hline
		\end{tabular}
		\caption{Complete list of $4$-regular graphs on $13$ vertices sharing the considered induced
			subgraphs.}\label{tabelka-summary}
	\end{table}
%

\end{document}